\newcounter{myCounter}
\newcommand{\oR}{\overline{R}}
\newcommand{\oq}{\overline{q}}
\theoremstyle{definition}
\newtheorem*{proofofProp}{Proof of Proposition \themyCounter}
\newtheorem*{proofofLemma}{Proof of Lemma \themyCounter}
\newtheorem*{proofofTh}{Proof of Theorem \themyCounter}
\begin{document}
\title{Logarithmic equal-letter runs for $BWT$\\ of purely morphic words}
%
%
\author{A. Frosini\inst{1} \and
I. Mancini\inst{2} \and
S. Rinaldi\inst{2}
\and
G. Romana\inst{3}
\and
M. Sciortino\inst{3}
}
\authorrunning{A. Frosini et al.}
%
\institute{Università di Firenze, Italy \\
\email{andrea.frosini@unifi.it}\and
Università di Siena, Italy \\
\email{ilaria.mancini@student.siena.it}, \email{simone.rinaldi@unisi.it}
\and
Università di Palermo, Italy\\
\email{\{giuseppe.romana01,marinella.sciortino\}@unipa.it}}
\maketitle              
\def\rle(#1){\texttt{rle}(#1)}
\newcommand{\bwt}{ {\tt bwt}}

\begin{abstract}
In this paper we study the number $r_{\bwt}$ of equal-letter runs produced by the \emph{Burrows-Wheeler transform} ($BWT$) when it is applied to \emph{purely morphic finite words}, which are words generated by iterating prolongable morphisms. Such a parameter $r_{\bwt}$ is very significant since it provides a measure of the performances of the $BWT$, in terms of both compressibility and indexing. In particular, we prove that, when $BWT$ is applied to any purely morphic finite word on a binary alphabet, $r_{\bwt}$ is $\mathcal{O}(\log n)$, where $n$ is the length of the word. Moreover, we prove that $r_{\bwt}$ is $\Theta(\log n)$ for the binary words generated by a large class of prolongable binary morphisms. These bounds are proved by providing some new structural properties of the \emph{bispecial circular factors} of such words.
\keywords{Burrows-Wheeler Transform  \and Equal-letter runs \and Morphisms \and Bispecial circular factors}
\end{abstract}
\section{Introduction}

The Burrows-Wheeler Transform ($BWT$) is a reversible transformation that produces a permutation of the text given in input, according to the lexicographical order of its cyclic rotations. It was introduced in 1994 in the field of Data Compression \cite{BW94} and it still represents the main component of some of the most known lossless text compression tools \cite{SewardBzip2,cj/Fenwick96}
as well as of compressed indexes \cite{Ferragina:2005}. 
$BWT$ is used as pre-processing of memoryless compressors, causing the \emph{boosting} of their performance. 
The key motivation for this fact is that $BWT$ is likely to create equal-letter runs (clusters) that are longer than the clusters of the original text. 
In other words, if we denote with $r(w)$ the number of equal-letter runs in the word $w$, the number $r_{\bwt}(w)$ of equal-letter runs produced by $BWT$ applied to $w$ often becomes lower than $r(w)$. 
It is important to note that the performance in terms of both space and time of text compressors and compressed indexing data structures applied on a text $w$ can be evaluated by using $r_{\bwt}(w)$ \cite{GagieNP20}. 
An upper bound on the number of clusters produced by $BWT$ has been provided in \cite{KempaK20}, in particular, it has been proved that $r_{\bwt(w)}=\mathcal{O}(z(w)\log^2 n)$ where $z(w)$ is the number of phrases in the $LZ77$ factorization of $w$ and $n$ is the length of $w$. 
The ratio between $r_{\bwt}(w)$ and the number of clusters in the $BWT$ of the reverse of $w$ has been studied in \cite{GiulianiILPST21}. 
A recent comparative survey illustrating the properties of $r_{\bwt}(w)$ and other repetitiveness measures can be found in \cite{Navarro21a}. In particular, in this survey the measure $\gamma$, which is the size of the smallest string attractor for the sequence \cite{KempaP18}, and the measure $\delta$, which is defined from the factor complexity function \cite{ChristiansenEKN21}, are also considered. 
From a combinatorial point of view, the parameter $r_{\bwt}$ has been studied in order to obtain more information about the combinatorial complexity of a word from the number of clusters produced by applying the $BWT$. 
In particular, great attention has been given to the characterization of the words for which  the $BWT$ produces the minimal number of clusters \cite{MaReSc,puglisiSimpson2008,RestivoRosoneTCS2009,Ferenczi_Zamboni2013}. A first combinatorial investigation of the $BWT$ clustering effect has been given in \cite{MantaciRRS_words2017,MANTACI_TCS2017} in which the \emph{$BWT$-clustering ratio} $\rho(w)=\frac{r_{\bwt}(w)}{r(w)}$ has been studied. 
In particular, it has been proved in \cite{MANTACI_TCS2017} that $\rho(w)\leq 2$ and infinite families of words for which $\rho$ assumes its maximum value have been shown. In \cite{MantaciRRS_words2017} the behavior of $\rho$ is studied for two very well known families of words, namely Sturmian words and de Brujin binary words.

This paper is focused on investigating the behaviour of $BWT$ when applied to finite words obtained by iterating a morphism. Morphisms are well-known objects in the field of Combinatorics on Words and they represent a powerful and natural tool to define repetitive sequences. Studying the compressibility of repetitive sequences is an issue that is raising great interest. The morphisms, combined with macro-schemes, have been used to define other mechanisms to generate repetitive sequences, called $NU$-systems \cite{NavarroU21}.  We consider the morphisms $\varphi$ that admit a fixed point (denoted by $\varphi^{\infty}(a)$) starting from a given character $a\in A$, i.e. $\varphi^{\infty}(a)=\lim_{i\to\infty}\varphi^i(a)$. Such morphisms are called \emph{prolongable} on $a$. 
In \cite{shallit_shaeffer2020} the measure $\gamma$ is computed for the prefixes of infinite words that are fixed points of some morphisms. Moreover, a complete characterization of the Lempel-Ziv complexity $z$ for the prefixes of fixed points of prolongable morphisms has been given in \cite{ConstantinescuI07}.
In this paper we analyse the number $r_{\bwt}$ of equal-letter runs and the $BWT$-clustering ratio $\rho$ when $BWT$ is applied to the \emph{purely morphic finite words}, i.e. the words $\varphi^i(a)$ generated by iterating a morphism $\varphi$ prolongable on $a$. 
In \cite{BrlekFMPR19}, the parameter $\rho$ has been computed for the families of finite words generated by some morphisms.
In all cases considered in \cite{BrlekFMPR19}, the $BWT$ efficiently clusters since the value of $\rho$ is much lower than $1$. In the one-page abstract appeared in \cite{NoiDCC} we extended such results by providing some new upper bounds on $r_{\bwt}(\varphi^i(a))$ depending on the factor complexity of the fixed point $\varphi^\infty(a)$. 

In this paper, we define the notion of \emph{$BWT$-highly compressible} morphism by evaluating, for a given morphism $\varphi$, whether $BWT$-clustering ratio $\rho$ on the words $\varphi^i(a)$ tends towards zero, when $i$ goes to infinity, and we identify some classes of $BWT$-highly compressible morphisms, by proving that $r_{\bwt}$ is $\mathcal{O}(\log n)$ for words of length $n$ generated by any primitive morphism.
Moreover, 
we give some combinatorial properties of several classes of binary morphisms and we improve the results announced in \cite{NoiDCC} in the case of binary purely morphic words. In particular, we prove that for any word $w$ on the binary alphabet $\{a,b\}$ generated by iterating a prolongable morphism, 
the parameter $r_{\bwt}(w)$ is $\mathcal{O}(\log n)$, where $n$ is the length of $w$. 
A consequence of these results is that such morphisms, except a few cases, are $BWT$-highly compressible. 
Finally, we prove that $r_{\bwt}(w)$ is $\Theta(\log n)$ for the binary finite words $w$ generated by a large class of prolongable morphisms. 
Such bounds are obtained by using a close relation between $r_{\bwt}(w)$ and the combinatorial notion of a bispecial circular factor of $w$ and by providing some new structural properties of the bispecial circular factors of infinite families of finite binary words generated by prolongable morphisms. 


\section{Preliminaries}
Let $A =\{a_1, a_2, \ldots, a_k\}$ be a finite ordered alphabet with $a_1< a_2< \ldots < a_k$, where $<$ denotes the standard lexicographic order.
We assume that $|A|\geq 2$. The set of words over the alphabet $A$ is denoted by $A^*$.
A finite word $w=w_1w_2\cdots w_n\in A^*$ is a finite sequence of letters from $A$.
The length of $w$, denoted $|w|$, is the number $n$ of its letters, $|w|_a$ denotes the number of occurrences of the letter $a$ in $w$.
An infinite word $x=x_1 x_2 x_3\ldots$ is a non-ending sequence of elements of the alphabet $A$.  

Given an infinite or finite word $x$, we say that a word $u$ is a \emph{factor} of $x$ if $x=vuy$ for some words $v$ and $y$.
The word $u$ is a \emph{prefix} (resp. \emph{suffix}) of $x$ if $x=uy$ (resp. $x=yu$) for some word $y$.
A factor $u$ of $x$ is \emph{left special} (\emph{right special}) if there exist $a,b \in A$ with $a \neq b$ such that both $au$ and $bu$ ($ua$ and $ub$) are factors of $x$.
A factor $u$ is \textit{bispecial} if it is both left and right special.
We denote by $f_x(k)$ the number of distinct factors of $x$ having length $k$. The function $f_x$ is called \emph{factor complexity} of $x$.

We say that a finite word $w$ has a \emph{period} $p>0$ if $w_i=w_{i+p}$ for each $i\leq |w|-p$. 
It is easy to see that each integer $p\geq |w|$ is a period of $w$.
The smallest of such integers is called \emph{minimum period} of $w$. The notion of period can be also given for infinite words.
We say that an infinite word is \emph{ultimately periodic} with period $p>0$ if exists $K \geq 1$ such that $w_i=w_{i+p}$ for each $i \geq K$.
Moreover, if this condition holds for any $i \in \mathbb{N}$, $w$ is said \emph{periodic} (with period $p$).
An infinite word $x$ is \emph{aperiodic} if it is not ultimately periodic. 

Given two finite words $w,z\in A^*$, we say that $w$ is a cyclic rotation of $z$, or equivalently $w$ and $z$ are {\em conjugate}, if $w=uv$ and $z=vu$, where $u,v\in A^*$.
Conjugacy between words is an equivalence relation over $A^*$.
We say that a finite word $u$ is a \emph{circular factor} of $w$ if $u$ is a factor of a conjugate of $w$.
For instance, $aa$ is a circular factor of $abbba$, but it is not a factor. We denote by $\mathcal{C}(w)$ the set of circular factors of a word $w$.
If we denote by $c_w(k)$ the number of distinct circular factors of $w$ having length $k$, it is easy to see that $f_w(k)\leq c_w(k)$, for each $k\geq 1$.
Note that the notions of left special, right special and bispecial factors can be given circularly, considering circular factors instead of factors of a given word. In particular, we say that $u$ is a \emph{bispecial circular factor} of a word $w$ if there exist $a,b\in A$, with $a\neq b$, and $a',b'\in A$, with $a'\neq b'$, such that both $aua'$ and $bub'$ are circular factors of $w$. We denote by $BS(w)$ the set of bispecial circular factors of $w$. 

The {\em Burrows-Wheeler Transform} ($BWT$) is a reversible transformation introduced in the context of Data Compression \cite{BW94}.
Given a word $w\in A^*$, the $BWT$ produces a permutation of $w$ which is obtained by concatenating the last letter of the lexicographically sorted cyclic rotations of $w$, and we denote it with $\bwt(w)$. Note that $\bwt(w)=\bwt(v)$ if and only if $w$ and $v$ are conjugate.

The {\em run-length encoding} of a word $w$, denoted by $\rle(w)$, is a sequence of pairs
$(w_i, l_i$) with $w_i \in A$ and $l_i > 0$, such that $w = w_1^{l_1}w_2^{l_2}\ldots w_r^{l_r}$ and $w_i \neq w_{i+1}$.
We denote by $r(w)=|\rle(w)|$, i.e. the number $r$ of equal-letter runs in $w$. We denote by $r_{\bwt}(w)=r(\bwt(w))$ the number of equal-letter runs in $\bwt(w)$.

Morphisms are fundamental tools of formal languages and a very crucial notion in combinatorics on words. They represent a very interesting way to generate an infinite family of words. Let $A$ and $\Sigma$ be alphabets. A \emph{morphism} is a map $\varphi$ from $A^*$ to $\Sigma^*$ that obeys the identity $\varphi(uv) = \varphi(u)\varphi(v)$ for all words $u, v \in A^*$. 
By definition, a morphism can be described by just specifying the images of the letters of $A$.
Examples of very well known morphisms are the \emph{Thue-Morse morphism} $\tau$, defined as $\tau(a)=ab$ and $\tau(b)=ba$, and the \emph{Fibonacci morphism} $\theta$, defined as $\theta(a)=ab$ and $\theta(b)=a$.
A morphism $\varphi$ is \emph{primitive} if there exists a positive integer $k$ such that, for every pair of characters $a,b \in A$, the character $a$ occurs in $\varphi^k(b)$. 
Both $\tau$ and $\theta$ are primitive morphisms. 
A morphism is called \emph{non-erasing} if $|\varphi(a)|\geq 1$, for each $a\in A$. We assume to consider non-erasing morphisms. 

Morphisms can be classified by the length of images of letters.
If there is a constant $k$ such that $|\varphi(a)| = k$ for all $a \in A$ then we say that $\varphi$ is $k$-\textit{uniform} (or just uniform, if $k$ is clear from the context).
For instance, the Thue-Morse morphism $\tau$ is $2$-uniform.  The \emph{growth function} of a morphism $\varphi$ with respect to a letter $a \in A$ and an iteration $i$ is defined by $\varphi_a(i) = |\varphi^i(a)|$.
A letter $a$ is said to be \emph{growing} for $\varphi$ if $\lim_{i\to\infty}\varphi_a(i)=+\infty$, otherwise it is $\emph{bounded}$. A morphism $\varphi$ is growing if each letter of the alphabet is growing for $\varphi$. For growing morphisms it holds that, for any $a \in A$, $\varphi_a(i) = \Theta(i^{e_a}p_a^i)$, for some $e_a\geq 0$ and $p_a>1$. Another classification of morphisms is according to its growth function on distinct letters \cite{Pansiot_ICALP84}.
A growing morphism $\varphi$ is called \emph{quasi-uniform} if $\varphi_a(i) = \Theta(p^i)$ for any $a \in A$ and some $p > 0$; $\varphi$ is called \emph{polynomially divergent} if for any $a \in A$ it holds that $\varphi_a(i) = \Theta(i^{e_a}p^i)$ for some $p > 1$ and exist $a,b \in A$ such that $e_a \neq e_b \geq 0$;
$\varphi$ is called \emph{exponentially divergent} if exist $a,b \in A$ such that $\varphi_a(i) = \Theta(i^{e_a}p_a^i)$ and $\varphi_b(i) = \Theta(i^{e_b}p_b^i)$, for some $e_a,e_b \geq 0$ and $p_a \neq p_b > 1$.

A morphism is called \emph{prolongable} on a letter $a\in A$ if $\varphi(a)=au$ with $u\in A^+$.
Then, for $i\geq 1$, $\varphi^i(a)=au\varphi(u)\cdots \varphi^{i-1}(u)$. In this case, the infinite family of finite words $\{a, \varphi(a), \ldots, \varphi^i(a), \ldots\}$
are prefixes of a unique infinite word denoted by $\varphi^{\infty}(a)$, that is called the \emph{word generated by the morphism $\varphi$}. It is also called \emph{purely morphic word}. 
Examples of infinite words generated by a morphism are the \emph{Thue-Morse word} $t=abbabaabbaababba\ldots$ generated by the Thue-Morse morphism $\tau$ and the \emph{Fibonacci word} $f=abaababaabaab\ldots$ generated by the Fibonacci morphism $\theta$.
More in general, an infinite word is called \emph{morphic} if is generated by applying a \emph{coding} (a $1$-uniform morphism from $A$ to a possibly different alphabet $\Sigma$) to a purely morphic word.


\section{BWT-Highly Compressible Morphisms}

In this section, we focus on morphisms that generate finite words $w$ on which the Burrows-Wheeler transform has a very effective action by significantly reducing the number of equal-letter runs. In particular, we show that some upper bounds depending on the factor complexity of the fixed point of the morphism can be derived.

\begin{definition}
A morphism $\varphi$ prolongable on $a \in A$ is $BWT$-highly compressible if  $\limsup_{i\to\infty}\rho(\varphi^i(a))=0$, where $\rho(\varphi^i(a))=\frac{r_{\bwt}(\varphi^i(a))}{r(\varphi^i(a))}$ is the $BWT$-clustering ratio of $\varphi^i(a)$.
\end{definition}




The factor complexity of purely morphic words has been studied  \cite{Pansiot_ICALP84}.

\begin{theorem}[\cite{Pansiot_ICALP84}]
\label{th:pansiot}
Let $x = \varphi^\infty(a)$ be an infinite aperiodic word and let $f_x$ be its factor complexity.
\begin{enumerate}
    \item If $\varphi$ is growing, then $f_x(n)$ is $\Theta(n)$, $\Theta(n \log \log n)$ or $\Theta(n \log n)$ if $\varphi$ is quasi-uniform, polynomially divergent or exponentially divergent,  respectively
    \item Let $\varphi$ be not-growing and let $B$ be the set of its bounded letters 
    \begin{enumerate}
        \item if $x$ has arbitrarily large factors of $B^*$ then $f_x(n) = \Theta(n^2)$ 
        \item if the factors of $B^*$ in $x$ have bounded length then $f_x(n)$ can be any of $\Theta(n)$, $\Theta(n \log \log n)$ or $\Theta(n \log n)$.
    \end{enumerate}
\end{enumerate}
\end{theorem}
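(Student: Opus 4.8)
The plan is to reduce the whole statement to estimating the first difference of the complexity, $s_x(n) = f_x(n+1) - f_x(n)$, which counts the right-special factors of $x$ of length $n$ with multiplicity (each weighted by its number of right extensions minus one, using that every length-$n$ factor of an infinite recurrent word has at least one right extension). Since $f_x(n) = f_x(1) + \sum_{k=1}^{n-1} s_x(k)$, and since each target function $g(n) \in \{1, \log\log n, \log n, n\}$ satisfies $\sum_{k \le n} g(k) = \Theta(n\,g(n))$, it suffices to prove $s_x(n) = \Theta(1)$, $\Theta(\log\log n)$, $\Theta(\log n)$, respectively $\Theta(n)$, in the four regimes. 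Thus the entire problem becomes a counting question: how many right-special factors of length $n$ does $x$ carry, and at what rate does this quantity grow with $n$?

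First I would dispose of the bounded letters. The set $B$ of bounded letters generates, under $\varphi$, only finitely many distinct words, so the genuine difficulty lies in the maximal factors of $x$ contained in $B^*$. If these are unbounded (case 2a), then a window of length $n$ straddling a long $B^*$-block is determined by the block length and the offset of the window, and each of these ranges over $\Theta(n)$ values, so $x$ has $\Theta(n^2)$ distinct factors of length $n$, i.e. $s_x(n) = \Theta(n)$ and $f_x(n) = \Theta(n^2)$. If instead the $B^*$-factors have bounded length (case 2b), I would contract each maximal bounded block together with its following growing letter into a single meta-letter over a finite meta-alphabet; this turns $\varphi$ into a growing morphism on the meta-alphabet whose growth type is inherited, which both reduces case 2b to the growing case and explains why all three rates can occur there.

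The heart of the argument is the growing case, and the central tool is \emph{recognizability}: for an aperiodic purely morphic word, every sufficiently long factor has an essentially unique factorization into consecutive images $\varphi(c)$ of letters, up to bounded boundary pieces. Granting this, each right-special factor $w$ with $|w| > C$ desubstitutes as $w = s\,\varphi(v)\,p$ to a strictly shorter right-special factor $v$, with $s,p$ bounded prefix/suffix corrections, while at each desubstitution level only $O(1)$ genuinely new right-special factors are created inside a single image block. Iterating, every right-special factor of length $n$ is traced back through a hierarchy of levels to a bounded family of seeds, and the decisive point is to count, for a given length $n$, how many levels host a right-special factor of precisely that length. Since a block $\varphi^i(c)$ has length $\varphi_c(i)$, such a factor is supported at level $i$ for type $c$ only when $\varphi_c(i)$ is comparable to $n$; hence $s_x(n)$ is, up to a bounded factor, the number of pairs $(c,i)$ with $\varphi_c(i) = \Theta(n)$.

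It remains to count these active scales using $\varphi_c(i) = \Theta(i^{e_c} p_c^i)$. If $\varphi$ is quasi-uniform, then $\varphi_c(i) = \Theta(p^i)$ for every $c$, all scales coincide, the number of active levels is $\Theta(1)$, and $s_x(n) = \Theta(1)$, giving $f_x(n) = \Theta(n)$. If $\varphi$ is exponentially divergent, the distinct bases force $i \approx (\log n)/\log p_c$ to range over an interval of length $\Theta(\log n)$ as $c$ varies, so there are $\Theta(\log n)$ active scales and $s_x(n) = \Theta(\log n)$. The polynomially divergent case is the subtle one: all bases equal $p$, so the solutions of $i^{e_c} p^i = \Theta(n)$ cluster near $i \approx (\log n)/\log p$, but the polynomial corrections $e_c \log i \approx e_c \log\log n$ spread them over an interval of length $\Theta(\log\log n)$, producing $s_x(n) = \Theta(\log\log n)$. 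I expect the polynomial case — proving matching upper and lower bounds sharp enough to separate $\log\log n$ from both $\Theta(1)$ and $\Theta(\log n)$ — to be the main obstacle, together with establishing recognizability for arbitrary growing morphisms (Mossé's theorem is classically stated for primitive substitutions, so the general growing case must be reached through the bounded-letter contraction above and a careful treatment of letters of equal growth), and with making the phrase ``supported at level $i$'' rigorous, i.e. pinning down the exact correspondence between right-special factors of a given length and the substitution's block boundaries.
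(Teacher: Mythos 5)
First, a point of comparison: the paper does not prove this statement at all — it is Theorem~\ref{th:pansiot}, imported from Pansiot's ICALP 1984 paper and cited without proof (even the appendix only re-uses Pansiot's Theorem~4.1 inside other proofs). So your attempt can only be measured against the classical argument, whose skeleton — desubstitution of special factors plus a count of ``active scales'' $(c,i)$ with $\varphi_c(i)=\Theta(n)$ — your outline does capture, and your handling of the non-growing cases (the $\Theta(n^2)$ window count for 2(a), the contraction of bounded blocks into meta-letters for 2(b)) is the right reduction. The first genuine gap is in the lower bounds. You reduce everything to the pointwise claim $s_x(n)=\Theta(g(n))$ and then assert that $s_x(n)$ equals, up to a bounded factor, the number of active scales. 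The upper-bound half of that assertion is plausible; the lower-bound half — that every active scale contributes a \emph{distinct} right-special factor of length exactly $n$ — is precisely the hard content of the theorem and is nowhere argued. From a right-special factor of length roughly $n$ supported at scale $(c,i)$ you can pass to its suffix of length exactly $n$, which is still right-special, but suffixes coming from different scales may coincide, collapsing your count to $O(1)$; ruling this out requires analyzing how letters of different growth rates and exponents interlace along $x$, which is where Pansiot's case analysis does its real work. Without it you obtain only $f_x(n)=\Omega(n)$ in all growing cases, i.e.\ you cannot separate $\Theta(n)$, $\Theta(n\log\log n)$ and $\Theta(n\log n)$ from below. (Note also that pointwise control of $s_x(n)$ is stronger than what the theorem needs — only the sum $\sum_{k\le n}s_x(k)$ matters — so you have made the lower bound harder than necessary; and in case 2(a) the claim that distinct pairs (block length, offset) give distinct factors itself needs an aperiodicity argument on the surrounding context.)

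The second gap is recognizability. Your desubstitution step (``every sufficiently long right-special factor desubstitutes essentially uniquely, creating only $O(1)$ new special factors per level'') invokes Mossé-type recognizability, which is a theorem about \emph{primitive} substitutions, whereas the statement to be proved concerns arbitrary prolongable morphisms — and the non-primitive case is exactly the one this paper needs Theorem~\ref{th:pansiot} for. Recognizability for general aperiodic non-primitive substitutions is a far deeper and much more recent result, and your proposed escape route (reaching the general case ``through the bounded-letter contraction'') does not close the gap: the contraction produces a growing morphism, not a primitive one, so Mossé's theorem still does not apply, and letters of equal growth can genuinely defeat unique desubstitution. Pansiot's own argument avoids recognizability altogether, by locating each factor of length $n$ inside some $\varphi^k(cd)$ with $cd$ a two-letter factor of $x$ and $k$ minimal, and counting over the minimal $k$; a self-contained proof needs either that device or a proof of the recognizability statement you rely on, not a citation of it.
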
 

The following theorem shows that the factor complexity of some particular classes of morphisms is known \cite{DBLP:journals/tcs/EhrenfeuchtLR75}. 
Both Fibonacci morphism $\theta$ and Thue-Morse morphism $\tau$ are included in these classes. 

\begin{theorem}\label{th:u-morph_subwordcomplexity}
Let $x = \varphi^\infty(a)$ be an aperiodic infinite word that is the fixed point of the morphism $\varphi$.
If $\varphi$ is uniform or primitive, then $f_x(n) = \Theta(n)$.
\end{theorem}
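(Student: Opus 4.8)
The plan is to derive the statement from Pansiot's classification, namely Theorem~\ref{th:pansiot}(1), by showing that each of the two hypotheses on $\varphi$ forces $\varphi$ to be \emph{growing and quasi-uniform}, which is exactly the regime producing complexity $\Theta(n)$. The lower bound is free in both cases: since $x$ is aperiodic, the Morse--Hedlund theorem gives $f_x(n)\geq n+1$ for every $n$, hence $f_x(n)=\Omega(n)$; this is in any event subsumed by the $\Theta(n)$ conclusion of Theorem~\ref{th:pansiot} once the hypotheses of its first case are verified.

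For the upper bound I reduce to Theorem~\ref{th:pansiot}(1), so I must check that $\varphi$ is growing and that $\varphi_a(i)=\Theta(p^i)$ with the \emph{same} base $p$ for every letter $a$. If $\varphi$ is $k$-uniform, then prolongability forces $|\varphi(a)|\geq 2$, so $k\geq 2$ and $\varphi_a(i)=k^i$ for all $a$; thus $\varphi$ is trivially growing and quasi-uniform with $p=k$. If $\varphi$ is primitive, then by definition every letter occurs in $\varphi^k(b)$ for all $b$ and some fixed $k$, so each $\varphi_a(i)$ grows at least exponentially and $\varphi$ is growing. Moreover the incidence matrix $M$ of $\varphi$ is a primitive nonnegative integer matrix, so by the Perron--Frobenius theorem it has a simple, strictly dominant eigenvalue $\lambda>1$; consequently $M^i\sim\lambda^i\,v\,w^{\top}$ and $\varphi_a(i)=\Theta(\lambda^i)$ for every $a$, with no polynomial correction term. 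Hence $\varphi$ is quasi-uniform with $p=\lambda$.

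In both cases $\varphi$ is growing and quasi-uniform, so Theorem~\ref{th:pansiot}(1) yields $f_x(n)=\Theta(n)$, as claimed. I expect the main obstacle to be the primitive case: one must rule out a polynomial factor in $\varphi_a(i)$, i.e. establish that all letters grow at the \emph{same} exponential rate. This is precisely where primitivity is essential, via the simplicity and strict dominance of the Perron--Frobenius eigenvalue of $M$; without it $M$ could have several eigenvalues of maximal modulus or a nontrivial Jordan block, landing in the polynomially- or exponentially-divergent regimes of Theorem~\ref{th:pansiot} and thus forcing super-linear complexity.

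As an aside, the uniform case can also be treated self-containedly, avoiding Theorem~\ref{th:pansiot}: any factor $w$ of length $n$ lies inside $\varphi(u)$ for a factor $u$ of $x$ with $|u|\leq\lceil n/k\rceil+1$, and $w$ is determined by $u$ together with the offset $r\in\{0,\dots,k-1\}$ at which it begins in $\varphi(u)$, giving $f_x(n)\leq k\,f_x(\lceil n/k\rceil+1)$. The delicate point here is that the additive $+1$ prevents a naive induction from closing; one instead bounds the first difference $f_x(n+1)-f_x(n)$ (equivalently, the number of right-special factors of each length), showing it stays bounded because desubstitution sends right-special factors to shorter right-special factors with bounded multiplicity. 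I would nonetheless favour the Pansiot route above, since it handles both hypotheses uniformly.
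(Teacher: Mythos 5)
Your derivation is correct, but note that the paper does not actually prove this statement: Theorem~\ref{th:u-morph_subwordcomplexity} is quoted as a known result from \cite{DBLP:journals/tcs/EhrenfeuchtLR75}, with no argument given. Your route---reducing both hypotheses to the quasi-uniform growing case of Pansiot's classification (Theorem~\ref{th:pansiot})---is therefore genuinely different, and it is sound: prolongability forces $k\geq 2$ in the uniform case, so $\varphi_a(i)=k^i$ exactly for every letter; in the primitive case the incidence matrix $M$ is a primitive nonnegative integer matrix, and Perron--Frobenius gives $M^i/\lambda^i\to v\,w^{\top}$ with $\lambda$ simple and strictly dominant, hence $\varphi_a(i)=\Theta(\lambda^i)$ with a common base and no polynomial factor, i.e.\ $\varphi$ is quasi-uniform in the sense used by Theorem~\ref{th:pansiot}. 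One small point of precision: Perron--Frobenius by itself only yields $\lambda>0$; the strict inequality $\lambda>1$ comes from your separate observation that primitivity together with $|A|\geq 2$ forces every letter to grow, so the two halves of your primitive argument are both needed and together complete. As for what each approach buys: the paper's citation rests on the original direct combinatorial proof of Ehrenfeucht, Lee and Rozenberg, which predates and does not require Pansiot's machinery; your proof is self-contained modulo results already quoted in the paper and treats both hypotheses uniformly, at the cost of invoking the (deep) Pansiot classification plus standard Perron--Frobenius theory. Your aside on the uniform case also correctly flags the real delicacy in the elementary route (the additive constant in $f_x(n)\leq k\,f_x(\lceil n/k\rceil+1)$ blocks a naive induction), though that inequality can alternatively be closed by iterating it multiplicatively, without counting right-special factors.
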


The following two examples provide a $BWT$-highly compressible and a not $BWT$-highly compressible morphism, respectively. 

\begin{example}[$\Theta(n \log \log n)$ factor complexity]
Let us consider the binary morphism $\varphi$ defined as $\varphi(a)=abab$ and $\varphi(b)=bb$. 
In this case $\varphi_a(i)=(i+1)2^i$ and $\varphi_b(i)=2^i$.
Moreover, $r(\varphi^i(a))=2^{i+1}$ and $r_{\bwt}(\varphi^i(a))=2i$, for $i>2$. Hence, $\varphi$ is $BWT$-highly compressible.
\end{example}

\begin{example}[$\Theta(n \log n)$ factor complexity]
Let us consider the morphism $\psi$ defined as $\psi(a)=abc$, $\psi(b)=bb$ and $\psi(c)=ccc$. 
One can verify that $x=\psi^\infty(a)=abcb^2c^3b^4c^9\ldots$ and $\psi_a(i+1)=\psi_a(i)+2^i+3^i$. Moreover, $\rho(\psi^i(a))=\frac{r_{\bwt}(\psi^{i}(a))}{r(\psi^{i}(a))}=\frac{4i}{2i+1}>1$ for $i > 2$.
Hence, $\psi$ is not $BWT$-highly compressible.
\end{example}

The following proposition, only enunciated in \cite{NoiDCC}, gives an upper bound on the value $r_\bwt(\varphi^i(a))$, for some classes of morphisms prolongable on $a$. Such bounds depend on the factor complexity of the infinite word generated by $\varphi$. In the appendix a complete proof of the result is given.

\begin{proposition}
\label{prop:r_upper_bound_factorComplexity}
Let $x = \varphi^\infty(a)$ be an infinite aperiodic word. Then the following upper bounds for $r_\bwt(\varphi^i(a))$, $i \geq 1$, hold:
\begin{enumerate}
    \item if $f_x(n)=\Theta(n)$ then $r_{\bwt}(\varphi^i(a)) = \mathcal{O}(i)$.
    \item if $f_x(n)=\Theta(n \log \log n)$ then $r_{\bwt}(\varphi^i(a)) = \mathcal{O}(i \log i \log \log i)$.
    \item if $f_x(n)=\Theta(n \log n)$ then $r_{\bwt}(\varphi^i(a)) = \mathcal{O}(i^2 \log i)$.
\end{enumerate}
\end{proposition}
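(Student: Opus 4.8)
The plan is to first reduce $r_\bwt$ to a count of bispecial circular factors, and then to bound that count through the factor complexity of $x$. For the reduction I would sort the $n=|w|$ cyclic rotations of $w=\varphi^i(a)$ as $R_1<\cdots<R_n$, so that $\bwt(w)_j$ is the letter that cyclically precedes $R_j$ and a run boundary occurs exactly at the indices $j$ with $\bwt(w)_j\neq\bwt(w)_{j+1}$. Given such a boundary, let $u$ be the longest common prefix of $R_j$ and $R_{j+1}$. The two rotations diverge immediately after $u$, so $u$ is right special; moreover $R_j$ and $R_{j+1}$ are two occurrences of $u$ with distinct preceding letters, so $u$ is also left special, i.e. $u\in BS(w)$. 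Since the rotations having $u$ as a prefix form one lexicographic interval with only $\deg_w^+(u)-1$ internal frontiers between consecutive right-extension blocks (writing $\deg_w^+(u)$ for the number of distinct circular right extensions of $u$), each bispecial circular factor accounts for at most $\deg_w^+(u)-1$ boundaries. This gives the bridge inequality
\begin{equation*}
r_\bwt(w)\ \le\ 1+\sum_{u\in BS(w)}\bigl(\deg_w^+(u)-1\bigr)\ \le\ 1+(|A|-1)\,|BS(w)|,
\end{equation*}
so that for a fixed alphabet it suffices to prove $|BS(\varphi^i(a))|=\mathcal{O}(i)$, $\mathcal{O}(i\log i\log\log i)$ and $\mathcal{O}(i^2\log i)$ in the three respective cases.

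Next I would replace circular factors of $w$ by factors of $x$. Every circular factor of $w$ either is a genuine factor of $w$ --- hence, since $\varphi^i(a)$ is a prefix of $x$, a factor of $x$ --- or it crosses the unique seam that joins the suffix of $w$ to its prefix. A non-seam bispecial circular factor whose relevant left and right extensions are also non-seam is a bispecial factor of $x$; thus the bulk of $BS(w)$ injects into the set of bispecial factors of $x$ of length at most $n=|w|$. A seam-crossing circular factor can be bispecial only if it occurs at least twice in the cyclic word, i.e. only if it also appears in the interior, so the seam contribution is governed by the repetitions of $\varphi^i(a)$ and, as I would show, stays within a lower-order term. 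Hence $|BS(\varphi^i(a))|$ is, up to this correction, bounded by the number $B_x(n)$ of bispecial factors of $x$ of length at most $n=|\varphi^i(a)|$.

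It then remains to bound $B_x(m)$ for the infinite word $x$. Here I would use the self-similar structure of purely morphic words: applying $\varphi$ to a bispecial factor, modulo a bounded synchronising border, produces a bispecial factor one scale higher, so the bispecial factors of $x$ are organised into geometrically spaced scales, and the number created at each scale is controlled by the first difference $f_x(\ell+1)-f_x(\ell)$ of the factor complexity. Carrying this desubstitution out yields a polylogarithmic estimate whose exponents are fixed by the complexity class, namely $B_x(m)=\mathcal{O}(\log m)$ when $f_x(n)=\Theta(n)$, $B_x(m)=\mathcal{O}(\log m\,\log\log m\,\log\log\log m)$ when $f_x(n)=\Theta(n\log\log n)$, and $B_x(m)=\mathcal{O}(\log^2 m\,\log\log m)$ when $f_x(n)=\Theta(n\log n)$. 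Since $\varphi$ is prolongable on $a$ the lengths $|\varphi^i(a)|$ grow at least geometrically, so $\log|\varphi^i(a)|=\Theta(i)$; substituting $m=|\varphi^i(a)|$ into the three estimates turns them into $\mathcal{O}(i)$, $\mathcal{O}(i\log i\log\log i)$ and $\mathcal{O}(i^2\log i)$, which together with the bridge inequality are exactly the claimed bounds.

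I expect the genuine difficulty to lie entirely in the counting of bispecial factors rather than in the bridge inequality. The core obstacle is the desubstitution step of the previous paragraph: one must prove that a single application of $\varphi$ transfers bispeciality up to a bounded border and neither destroys nor spuriously creates bispecial factors, and then convert the qualitative factor-complexity class of $x$ into the quantitative per-scale budget that produces the precise polylogarithmic exponents. A second, more technical point is the seam: because seam-crossing circular factors have no counterpart in $x$, I would need the structural properties of the prefixes and suffixes of $\varphi^i(a)$ to guarantee that the seam-induced bispecial factors remain of lower order than $B_x(n)$, so that they do not affect the leading term of any of the three bounds.
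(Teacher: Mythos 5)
Your bridge inequality is fine---it is exactly the upper bound of Lemma~\ref{le:r_bound_bispecial}---but the counting claim on which everything else rests is false, and not because the desubstitution argument is merely hard to make precise: the set you propose to count is genuinely too large. Consider the paper's own example with $\Theta(n\log\log n)$ complexity, $\varphi(a)=abab$, $\varphi(b)=bb$. Its fixed point $x$ contains maximal runs of $b$'s of length $2^j-1$ for every $j\geq 1$, each preceded and followed by $a$, so for every $h\geq 1$ all of $ab^h$, $b^ha$ and $b^{h+1}$ are factors of $x$; hence \emph{every} $b^h$ is bispecial and $B_x(m)\geq m-1$, linear in $m$ rather than $\mathcal{O}(\log m\,\log\log m\,\log\log\log m)$. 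The circular version is just as bad: $|BS(\varphi^i(a))|=\Omega(2^i)$ while $|\varphi^i(a)|=\Theta(i2^i)$, so $1+(|A|-1)|BS(\varphi^i(a))|$ is polynomial in the length and cannot yield $\mathcal{O}(i\log i\log\log i)$. Nor is case~1 safe: $\varphi(a)=aab$, $\varphi(b)=bbb$ has an aperiodic fixed point with $f_x(n)=\Theta(n)$ (Proposition~\ref{prop:struct_aperiodic_growing}, case $n_a<\ell$) and the same unbounded runs of $b$'s, so again $B_x(m)=\Omega(m)$. Any proof routed through bispecial factors must therefore show that almost all of them contribute no run boundary in the BWT; this is precisely the extra work the paper does later, in the binary setting only, in the proof of Theorem~\ref{prop:upper_bound}, where out of the exponentially many bispecials $b^h$ only a subset of size $\mathcal{O}(|R_i|)$ is charged.

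For contrast, the paper's proof of this proposition avoids bispecial factors altogether: it applies the bound of Theorem~\ref{th:r_as_delta} (Kempa--Kociumaka), $r_{\bwt}(w)=\mathcal{O}(\delta(w)\log\delta(w)\cdot\max\{1,\log\frac{|w|}{\delta(w)\log\delta(w)}\})$, observes that $f_{\varphi^i(a)}(k)\leq f_x(k)$ because $\varphi^i(a)$ is a prefix of $x$ (so $\delta(\varphi^i(a))$ is $\mathcal{O}(1)$, $\mathcal{O}(\log\log|\varphi^i(a)|)$ or $\mathcal{O}(\log|\varphi^i(a)|)$ in the three cases, with no circularity or seam issues, since $\delta$ is defined from ordinary factors), and then converts $\log|\varphi^i(a)|$ into $\mathcal{O}(i)$ via $|\varphi^i(a)|=\mathcal{O}(\rho_a^i)$. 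On this last point, note that your substitution asserts $\log|\varphi^i(a)|=\Theta(i)$ because lengths grow ``at least geometrically''; that lower bound can fail for prolongable morphisms (length growth can be polynomial), but it is also not what you need---only the always-true upper bound $|\varphi^i(a)|\leq(\max_{c\in A}|\varphi(c)|)^i$ matters when plugging a length into a bound that is increasing in $m$.
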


\begin{proof}
(Sketch) By \cite{rozenberg1980mathematical}, we know that the growth of any morphism is in $\mathcal{O}(\rho_a^i)$, for some $\rho_a>1$. 
Moreover, we can use an upper bound proved in \cite{KempaK20} to derive the thesis.
\end{proof}

Note that such upper bounds extend some known results. In fact, as shown in \cite{BrlekFMPR19}, $r_{\bwt}(\tau^i(a))=\Theta(i)$. 
We also remark that, since $n=|\tau^i(a)|=2^i$,  $r_{\bwt}(\tau^i(a))=\Theta(\log n)$. Furthermore, from results provided in \cite{BrlekFMPR19} and  \cite{MaReSc}, it can be deduced that $\tau$ and $\theta$ are $BWT$-highly compressible. However, the lower bounds can be quite different. In fact, as shown in \cite{MaReSc}, $r_{\bwt}(\theta^i(a))=\Theta(1)$. Actually, in the next section we show that, in case of binary alphabet, lower and upper bounds can be derived for some classes of morphisms.  

In the next example we show a class of morphisms $\varphi_k$, over an alphabet of size $k$, such that $r_{\bwt}(\varphi_k^i(a))=\Theta(n^{\frac{1}{k-1}})$, where $n=|\varphi_k^i(a))|$.

\begin{example}[$\Theta(n^2)$ factor complexity]
 Let us consider the morphism $$\varphi_k :\begin{array}{lll}
    a_1 & \mapsto & a_1 a_2\\
    a_2 & \mapsto & a_2 a_3 \\
    & \ldots & \\
    a_{k-1} & \mapsto & a_{k-1} a_k \\
    a_k & \mapsto & a_k\\
\end{array}$$   
One can verify that $n=|\varphi_k^i(a)| = \Theta(i^{k-1})$ \cite{ConstantinescuI07}.
Moreover, $r(\varphi_k^i(a)) = \Theta(i^{k-2})$ and $r_{\bwt}(\varphi_k^i(a)) = \Theta(i) = \Theta(n^{\frac{1}{k-1}})$.
Hence, $\varphi_k$ is $BWT$-highly compressible for any $k > 3$.
\label{ex:quadratic_k_morphism}
\end{example}

Here we introduce the notion of \emph{run-bounded} morphism in order to identify some classes of $BWT$-highly compressible morphisms.

\begin{definition}
Let $\varphi$ be a morphism such that $r(\varphi^i(a)) \leq K$, for any $i>0$, for some $a \in A$ and $K > 0$.
Then we say that $\varphi$ is \emph{run-bounded on $a$}.
\end{definition}

The following two propositions give bounds for $r(\varphi^i(a))$. Note that, since $\varphi$ is prolongable, $r(\varphi^i(a))$ is not decreasing. Corollary \ref{cor:primitive_morph} can be proved by using Propositions \ref{prop:exponential_growth_run} and \ref{prop:r_upper_bound_factorComplexity}.

\begin{proposition}
\label{prop:au_in_R}
    Let $\varphi$ be a morphism prolongable on $a \in A$ and let $\mathcal{R} = \{b\in A \mid \varphi$ is run-bounded on $b\}$.
    If $\varphi(a) = au$ with $u \in \mathcal{R}^+$ then $r(\varphi^i(a)) = \mathcal{O}(i)$.
\end{proposition}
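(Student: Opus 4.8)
The plan is to exploit the standard decomposition of a purely morphic word together with the subadditivity of the run-count $r$ under concatenation. Since $\varphi$ is prolongable on $a$ with $\varphi(a) = au$, one has for every $i \geq 1$ the factorization
\[
\varphi^i(a) = a\, u\, \varphi(u)\, \varphi^2(u)\cdots\varphi^{i-1}(u),
\]
already recorded in the Preliminaries. I would therefore reduce the problem to controlling the run-count of each block $\varphi^j(u)$ separately, and then sum the contributions over the $i$ blocks.

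First I would invoke the elementary inequality $r(vw) \leq r(v) + r(w)$, valid for all words $v,w$ (at a concatenation boundary runs can only merge, never split, so the count can only drop). Applied to the factorization above this gives
\[
r(\varphi^i(a)) \leq r(a) + \sum_{j=0}^{i-1} r(\varphi^j(u)),
\]
so it suffices to bound each summand by a constant independent of $j$. Here I would use the hypothesis $u \in \mathcal{R}^+$: writing $u = b_1 b_2 \cdots b_m$ with each $b_l \in \mathcal{R}$, the morphism property gives $\varphi^j(u) = \varphi^j(b_1)\cdots \varphi^j(b_m)$, and subadditivity again yields $r(\varphi^j(u)) \leq \sum_{l=1}^m r(\varphi^j(b_l))$. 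Since each $b_l$ lies in $\mathcal{R}$, run-boundedness provides constants $K_{b_l}$ with $r(\varphi^j(b_l)) \leq K_{b_l}$ for all $j$; setting $C = \sum_{l=1}^m K_{b_l}$, which depends only on $\varphi$ and $a$ because $m = |u|$ is fixed, we obtain $r(\varphi^j(u)) \leq C$ uniformly in $j$. Combining the two displays gives $r(\varphi^i(a)) \leq 1 + iC = \mathcal{O}(i)$, as claimed.

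The argument is essentially mechanical once the right quantities are identified; the only point requiring a little care is the uniformity of the constant $C$. The crux is that run-boundedness is a statement about individual letters that holds uniformly over all iterates $j$, and that $u$ consists of a fixed number $|u|$ of such letters, so the per-block bound $C$ does not grow with $j$. The linear factor $i$ then arises purely from the number of blocks in the decomposition, not from any growth within a single block.
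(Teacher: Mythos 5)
Your proposal is correct and follows essentially the same route as the paper: the same factorization $\varphi^i(a) = a\,u\,\varphi(u)\cdots\varphi^{i-1}(u)$, the same uniform constant bound $r(\varphi^j(u)) \leq K\cdot|u|$ obtained from run-boundedness of the letters of $u$, and the same summation over the $i$ blocks yielding $\mathcal{O}(i)$. Your write-up merely makes explicit the subadditivity $r(vw)\leq r(v)+r(w)$ that the paper's sketch uses implicitly.
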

\begin{proof}(Sketch)
    Recall that $\varphi^i(a) = a u \varphi(u) \varphi^2(u) \ldots \varphi^{i-1}(u)$.
    Since $\varphi$ is run-bounded on any symbol in $u$, 
    then $r(\varphi^i(u)) \leq K \cdot |u|$, for some $K>0$ and any $i>0$.
    Hence, we have that $r(\varphi^i(a)) \leq 1 + \sum_{j=0}^{i-1} r(\varphi^j(u))\leq (K\cdot|u|)\cdot i = \mathcal{O}(i)$.
\end{proof}

\begin{proposition}
\label{prop:exponential_growth_run}
    Let $\varphi$ be a morphism prolongable on $a \in A$, with $\varphi(a) = av$ with $|v|_b \geq 1$.
    If exists $t > 0$ such that $a$ occurs at least twice in $\varphi^t(a)$, then the growth of $r(\varphi^i(a))$ is exponential.
\end{proposition}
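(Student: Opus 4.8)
The plan is to pass from runs to \emph{boundaries} and then bound the boundaries from below. Writing $d(w)$ for the number of positions $j$ with $w_j \neq w_{j+1}$, we have $r(w) = d(w)+1$, so it suffices to show that $d(\varphi^i(a))$ grows exponentially in $i$. I would establish this through two independent facts: (i) the number $|\varphi^i(a)|_a$ of occurrences of $a$ grows exponentially in $i$, and (ii) after one further application of $\varphi$, each occurrence of $a$ forces at least one boundary, so that $d(\varphi^{i+1}(a)) \geq |\varphi^i(a)|_a$.

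For (i) I would use the second hypothesis. Setting $m = |\varphi^t(a)|_a \geq 2$ and writing $\varphi^{(k+1)t}(a) = \varphi^t(\varphi^{kt}(a))$, each of the $|\varphi^{kt}(a)|_a$ occurrences of $a$ is replaced by $\varphi^t(a)$, which already contains $m$ copies of $a$; hence $|\varphi^{(k+1)t}(a)|_a \geq m\,|\varphi^{kt}(a)|_a$ and so $|\varphi^{kt}(a)|_a \geq m^k \geq 2^k$. Since $\varphi$ is prolongable, $\varphi^i(a)$ is a prefix of $\varphi^{i+1}(a)$, so $|\varphi^i(a)|_a$ is non-decreasing in $i$; this lets me pass from the subsequence $i=kt$ to arbitrary $i$ and conclude $|\varphi^i(a)|_a \geq 2^{\lfloor i/t\rfloor} \geq 2^{i/t-1}$, which is exponential because $t$ is a constant. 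For (ii) I would use the first hypothesis, $\varphi(a)=av$ with $|v|_b \geq 1$ and $b \neq a$. In $\varphi^{i+1}(a) = \varphi(\varphi^i(a))$ every occurrence of $a$ is replaced by the contiguous block $av$, which contains two distinct letters and therefore carries at least one \emph{internal} boundary (a position $j$, $1 \le j < |av|$, with $(av)_j \neq (av)_{j+1}$). Distinct $a$-blocks occupy disjoint ranges of $\varphi^{i+1}(a)$, and each internal boundary remains a genuine boundary of the whole word, so these boundaries are pairwise distinct, giving $d(\varphi^{i+1}(a)) \geq |\varphi^i(a)|_a$ and hence $r(\varphi^{i+1}(a)) \geq |\varphi^i(a)|_a + 1$.

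Combining the two yields $r(\varphi^{i+1}(a)) \geq 2^{i/t-1}+1$, i.e. $r(\varphi^i(a)) = \Omega\bigl((2^{1/t})^i\bigr)$ with $2^{1/t}>1$; together with the trivial upper bound $r(w)\leq|w|$ and the exponential bound on $|\varphi^i(a)|$ from \cite{rozenberg1980mathematical}, this gives exponential growth. The step I expect to be the main obstacle is (ii): one must resist the temptation to bound $r$ directly by the number of $a$'s, since consecutive occurrences (a factor $aa$) can merge into a single run. The fix is precisely to count boundaries \emph{internal} to the images $av$ rather than runs of $a$, which requires checking that these internal boundaries lie at distinct positions and persist in the concatenated word; the only other care needed is the monotonicity argument that lifts the exponential $a$-count from the arithmetic subsequence $i=kt$ to all indices $i$.
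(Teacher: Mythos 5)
Your proof is correct and follows essentially the same route as the paper's: an induction showing $|\varphi^{kt}(a)|_a \geq 2^k$, followed by one further application of $\varphi$ to convert each occurrence of $a$ into at least one change of letter (the paper phrases this as $r(\varphi^{jt+1}(a)) \geq 2|\varphi^{jt}(a)|_a$ using $r(\varphi(a))\geq 2$), and monotonicity from prolongability to pass from the subsequence $i=kt$ to all $i$. Your internal-boundary counting is in fact a slightly more careful rendering of the paper's run-count step, since it sidesteps the run-merging issue the paper only addresses in passing.
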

\begin{proof}
(Sketch) Let $t > 0$ be the smallest integer such that $a$ occurs at least twice in $\varphi^{t}(a)$.
    It is possible to prove by induction that, for any $j \geq 0$, $|\varphi^{j \cdot t}(a)|_a$ grows at least as $\Omega(2^j)$. Note that since $r(\varphi(a)) \geq 2$, then we can check that $r(\varphi^{j \cdot t + 1}(a)) \geq 2 \cdot |\varphi^{j \cdot t}(a)|_a = \Omega(2^{j +1})$.
    Since every $t$ steps the growth of the function is exponential, the overall growth is exponential too.
\end{proof}

\begin{corollary}\label{cor:primitive_morph}
Let $\varphi$ be a primitive morphism and prolongable on $a$ and let $n=|\varphi^i(a)|$, $i\geq 1$. It holds that $r_{\bwt}(\varphi^i(a))=\mathcal{O}(\log n)$ and $\lim_{i\to\infty}\rho(\varphi^i(a))=0$, therefore $\varphi$ is $BWT$-highly compressible.
\end{corollary}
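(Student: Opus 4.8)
The plan is to sandwich $\rho(\varphi^i(a))$ between a logarithmic upper bound on $r_{\bwt}$ (coming from the linear factor complexity of primitive morphisms) and an exponential lower bound on $r$ (coming from Proposition \ref{prop:exponential_growth_run}), after translating the iteration index $i$ into $\log n$. The first ingredient I would establish is exactly this translation: a primitive morphism is growing, and its incidence matrix is primitive, so by Perron--Frobenius it has a dominant eigenvalue $\lambda>1$ (it cannot be $1$, since prolongability forces $|\varphi(a)|\geq 2$, so $\varphi$ is not a mere letter permutation). Hence every letter grows like $\Theta(\lambda^i)$ and $n=|\varphi^i(a)|=\Theta(\lambda^i)$, giving the dictionary $i=\Theta(\log n)$ that turns any $\mathcal{O}(i)$ bound into $\mathcal{O}(\log n)$.

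For the upper bound on $r_{\bwt}$ I would distinguish whether $x=\varphi^\infty(a)$ is aperiodic. In the aperiodic case, Theorem \ref{th:u-morph_subwordcomplexity} gives $f_x(n)=\Theta(n)$, and then Proposition \ref{prop:r_upper_bound_factorComplexity}(1) yields $r_{\bwt}(\varphi^i(a))=\mathcal{O}(i)=\mathcal{O}(\log n)$. The degenerate periodic case (for instance $\varphi(a)=ab$, $\varphi(b)=ab$, whose fixed point is $(ab)^\infty$) must be handled separately, since Proposition \ref{prop:r_upper_bound_factorComplexity} assumes aperiodicity: here the factor complexity is bounded by the period $p$, so each $\varphi^i(a)$ is a prefix of a periodic word of period $p$, the associated complexity-based measure stays $\mathcal{O}(1)$, and the same (Kempa--Kociumaka) estimate underlying Proposition \ref{prop:r_upper_bound_factorComplexity} still gives $r_{\bwt}(\varphi^i(a))=\mathcal{O}(\log n)$. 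In both cases $r_{\bwt}(\varphi^i(a))=\mathcal{O}(\log n)$.

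For the clustering ratio I would invoke Proposition \ref{prop:exponential_growth_run} after checking that every primitive prolongable $\varphi$ meets its hypotheses. Writing $\varphi(a)=av$, the word $v$ must contain some letter $\neq a$ (otherwise every $\varphi^m(a)$ would be a power of $a$ and no other letter could ever appear, contradicting primitivity on an alphabet with $|A|\geq 2$), so $|v|_b\geq 1$ for some $b\neq a$. Moreover, primitivity makes the frequency of $a$ in $\varphi^t(a)$ positive, whence $|\varphi^t(a)|_a\to\infty$ and $a$ occurs at least twice in $\varphi^t(a)$ for $t$ large. Proposition \ref{prop:exponential_growth_run} then gives exponential growth $r(\varphi^i(a))=\Omega(c^i)$ for some $c>1$. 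Combining the two bounds, $\rho(\varphi^i(a))=r_{\bwt}(\varphi^i(a))/r(\varphi^i(a))=\mathcal{O}(i)/\Omega(c^i)\to 0$, so $\limsup_{i\to\infty}\rho(\varphi^i(a))=0$ and $\varphi$ is $BWT$-highly compressible.

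The main obstacle I anticipate is not any single hard estimate but the bookkeeping required to certify that a primitive prolongable morphism genuinely satisfies all the hypotheses borrowed from Propositions \ref{prop:r_upper_bound_factorComplexity} and \ref{prop:exponential_growth_run}: namely the growing/quasi-uniform property needed to pass from $i$ to $\log n$, the ``second occurrence of $a$'' condition, and especially the separate, easy-but-genuine treatment of periodic fixed points, which the linear factor-complexity argument of Proposition \ref{prop:r_upper_bound_factorComplexity} does not formally cover.
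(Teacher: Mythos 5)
Your proposal is correct and takes essentially the same route the paper indicates for Corollary \ref{cor:primitive_morph}: Theorem \ref{th:u-morph_subwordcomplexity} combined with Proposition \ref{prop:r_upper_bound_factorComplexity} for the $\mathcal{O}(i)$ bound on $r_{\bwt}(\varphi^i(a))$, Proposition \ref{prop:exponential_growth_run} for the exponential growth of $r(\varphi^i(a))$, and the exponential growth of $|\varphi^i(a)|$ to convert $i$ into $\Theta(\log n)$. Your two additional verifications---the separate treatment of a periodic fixed point (e.g. $\varphi(a)=\varphi(b)=ab$), which the aperiodicity hypotheses of Theorem \ref{th:u-morph_subwordcomplexity} and Proposition \ref{prop:r_upper_bound_factorComplexity} do not formally cover, and the explicit check that a primitive prolongable morphism satisfies the hypotheses of Proposition \ref{prop:exponential_growth_run}---fill small gaps that the paper leaves implicit.
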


The following example shows that, unlike primitive morphisms, uniform morphisms on generic alphabets can be not $BWT$-highly compressible. The situation becomes different if binary alphabets are considered, as shown in the next section.

\begin{example}\label{ex:3uniform_3letter}
    Let us consider the 3-uniform morphism $\eta$ defined as $\eta(a) = abc$, $\eta(b) = bbb$ and $\eta(c) = ccc$.
    It is easy to verify that in this case $r(\eta^i(a)) = r(abcbbbccc \ldots b^{3^{i-1}}c^{3^{i-1}}) = 2i + 1$ and $r_{\bwt}(\eta^i(a))) = 4i$, for any $i \geq 2$.
    Hence, $\rho(\eta^i(a)) = 2 - \frac{2}{2i + 1}$ and it is not $BWT$-highly compressible.
\end{example}

\section{Upper and Lower Bounds for $r_{\bwt}$}

In this section we show that lower and upper bounds for $r_{\bwt}$ of a word $w$ over a generic alphabet can be derived by considering the number of extensions of some bispecial circular factors of $w$.
By focusing on binary words over the ordered alphabet $A=\{a,b\}$ generated by a binary morphism $\mu$ prolongable on $a$, we give some new structural properties of their circular bispecial factors.
Such results allow us to derive logarithmic lower and upper bonds for some classes of binary morphisms.
Furthermore, we prove that for all the binary morphisms $\mu$ prolongable on $a$, except few cases, $\lim_{i\to\infty}\rho(\mu^i(a))=0$. Hence they are $BWT$-highly compressible.
Note that such results are independent of the order between the letters in $A$.

Let $u$ be a circular factor of a given word $w$ over a generic alphabet $A$. 
Inspired by the notation in \cite{Cassaigne97}, we denote by $e_r(u)=|\{x\in A \mid ux\in \mathcal{C}(w)\}|-1$ the number of \emph{right circular extensions of $u$ in $w$}, and by $e_\ell(u)=|\{x\in A \mid xu\in \mathcal{C}(w)\}|-1$ the number of \emph{left circular extensions of $u$ in $w$}. The bispecial circular factors of $w$ can be classified according to the number of their extensions. In particular, a circular factor $u$ is \emph{strictly bispecial} if $|\mathcal{C}(w)\cap AuA|=(e_r(u)+1)(e_\ell(u)+1)$, $u$ is \emph{weakly bispecial} if $|\mathcal{C}(w)\cap AuA|=\max \{e_r(u),e_\ell(u)\}+1$.
We denote by $SBS(w)$ and $WBS(w)$ the set of strictly and weakly bispecial circular factors of $w$, respectively. The following lemma holds. The proof is in the appendix.

\begin{lemma}
\label{le:r_bound_bispecial}
Let $w$ be a word over the alphabet $A$. Then, 
$$\sum_{u \in WBS(w)}\min\{e_l(u), e_r(u)\} +1 \leq r_{\bwt}(w) \leq \Sigma_{u\in BS(w)} e_r(u)+1.$$

\end{lemma}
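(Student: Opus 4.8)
The plan is to connect the equal-letter runs of $\bwt(w)$ directly to the combinatorial structure of the sorted cyclic rotations, and then to account for ``boundaries'' between runs using bispecial circular factors. Recall that the rows of the sorted rotation matrix are in lexicographic order and that $\bwt(w)$ is the column of last letters. A new run in $\bwt(w)$ begins precisely at a position $j$ where the last letter of the $j$-th sorted rotation differs from that of the $(j-1)$-th. So $r_{\bwt}(w)$ equals the number of such ``changes'' in the last column (plus one, to count the first run). The whole proof hinges on charging each such change to a left circular extension of some bispecial circular factor.

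For the upper bound I would argue as follows. Consider two lexicographically adjacent rotations whose last letters differ. Let $u$ be the longest common prefix of these two adjacent rows; equivalently, $u$ is the longest circular factor that is a prefix of both rotations. Since the two rotations are followed (cyclically) by different last letters but share the prefix $u$, the factor $u$ is left-special circularly (it admits at least two distinct left extensions, reading the last column as the letter preceding the rotation start). One then shows $u$ must in fact be bispecial: because the rows are \emph{lexicographically consecutive} and yet split at $u$, there must also be a right branching at $u$ in the sorted order, forcing $u$ to be right-special as well. Each bispecial circular factor $u$ can serve as this split point for at most $e_r(u)$ adjacent pairs, since the right extensions partition the block of rows prefixed by $u$ into at most $e_r(u)+1$ sub-blocks, giving at most $e_r(u)$ internal boundaries that can contribute run-changes. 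Summing over all $u\in BS(w)$ and adding the leading $+1$ yields $r_{\bwt}(w)\le \sum_{u\in BS(w)} e_r(u)+1$.

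For the lower bound I would restrict attention to \emph{weakly} bispecial circular factors and exhibit, for each such $u$, at least $\min\{e_\ell(u),e_r(u)\}$ forced run-changes in $\bwt(w)$ that are distinct across different $u$. The idea is that a weakly bispecial factor has the minimal number of admissible extensions compatible with being bispecial, namely $\max\{e_r(u),e_\ell(u)\}+1$ pairs in $\mathcal{C}(w)\cap AuA$; this rigidity means the matrix rows prefixed by $u$ are forced into a configuration where the last-column letters alternate or change in at least $\min\{e_\ell(u),e_r(u)\}$ places within that block. Because the occurrences of distinct bispecial factors occupy disjoint row-blocks (or at least contribute boundaries that can be separated by choosing the longest relevant prefix), these contributions do not double-count, and adding the global $+1$ gives the stated lower bound.

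The main obstacle I expect is the careful bookkeeping on the lower-bound side: making the claim that distinct weakly bispecial factors contribute \emph{distinct} run-boundaries genuinely rigorous, rather than merely plausible. One must handle the possibility that the row-blocks associated to different bispecial factors are nested (a shorter bispecial factor's block containing a longer one's), and verify that the $\min\{e_\ell(u),e_r(u)\}$ changes attributed to $u$ are not already counted for a sub-factor. The cleanest route is probably to charge each counted boundary to the \emph{shortest} bispecial circular factor that is a common prefix of the two adjacent rotations at that boundary, guaranteeing a well-defined injective assignment. A secondary subtlety is the circular (rather than linear) setting: one must consistently read left extensions from the $\bwt$ column while respecting that rotations wrap around, which is exactly why the statement is phrased in terms of circular factors and $\mathcal{C}(w)$.
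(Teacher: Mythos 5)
Your upper bound is sound and follows the paper's own route: at each run boundary of $\bwt(w)$ the longest common prefix $u$ of the two adjacent rotations is circularly right special (the rotations are distinct words, so they branch just after $u$) and circularly left special (their last letters are the two distinct left extensions of $u$), and each bispecial $u$ can account for at most $e_r(u)$ boundaries because the block of rotations beginning with $u$ splits into at most $e_r(u)+1$ sub-blocks by next character. This part needs no changes.

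The lower bound, however, has a genuine gap, and it is not the one you flagged. The statement that weak bispeciality ``forces'' at least $\min\{e_\ell(u),e_r(u)\}$ last-letter changes inside the block of rotations beginning with $u$ is exactly what must be proved, and your appeal to ``rigidity'' is a restatement of the conclusion, not an argument. The missing mechanism is a case analysis on which side has fewer extensions. If $e_\ell(u)\leq e_r(u)$, then $|\mathcal{C}(w)\cap AuA|=e_r(u)+1$ forces every right extension $ua$ to admit exactly one left extension, i.e.\ $e_\ell(ua)=0$; hence the $\bwt$ column is \emph{constant} on each of the $e_r(u)+1$ sub-blocks, and since all $e_\ell(u)+1$ left-extension letters must appear among these constant values, at least $e_\ell(u)$ of the sub-block boundaries are run boundaries whose longest common prefix is exactly $u$. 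Symmetrically, if $e_\ell(u)>e_r(u)$ one uses $e_r(au)=0$ to show each of the $e_r(u)$ sub-block boundaries joins rotations ending in distinct letters. Without this, nothing prevents all the last-letter changes from hiding strictly inside a single sub-block, where they are charged to longer factors, leaving $u$ with none. Relatedly, your proposed bookkeeping fix --- charging each boundary to the \emph{shortest} bispecial common prefix of the two adjacent rotations --- is misdirected: the correct (and automatically injective) assignment is to the exact longest common prefix, which is what both the upper bound and the case analysis above use; charging to the shortest bispecial prefix would pile boundaries onto short factors (typically the empty word) and destroy the per-factor guarantee of $\min\{e_\ell(u),e_r(u)\}$ that the sum requires.
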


From now on, we suppose that $A=\{a,b\}$. Given a binary morphism $\mu$, the notation $\mu\equiv(\alpha,\beta)$ means that $\mu(a)=\alpha$ and $\mu(b)=\beta$. 


\subsection{Combinatorial structure of binary morphisms}
In this subsection we give a combinatorial characterization of $\alpha$ and $\beta$, for several classes of binary morphisms $\mu\equiv (\alpha,\beta)$. Such a characterization depends on the factorial complexity of the fixed point $\mu^\infty(a)$.

Let $R_i = \{q_1 < q_2 < \ldots < q_h\}$ be the set of non-negative integers such that $ab^{q_j}a$ is a circular factor in $\mu^i(a)$, for some $1\leq j\leq h$. Denoted by $n_a=|\alpha|_a$, it is easy to verify that $|R_1|\leq n_a$, since $\mu(a) = ab^{t_1}ab^{t_2}a\cdots a b^{t_{n_a}}$ with $t_j \geq0$ for any $1 \leq j \leq n_a$.

The following proposition consider ultimately periodic purely morphic words and can be proved by using \cite[Corollary 3]{Pansiot86}.

\begin{proposition}\label{prop:struct_ult_per}
Let $x = \mu^\infty(a)$ be an infinite binary ultimately periodic word, where $\mu\equiv(\alpha,\beta)$. Then, one of the following cases must occur:
\begin{enumerate}
    \item $\alpha=\eta^\ell$  and $\beta=\eta^t$, for some $\ell, t\geq 1$;
    \item $\alpha=ab^k$ and $\beta=b^\ell$, for some $k\geq 1$, $\ell\geq 1$;
    \item $\alpha=(ab)^p a$ and $\beta=(ba)^q b$ for some $p,q \geq 1$; 
    \item $\alpha=(ab^p)^q a$ and $\beta=b$, for some $p\geq 1$, $q\geq 1$.
\end{enumerate}
\end{proposition}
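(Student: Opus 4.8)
The plan is to split according to how often each letter occurs in $x=\mu^{\infty}(a)$, and then, in the genuinely recurrent case, to exploit the rigidity of the fixed-point identity $\mu(x)=x$ together with the primitivity of the eventual period, invoking \cite[Corollary 3]{Pansiot86} to pass from ``ultimately periodic'' to a concrete periodic tiling of $x$ by the images $\alpha,\beta$.

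First I would run a counting reduction. Since $\mu$ is prolongable on $a$, the word $\alpha=\mu(a)$ begins with $a$ and so does $x$. Recording the occurrence recurrence $|\mu^{i+1}(a)|_{a}=|\alpha|_{a}\,|\mu^{i}(a)|_{a}+|\beta|_{a}\,|\mu^{i}(a)|_{b}$ (and its symmetric version for $b$), I observe that if $a$ occurs only finitely often in $x$ then $|\beta|_{a}=0$, for otherwise the infinitely many $b$'s would generate infinitely many $a$'s; hence $\beta=b^{\ell}$. Moreover $|\alpha|_{a}=1$, since two $a$'s in $\alpha$ would make $|\mu^{i}(a)|_{a}$ diverge, so $\alpha=ab^{k}$ with $k\ge 1$ by prolongability: this is Case $2$. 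The symmetric possibility that $b$ occurs only finitely often forces $|\alpha|_{b}=0$, i.e. $x=a^{\omega}$, the degenerate non-binary fixed point, which I discard. Thus I may assume both letters occur infinitely often.

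With both letters recurrent I would apply \cite[Corollary 3]{Pansiot86} to conclude that $x$ is in fact purely periodic, $x=v^{\omega}$, where $v$ is the primitive period; since $x$ starts with $a$, so does $v$. The identity $\mu(x)=\mu(v)^{\omega}=v^{\omega}$ together with the primitivity of $v$ forces $\mu(v)=v^{s}$ for some $s\ge 1$, so that both $\alpha$ and $\beta$ are factors of $v^{\omega}$, with $\alpha$ a prefix. Writing $m=|v|$ and $o_{j}=\big(\sum_{i<j}|\mu(x_{i})|\big)\bmod m$, the key point is that $\mu(x_{j})$ must equal the length-$|\mu(x_{j})|$ factor of $v^{\omega}$ starting at offset $o_{j}$, and that this factor depends only on the letter $x_{j}$; since $v$ is primitive, any factor of $v^{\omega}$ of length $\ge m$ pins its offset modulo $m$ uniquely. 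I then distinguish two configurations. If both $|\alpha|,|\beta|\ge m$, every occurrence of $a$ (resp.\ $b$) sits at a single offset, with $a$ at offset $0$; if $v$ contains $aa$ or $bb$ as a factor of $v^{\omega}$ the recurrence forces $|\alpha|\equiv|\beta|\equiv 0\pmod m$ (the congruence propagating through the transitions $ab$ and $ba$, which occur because $v$ carries both letters), whence $\alpha=v^{\ell}$, $\beta=v^{t}$, i.e. Case $1$ with $\eta=v$; the only binary primitive period avoiding both $aa$ and $bb$ is $v=ab$ ($m=2$), where odd image-lengths give $\alpha=(ab)^{p}a$ and $\beta=(ba)^{q}b$, i.e. Case $3$. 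If instead some image is short, then (after excluding $|\alpha|=1$ by prolongability) the analysis reduces to the fixed letter $\beta=b$; here $\mu(v)=v^{s}$ with $\mu(b)=b$ forces $v$ to contain exactly one $a$, so $v=ab^{p}$, and $\alpha b^{p}=(ab^{p})^{s}$ yields $\alpha=(ab^{p})^{q}a$, i.e. Case $4$. Collecting the outcomes gives exactly the four forms.

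The delicate point, which I expect to be the main obstacle, is the exhaustiveness of this offset analysis: ruling out the intermediate ``short-image'' configurations (short factors of $v^{\omega}$ that recur at several offsets, so that the candidate image is not even well defined as a single string) and confirming that no primitive binary period other than $v=ab$ can support offset-shifted images. This Fine--Wilf-type bookkeeping, together with the reduction from ultimate to pure periodicity, is precisely what \cite[Corollary 3]{Pansiot86} supplies; the remaining work is to transcribe its normal form into the binary notation $\mu\equiv(\alpha,\beta)$ and to check that the four enumerated configurations are mutually exhaustive.
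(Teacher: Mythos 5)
Your first reduction (if $a$ occurs finitely often then $\beta=b^{\ell}$, $\alpha=ab^{k}$, giving case 2) is correct, and your offset/tiling argument in the purely periodic case with both images of length $\geq |v|$ does yield cases 1 and 3 correctly. The problem is that the two load-bearing steps of your plan are exactly the ones you leave unproven, and you attribute both to \cite[Corollary 3]{Pansiot86}, which does not say what you need it to say. As it is used in the paper, that corollary is a \emph{criterion for ultimate periodicity} phrased in terms of right circular extensions: writing the shortest prefix $P_x$ of $x$ containing two occurrences of the same growing letter, $x$ is ultimately periodic precisely when the relevant prefixes $c_jv$ of $P_x$ satisfy $e_r(c_jv)=0$. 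It provides neither the implication ``ultimately periodic $+$ both letters recurrent $\Rightarrow$ purely periodic'' nor any normal form for $(\alpha,\beta)$. So your sentence claiming that the reduction to pure periodicity and the Fine--Wilf bookkeeping ``is precisely what \cite[Corollary 3]{Pansiot86} supplies'' is wishful: the statement you defer to the citation is, in substance, the proposition you were asked to prove. (The pure-periodicity claim is in fact true, but you would have to prove it; note it is \emph{a consequence} of the classification, not an available input.)

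The second gap is the short-image case, which is where case 4 actually lives ($\beta=b$ has length $1<|v|=p+1$), and you dispose of it with ``the analysis reduces to the fixed letter $\beta=b$'' plus an unjustified claim that $\mu(b)=b$ forces $v$ to contain exactly one $a$. To make this rigorous you must (i) exclude $2\leq|\alpha|<|v|$; (ii) exclude $\beta=b^{\ell}$, $\ell\geq 2$ (e.g.\ because then the $b$-runs of $x=\mu(x)$ would be unbounded, contradicting periodicity with $a$ recurrent); (iii) exclude short $\beta$ containing an $a$; and (iv) show that when $\beta=b$, ultimate periodicity forces all $b$-runs of $x$ to have equal length, so $v=ab^{p}$ and $\alpha=(ab^{p})^{q}a$. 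Point (iv) is the heart of the paper's own argument: it shows that $ab^{j}a\in\mathcal{C}(\mu^{i}(a))$ for a single exponent $j$ only (the condition $|R_i|=1$, via the extension criterion $e_r(ab^{p'})=0$), and separately rules out $\mu(a)=(ab^{p})^{q}ab^{p'}$ with $0<p'\leq p$ by examining $\mu^{2}(a)$. For comparison, the paper also obtains case 1 by a different, cleaner device you don't use: the simplifiable/elementary dichotomy of \cite{rozenberg1980mathematical}, under which any $\mu$ factoring through a one-letter alphabet automatically has $\alpha=\eta^{\ell}$, $\beta=\eta^{t}$; the remaining elementary morphisms are then classified by whether $b$ is bounded or growing, pinning the fixed point to $ab^{\omega}$, $(ab^{p})^{\omega}$ or $(ab)^{\omega}$. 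Your outline could be completed along the lines you sketch, but as written the decisive cases are asserted, not proved.
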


 The two following lemmas give a combinatorial characterization of the sets $R_i$ for any non-primitive binary morphism. 
The proofs are in the appendix.

\begin{lemma}
\label{prop:R-auabk_b}
Let $\mu=(\alpha,\beta)$ be a non-growing morphism prolongable on $a$ and let $x = \mu^\infty(a)$ be its fixed point.  
\begin{enumerate}
    \item If $\alpha=auba^k$, $\beta=b$, for any $u\in A^*$, and $k\geq 1$ then $R_i=R_1$ for any $i\geq 1$;

    \item If $\alpha=auab^k$, $\beta=b$, for any $u\in A^*$, and $k\geq 1$ then  $R_{i} =  \bigcup_{h=1}^{i} \bigcup_{j=1}^{n_a - 1} \{t_j + (h-1)k\}\cup \{ik\}$ for any $i \geq 1$.
\end{enumerate}
\end{lemma}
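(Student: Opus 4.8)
The plan is to exploit the single structural fact that $\beta=b$, so that $\mu$ fixes the letter $b$ and sends each $a$ to $\alpha=\mu(a)$. Writing $\alpha=ab^{t_1}ab^{t_2}\cdots ab^{t_{n_a}}$ as in the preliminaries, I first observe that every maximal block of letters strictly between two consecutive occurrences of $a$ in $\mu^i(a)$ consists only of $b$'s; hence $R_i$ is exactly the set of lengths of the maximal $b$-blocks separating consecutive $a$'s read cyclically in $\mu^i(a)$. In particular $R_1=\{t_1,\ldots,t_{n_a}\}$, where the value $t_{n_a}$ is the gap coming from the wrap-around (the trailing block $b^{t_{n_a}}$ of $\alpha$ followed by its leading $a$). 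Note that in both items $n_a\geq 2$, so the internal gaps are genuinely present.

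The core of the argument is a single recurrence relating $R_i$ to $R_{i-1}$. Since applying $\mu$ to the cyclic word $\mu^{i-1}(a)$ amounts to substituting $a\mapsto\alpha$ and $b\mapsto b$, and this substitution is compatible with conjugacy (so it is well defined on $\mathcal{C}(\mu^{i-1}(a))$), each maximal block $b^m$ lying between two consecutive $a$'s of $\mu^{i-1}(a)$ is turned into the pattern $\alpha\, b^m\,\alpha$. Inside each copy of $\alpha$ this reproduces the internal gaps $t_1,\ldots,t_{n_a-1}$, while the gap straddling the original block becomes $t_{n_a}+m$, because $\alpha$ ends in exactly $t_{n_a}$ trailing $b$'s and begins with $a$. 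Running once around the cycle I obtain
\[
R_i=\{t_1,\ldots,t_{n_a-1}\}\cup\{\,t_{n_a}+m\mid m\in R_{i-1}\,\}\qquad(i\geq 2),
\]
which I intend to justify carefully. This step is the main obstacle: one must check that the passage to the cyclic image is legitimate, that no spurious gap appears at the cyclic seam, and that every listed value genuinely occurs (using $n_a\geq 2$ and the fact that $\alpha$ occurs at least once).

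With the recurrence in hand, the two statements follow by induction on $i$, the base case $i=1$ being the direct computation of $R_1$ above. For item (1) one has $\alpha=auba^k$, so $\alpha$ ends in $a$ and $t_{n_a}=0$; the recurrence collapses to $R_i=\{t_1,\ldots,t_{n_a-1}\}\cup R_{i-1}$, and since $R_1=\{t_1,\ldots,t_{n_a-1}\}\cup\{0\}$ already contains $\{t_1,\ldots,t_{n_a-1}\}$, a trivial induction yields $R_i=R_1$ for all $i\geq1$. For item (2) one has $\alpha=auab^k$, so $t_{n_a}=k$ and the recurrence reads $R_i=\{t_1,\ldots,t_{n_a-1}\}\cup\{\,k+m\mid m\in R_{i-1}\,\}$; assuming the claimed formula for $R_{i-1}$, translating it by $k$ shifts the index $h$ by one and turns $\{(i-1)k\}$ into $\{ik\}$, while adjoining the $h=1$ layer $\{t_1,\ldots,t_{n_a-1}\}$ restores the full union, giving exactly $R_i=\bigcup_{h=1}^{i}\bigcup_{j=1}^{n_a-1}\{t_j+(h-1)k\}\cup\{ik\}$, as required.
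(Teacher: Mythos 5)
Your proposal is correct and takes essentially the same approach as the paper: both proofs rest on the recurrence $R_i=\{t_1,\ldots,t_{n_a-1}\}\cup\{t_{n_a}+m \mid m\in R_{i-1}\}$ (the paper phrases it as: every circular factor $ab^{q}a$ of $\mu^{i}(a)$ is either a factor of $\mu(a)$ or the image $\mu(ab^{q-k}a)$ of a circular factor of $\mu^{i-1}(a)$, yielding $R_{i}=\bigcup_{j=1}^{n_a-1}\{t_j\}\cup\bigcup_{q'\in R_{i-1}}\{q'+k\}$) followed by the same induction, with item (1) handled exactly as you do, namely as the degenerate case $t_{n_a}=0$. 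Your reformulation of $R_i$ as the set of cyclic maximal $b$-block lengths is only a cosmetic repackaging of the paper's circular-factor argument.
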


\begin{lemma}
\label{prop:R-auabk_bl}
Let $\mu$ any growing non-primitive binary morphism. 
If $\mu^\infty(a)$ is ultimately periodic, then $\mu\equiv(ab^k,b^\ell)$, $k\geq 1,\ell> 1$ and $R_i=k\sum_{j=0}^{i-1}\ell^j$, $i\geq 1$.
Otherwise, it holds that $\mu \equiv (auab^k, b^\ell)$, $k\geq 0,\ell>1, u\in A^*$ and $R_{i} = \bigcup_{h=1}^{i} \bigcup_{j=1}^{n_a - 1} \{\frac{\ell^{h-1}((\ell-1)t_j +k) - k}{\ell-1}\} \cup \{k\frac{\ell^i-1}{\ell-1}\}$, for any $i \geq 1$.
\end{lemma}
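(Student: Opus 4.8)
The plan is to first pin down the shape of $\mu$ from the hypotheses, then use Proposition~\ref{prop:struct_ult_per} to separate the ultimately periodic and aperiodic cases, and finally compute $R_i$ via a single recurrence that specializes to both. For the shape, I would look at the incidence matrix $M = \bigl(\begin{smallmatrix} |\alpha|_a & |\beta|_a \\ |\alpha|_b & |\beta|_b\end{smallmatrix}\bigr)$, recalling that $\mu$ is primitive iff some power $M^s$ is entrywise positive. For a $2\times 2$ nonnegative matrix this fails exactly when an off-diagonal entry vanishes identically, i.e.\ $|\beta|_a = 0$ or $|\alpha|_b = 0$. The case $|\alpha|_b = 0$ forces $\alpha = a^{n_a}$, so that $\mu^\infty(a) = a^{\omega}$ contains no $b$; discarding this degenerate situation (where $R_i$ and the dichotomy are vacuous), non-primitivity yields $\beta = b^\ell$. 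Since growth of $b$ requires $\ell > 1$ and growth of $a$ is then automatic from $|\alpha|_b \geq 1$, I obtain $\mu\equiv(\alpha,b^\ell)$ with $\ell>1$ and $\alpha = ab^{t_1}ab^{t_2}\cdots ab^{t_{n_a}}$, $t_j\geq 0$; set $k = t_{n_a}$.

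For the dichotomy, with $\beta=b^\ell$ and $\ell>1$ in hand I would inspect the four possibilities of Proposition~\ref{prop:struct_ult_per}. Cases~(1), (3), (4) are incompatible with these constraints (they respectively force $\alpha$ to contain no $a$, force an $a$ inside $\beta$, or force $\ell=1$), so an ultimately periodic fixed point can only fall under case~(2), namely $\alpha = ab^k$, equivalently $n_a = 1$. Conversely, if $n_a=1$ a direct computation gives $\mu^i(a) = ab^{k(\ell^i-1)/(\ell-1)}$, whose fixed point $ab^{\omega}$ is ultimately periodic. Hence $\mu^\infty(a)$ is ultimately periodic iff $n_a=1$, which is precisely the split between the families $(ab^k,b^\ell)$ and $(auab^k,b^\ell)$.

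The heart of the argument is the recurrence for $R_i$. The key observation is that, since $\mu(b)=b^\ell$ produces no $a$, every occurrence of $a$ in $\mu^{i+1}(a)=\mu(\mu^i(a))$ lies inside the image $\alpha$ of some $a$ of $\mu^i(a)$. Reading $\mu^{i+1}(a)$ circularly, each maximal block $b^{q'}$ flanked by two $a$'s is then of exactly one of two types: either it sits strictly inside a single copy of $\alpha$, contributing the gaps $t_1,\dots,t_{n_a-1}$; or it straddles the junction between the images of two circularly consecutive $a$'s of $\mu^i(a)$ that are separated by $b^q$ with $q\in R_i$, in which case the trailing $b^k$ of the first $\alpha$, the $b^{\ell q}$ coming from $b^q$, and the leading $a$ of the next $\alpha$ produce the gap $k+\ell q$. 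This gives
\begin{equation*}
R_{i+1} = \{t_1,\dots,t_{n_a-1}\}\cup\{\, k+\ell q : q\in R_i \,\},
\qquad
R_1 = \{t_1,\dots,t_{n_a-1}\}\cup\{k\}.
\end{equation*}

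Finally, writing $g(x)=\ell x+k$ and $S=\{t_1,\dots,t_{n_a-1}\}$, the recurrence reads $R_{i+1}=S\cup g(R_i)$ with base $R_0=\{0\}$ (the cyclic word $a$ has only the gap $0$), so unrolling yields $R_i=\bigcup_{h=1}^{i}g^{h-1}(S)\cup\{g^i(0)\}$. Substituting the explicit iterate $g^m(x)=\ell^m x+k(\ell^m-1)/(\ell-1)$ and simplifying $g^{h-1}(t_j)$ and $g^i(0)$ reproduces the two closed forms in the statement, and for $n_a=1$ the union over $S$ is empty and $R_i$ collapses to the single value $k(\ell^i-1)/(\ell-1)=k\sum_{j=0}^{i-1}\ell^j$ of the ultimately periodic case. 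I expect the only delicate point to be the circular bookkeeping in the recurrence step: one must verify that the split of $b$-gaps into ``internal to an $\alpha$'' and ``at a junction'' is exhaustive and compatible with the wrap-around, i.e.\ that passing to circular factors commutes with applying $\mu$; the remaining induction and geometric-series simplification are routine.
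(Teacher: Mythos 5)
Your proposal is correct and follows essentially the same route as the paper: its proof of this lemma is declared ``analogous to Lemma~\ref{prop:R-auabk_b}'', i.e.\ exactly your induction that every circular gap $ab^{q}a$ in $\mu^{i+1}(a)$ is either internal to one copy of $\alpha$ (yielding $t_1,\dots,t_{n_a-1}$) or arises at a junction from some $q'\in R_i$ via $q=\ell q'+k$, followed by unrolling the recurrence to the closed form. Your extra material (the incidence-matrix derivation of $\beta=b^{\ell}$, the dichotomy via Proposition~\ref{prop:struct_ult_per}, and the explicit discarding of the degenerate $b$-free images) corresponds to what the paper delegates to Lemma~\ref{le:non-primitive-aperiodic} and Proposition~\ref{prop:struct_ult_per}, so it is a faithful, slightly more self-contained expansion of the same argument.
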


The following two propositions give a combinatorial characterization of non-primitive morphisms generating aperiodic words. The proofs can be found in the appendix.

\begin{proposition}
\label{prop:struct_aperiodic_non-growing}
Let $\mu=(\alpha,\beta)$ be a non-growing morphism prolongable on $a$ and let $x = \mu^\infty(a)$ be its aperiodic fixed point.
Then, one of the following cases must occur:
\begin{enumerate}
    \item $\alpha=auba^k$, $\beta=b$, for some $u\in A^*$ and $k\geq1$, and $f_x(n) = \Theta(n)$;
    \item $\alpha=auab^k$, $\beta=b$, for any $u\in A^*$ and $k\geq1$, and $f_x(n) = \Theta(n^2)$.
\end{enumerate}
\end{proposition}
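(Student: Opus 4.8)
The plan is to first pin down the images $\alpha=\mu(a)$ and $\beta=\mu(b)$ that are forced by the hypotheses, and then to read off the factor complexity from Theorem~\ref{th:pansiot} according to whether or not the fixed point contains arbitrarily long runs of the bounded letter $b$.

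First I would reduce the morphism to the stated shape. Since $\mu$ is prolongable on $a$ its fixed point is infinite, so $a$ is a growing letter; as $\mu$ is non-growing, the only remaining letter $b$ must be the bounded one. I would then argue that $\mu(b)$ cannot contain $a$ (otherwise $\mu^i(b)$ would contain the growing word $\mu^{i-1}(a)$, making $b$ grow) and cannot equal $b^\ell$ with $\ell\geq 2$ (which grows exponentially); being non-erasing, this forces $\beta=b$. Prolongability gives $\alpha=a\cdots$, and $\alpha$ must contain a $b$, since $\alpha=a^m$ would produce the periodic word $a^\infty$, contradicting aperiodicity. Thus $B=\{b\}$ and everything reduces to the placement of the $b$'s inside $\alpha$, which I split according to the last letter of $\alpha$.

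If $\alpha$ ends in $b$, writing $b^k$ for its maximal final run and taking the last $a$ before it gives $\alpha=auab^k$ with $u\in A^*$ and $k\geq 1$; aperiodicity rules out $\alpha=ab^k$, which generates the ultimately periodic word $ab^{ki}$, so a genuine second $a$ is present as displayed. I would then track the length $r_i$ of the trailing $b$-run of $\mu^i(a)$: since this run is preceded by an $a$ and $\mu(a)=\alpha$ ends in $b^k$, one gets $r_{i+1}=r_i+k$, hence $r_i=ki$, so $x$ contains the factor $b^{ki}$ for every $i$, i.e.\ arbitrarily long factors in $B^*=b^*$. Case~(2a) of Theorem~\ref{th:pansiot} then yields $f_x(n)=\Theta(n^2)$, which is case~2 of the statement.

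If instead $\alpha$ ends in $a$, writing $a^k$ for the maximal final run and taking the last $b$ before it gives $\alpha=auba^k$ with $u\in A^*$, $k\geq 1$, which is case~1. Here the fact that $\alpha$ begins and ends with $a$ prevents $b$-runs from ever lengthening under $\mu$: by an easy induction every maximal $b$-run of $\mu^i(a)$ has length at most the largest $b$-run of $\alpha$, so the factors of $b^*$ in $x$ have bounded length and only case~(2b) of Theorem~\ref{th:pansiot} applies, leaving $\Theta(n)$, $\Theta(n\log\log n)$, and $\Theta(n\log n)$ open. The main obstacle is to exclude the two super-linear possibilities, and I would resolve it by recoding. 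Since all $b$-runs are bounded, write $x$ as a sequence of \emph{blocks} $ab^{q}$ over the finite alphabet $S$ of occurring run-lengths. Because $\alpha$ starts and ends with $a$, the substitution $\mu$ maps each block $ab^q$ to $\alpha b^q=(auba^k)b^q$, whose block decomposition consists of the $n_a-1$ fixed blocks coming from $auba^{k-1}$ followed by the single block $ab^q$; hence the induced morphism $\bar\mu$ on $S$ satisfies $\bar\mu(q)=w_0\,q$ for a fixed word $w_0$ of length $n_a-1$, so $\bar\mu$ is $n_a$-uniform with $n_a\geq 2$, and the block sequence of $x$ is its fixed point, prolongable on its first block. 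As the recoding is length-bounded it preserves aperiodicity, so the block sequence is aperiodic precisely because $x$ is; Theorem~\ref{th:u-morph_subwordcomplexity} then gives it linear complexity, and transferring back through the bounded recoding yields $f_x(n)=\Theta(n)$, completing case~1.
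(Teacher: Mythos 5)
Your proof is correct, and its skeleton coincides with the paper's: reduce to $\beta=b$, split on the last letter of $\alpha$, obtain $\Theta(n^2)$ in the case $\alpha=auab^k$ from arbitrarily long runs of $b$'s via Theorem~\ref{th:pansiot}, and settle the case $\alpha=auba^k$ by recoding $x$ over the finite alphabet of blocks $ab^q$. The differences are local, and both versions work. In case 2 you track the trailing $b$-run directly ($r_i=ki$), where the paper invokes the growth of $R_i$ from Lemma~\ref{prop:R-auabk_b}; these are interchangeable. In case 1 the paper cites Pansiot's Theorem 4.1 (\cite{Pansiot_ICALP84}) both for the existence of the derived growing morphism $\mu'$ on $A'=\{[ab^ja]\colon j\in R_1\}$ and for the complexity transfer $f_x(n)=\Theta(f_{\mu'^{\infty}(a')}(n))$, and then argues that $\mu'$ is \emph{primitive}; you instead compute the derived block morphism explicitly, $\bar\mu(q)=w_0q$ with $w_0$ a fixed word of length $n_a-1$, and use that it is \emph{uniform}. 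Both arguments then conclude by the same Theorem~\ref{th:u-morph_subwordcomplexity}, merely through its two different hypotheses. Your uniformity observation is arguably cleaner and more self-contained (primitivity of $\bar\mu$ is true, but requires the extra remark that every letter of the block alphabet occurs in $w_0$), but in exchange you owe a justification for the complexity transfer that the paper gets for free from Pansiot: the phrase ``transferring back through the bounded recoding'' should be expanded, e.g.\ every length-$n$ factor of $x$ is determined by a block factor of $\bar x$ of length at most $n+1$ together with a starting offset bounded by $\max S+1$, so $f_x(n)\leq(\max S+1)\,f_{\bar x}(n+1)=\mathcal{O}(n)$, while $f_x(n)\geq n+1$ follows from Morse--Hedlund because $x$ is aperiodic. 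With that one line added, your argument is a complete and valid alternative to the appendix proof.
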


\begin{remark}
    Note that the morphism $\mu \equiv (auab^k,b)$, $k \geq 1$ always generates a fixed point with quadratic factor complexity for any $u \in A^*$. If $\mu \equiv(auba^k, b)$, $k\geq 1$, then we have to distinguish two cases: if $k>1$, then $\mu$ generates a fixed point with linear factor complexity for any $u \in A^*$; if $k=1$, by Proposition \ref{prop:struct_ult_per} (case $4$), $u\neq (b^pa)^qb^{p-1}$, $p,q\geq 1$, otherwise the fixed point is ultimately periodic. 
    
\end{remark}

\begin{proposition}
\label{prop:struct_aperiodic_growing}
Let $\mu=(\alpha,\beta)$ be a growing non-primitive morphism prolongable on $a$ and let $x = \mu^\infty(a)$ be its aperiodic fixed point.
Then, $\mu \equiv (av, b^\ell)$, for some $\ell \geq 2$ and $v\in A^+$ such that $|v|_a, |v|_b \geq 1$.
Moreover, let $n_a = |av|_a$.
Then, it holds that: 
\begin{enumerate}
    \item $n_a < \ell$ iff $f_x(n) = \Theta(n)$;
    \item $n_a = \ell$  iff $f_x(n) = \Theta(n \log \log n)$;
    \item $n_a > \ell$ iff $f_x(n) = \Theta(n \log n)$.
\end{enumerate}
\end{proposition}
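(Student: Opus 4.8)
The plan is to split the argument into a structural part, which pins down the exact shape of $\mu$, and a growth-computation part, which identifies the Pansiot growth class of $\mu$ and then appeals to Theorem~\ref{th:pansiot}. First I would establish that $\mu \equiv (av, b^\ell)$ with the stated constraints. Since $\mu$ is prolongable on $a$, necessarily $\alpha = av$ for some $v \in A^*$. The key observation is that, in the binary case, a growing morphism fails to be primitive precisely when one of the two images is a power of a single letter. Indeed, if $\mu(a)$ contained no occurrence of $b$ then $\alpha = a^m$ and the fixed point would be $a^\infty$, contradicting aperiodicity; hence $b$ occurs in every $\mu^k(a)$, $k\geq 1$. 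Non-primitivity then forces $a \notin \mu^k(b)$ for all $k$, which means $\mu(b)$ contains no $a$, i.e. $\beta = b^\ell$, and growth of $b$ gives $\ell \geq 2$. Finally, aperiodicity excludes $v \in a^*$ (which would again give $a^\infty$) and $v \in b^*$ (which, since $\beta=b^\ell$, would yield the ultimately periodic $ab^\infty$, consistent with case~2 of Proposition~\ref{prop:struct_ult_per}), so $|v|_a \geq 1$ and $|v|_b \geq 1$.

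For the complexity classification I would compute the two growth functions explicitly. Writing $n_a = |av|_a$ and $n_b = |av|_b = |v|_b$, the fact that $b$ never produces an $a$ yields $|\mu^i(a)|_a = n_a^{\,i}$ at once, while the number of occurrences of $b$, say $B_i = |\mu^i(a)|_b$, satisfies the linear recurrence $B_i = \ell\,B_{i-1} + n_b\,n_a^{\,i-1}$ with $B_0 = 0$. The growth function is then $\mu_a(i) = n_a^{\,i} + B_i$, whereas $\mu_b(i) = \ell^{\,i}$ trivially. Solving this recurrence is the technical core of the argument and splits into three cases according to the comparison of $n_a$ with $\ell$: when $n_a \neq \ell$ a particular solution of the form $D\,n_a^{\,i}$ works with $D = n_b/(n_a-\ell)$, while when $n_a = \ell$ resonance forces a particular solution of the form $D\,i\,\ell^{\,i}$ with $D = n_b/\ell$. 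Reading off the dominant term gives $\mu_a(i) = \Theta(\ell^i)$ when $n_a < \ell$, $\mu_a(i) = \Theta(i\,\ell^i)$ when $n_a = \ell$, and $\mu_a(i) = \Theta(n_a^{\,i})$ when $n_a > \ell$.

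It then remains to translate these $\Theta$-estimates into Pansiot's normal form $\varphi_a(i) = \Theta(i^{e_a}p_a^i)$ and apply Theorem~\ref{th:pansiot}. When $n_a < \ell$ both letters grow as $\Theta(\ell^i)$ (same base $\ell$, $e_a = e_b = 0$), so $\mu$ is quasi-uniform and $f_x(n) = \Theta(n)$. When $n_a = \ell$ the two letters share the base $\ell$ but have $e_a = 1 \neq 0 = e_b$, so $\mu$ is polynomially divergent and $f_x(n) = \Theta(n\log\log n)$. When $n_a > \ell$ the bases differ, $p_a = n_a \neq \ell = p_b$ with both exceeding $1$ (note $n_a = 1+|v|_a \geq 2$), so $\mu$ is exponentially divergent and $f_x(n) = \Theta(n\log n)$. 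Since the three hypotheses on $n_a$ are mutually exclusive and exhaustive and the three complexity classes are pairwise distinct, all three biconditionals follow simultaneously.

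The main obstacle I anticipate is not the recurrence, which is routine once set up, but making the structural step fully rigorous: in particular the clean characterization of non-primitivity in the binary growing case, and the careful exclusion of the degenerate forms of $v$ that produce periodic or ultimately periodic fixed points, which must be kept consistent with Proposition~\ref{prop:struct_ult_per}. A secondary point requiring care is confirming that the $\Theta$-estimate for $\mu_a(i)$ is matched to the correct Pansiot parameters $e_a$ and $p_a$, so that the quasi-uniform / polynomially divergent / exponentially divergent trichotomy is applied to the right quantities and the final invocation of Theorem~\ref{th:pansiot} is legitimate.
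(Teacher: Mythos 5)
Your proposal is correct and takes essentially the same route as the paper: pin down the form $\mu \equiv (av,b^\ell)$ (which the paper delegates to its separate Lemma~\ref{le:non-primitive-aperiodic}), derive the counting recurrence $|\mu^i(a)|_a = n_a^i$, $|\mu^i(a)|_b = \ell\,|\mu^{i-1}(a)|_b + n_b n_a^{i-1}$, obtain $\mu_a(i)=\Theta(\ell^i)$, $\Theta(i\ell^i)$, or $\Theta(n_a^i)$ according as $n_a<\ell$, $n_a=\ell$, or $n_a>\ell$, and conclude via the quasi-uniform / polynomially divergent / exponentially divergent trichotomy of Theorem~\ref{th:pansiot}. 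The only difference is in the routine step of solving the recurrence (the paper uses generating functions and partial fractions to get the coefficient $n_a^i + \frac{n_b}{n_a-\ell}\left(n_a^i-\ell^i\right)$, you use the standard ansatz/resonance method), and your explicit remark that the three mutually exclusive cases with pairwise distinct conclusions yield the stated biconditionals is a point the paper leaves implicit.
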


The following proposition shows that for any binary non-primitive morphism, except for the case of non-growing morphisms with linear factorial complexity of the fixed point (case $1$ of Lemma \ref{prop:R-auabk_b}), the size of the set $R_i$ of non-negative integers $j$ such that $ab^ja$ is a circular factor of $\mu^i(a)$ grows linearly with $i$.

\begin{proposition}
\label{prop:linear_R_i}
Let $\mu\equiv (auab^k, b^\ell)$ a binary morphism with $k\geq 0$, $\ell\geq 1$ and $k+\ell>1$ and aperiodic fixed point. Then $|R_i|=\Theta(i)$.
\end{proposition}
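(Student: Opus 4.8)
The plan is to split according to whether $\mu$ is growing, a dichotomy that under the hypotheses $\ell\ge 1$ and $k+\ell>1$ coincides with $\ell=1$ (which forces $k\ge 1$, the non-growing case) versus $\ell\ge 2$ (the growing case), and then to read off the explicit description of $R_i$ already established. For $\ell=1$ we have $\mu\equiv(auab^k,b)$ with $k\ge 1$, and Lemma~\ref{prop:R-auabk_b}(2) yields
\[ R_i=\bigcup_{h=1}^{i}\bigcup_{j=1}^{n_a-1}\{t_j+(h-1)k\}\ \cup\ \{ik\}; \]
for $\ell\ge 2$ we have $\mu\equiv(auab^k,b^\ell)$ with $k\ge 0$, and Lemma~\ref{prop:R-auabk_bl} yields
\[ R_i=\bigcup_{h=1}^{i}\bigcup_{j=1}^{n_a-1}\Bigl\{\frac{\ell^{h-1}((\ell-1)t_j+k)-k}{\ell-1}\Bigr\}\ \cup\ \Bigl\{k\frac{\ell^i-1}{\ell-1}\Bigr\}, \]
where $\mu(a)=ab^{t_1}ab^{t_2}\cdots ab^{t_{n_a}}$ with $t_{n_a}=k$, and $n_a=|\alpha|_a\ge 2$ since the form $auab^k$ already exhibits two occurrences of $a$ (so the case $\alpha=auba^k$ of Lemma~\ref{prop:R-auabk_b}(1), where $R_i$ stays constant, is genuinely excluded).

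The upper bound $|R_i|=\mathcal{O}(i)$ is then immediate: in both descriptions $R_i$ is a union of singletons indexed by $1\le h\le i$ and $1\le j\le n_a-1$, together with one additional element, so $|R_i|\le(n_a-1)\,i+1$, and $n_a$ is a constant determined by $\mu$.

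For the lower bound $|R_i|=\Omega(i)$ I would fix a single index $j^\star\in\{1,\dots,n_a-1\}$ and show that the $i$ values obtained by letting $h$ run over $1,\dots,i$ are pairwise distinct; since these all lie in $R_i$, this alone gives $|R_i|\ge i$, so no analysis of collisions among different values of $j$ is required. When $\ell=1$, every $j$ works, because $t_j+(h-1)k$ is strictly increasing in $h$ as soon as $k\ge 1$. When $\ell\ge 2$, the map $h\mapsto\frac{\ell^{h-1}((\ell-1)t_j+k)-k}{\ell-1}$ is strictly increasing exactly when its coefficient $(\ell-1)t_j+k$ is positive, which is automatic for every $j$ if $k\ge 1$. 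The only delicate situation is $k=0$ in the growing case, where the coefficient reduces to $(\ell-1)t_j$ and one must produce some $t_{j^\star}\ge 1$ with $j^\star\le n_a-1$. This is precisely where aperiodicity enters: by Proposition~\ref{prop:struct_aperiodic_growing} an aperiodic growing non-primitive binary morphism has $\mu\equiv(av,b^\ell)$ with $|v|_b\ge 1$, and since $|v|_b=\sum_{j=1}^{n_a-1}t_j$ when $k=0$, at least one internal block length $t_{j^\star}$ is nonzero, supplying the required strictly increasing family. The main obstacle is thus isolating and disposing of this degenerate $k=0$ branch — ruling out that all tracked block lengths vanish, which would collapse the family to the single value $0$ — whereas every remaining case follows from plain strict monotonicity in $h$.
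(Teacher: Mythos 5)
Your proof is correct and takes essentially the same route as the paper's: both read off the explicit descriptions of $R_i$ from Lemma~\ref{prop:R-auabk_b}(2) and Lemma~\ref{prop:R-auabk_bl}, get the upper bound $|R_i|\leq (n_a-1)i+1$ by counting the indexing pairs $(h,j)$, and get the lower bound $|R_i|\geq i$ by exhibiting one index $j$ for which the values are strictly increasing in $h$. The only difference is that you spell out the degenerate branch ($k=0$, $\ell\geq 2$), where aperiodicity via Proposition~\ref{prop:struct_aperiodic_growing} guarantees some internal $t_{j^\star}\geq 1$ --- a detail the paper's proof asserts implicitly with ``there exists at least one $j$ such that $p_h$ grows with $h$'' --- so your write-up is a faithful, slightly more careful version of the same argument.
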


\subsection{Logarithmic bounds for $r_{\bwt}$ in case of binary morphisms}
In this subsection we prove that if $w$ is a finite word of length $n$ generated by iterating a binary morphism, then $r_{\bwt}(w)=\mathcal{O}(\log n)$. Moreover, we identify some classes of binary morphisms for which $\Omega(\log n)$ is a lower bound for $r_{\bwt}$. From Proposition \ref{prop:struct_ult_per}, one can easily derive that in case of ultimately periodic words $r_{bwt}$ is $\Theta(1)$. In case of a primitive morphism $\mu$ the upper bound $\mathcal{O}(\log n)$ can be deduced by Proposition \ref{prop:r_upper_bound_factorComplexity} and Theorem \ref{th:u-morph_subwordcomplexity}, by using the fact that, in this case, $\mu_i(a)$ is exponential \cite{rozenberg1980mathematical}. Hence, here we can suppose that the morphism $\mu\equiv(\alpha,\beta)$ is not primitive and $\mu^\infty(a)$ is aperiodic. 

The following lemmas give a structural characterization of the bispecial circular factors of the words generated by iterating a binary morphism. 
In particular, Lemma \ref{lemma:b^(k)f(v)-general} shows how to construct bispecial circular factors of $\mu^{i+1}(a)$ starting from the bispecial circular factors of $\mu^{i}(a)$, $i\geq 1$. In Lemma \ref{le:structural_general} we prove that all bispecial circular factors can be constructed by starting from the bispecial circular factors of a finite set of words depending of the images of $\mu$ on the letters of the alphabet. The proofs are in the appendix.

\begin{lemma}
\label{lemma:b^(k)f(v)-general}
Let $\mu \equiv(auab^k, b^\ell)$ be a binary morphism, for some $ k\geq 0, \ell \geq 1$.
If $v$ is a circular bispecial factor of $\mu^{i}(a)$, $i\geq 1$, then $w = b^k\mu(v)$ is a bispecial circular factor of $\mu^{i+1}(a)$.
\end{lemma}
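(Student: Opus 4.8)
The plan is to reduce the statement to showing that all four one-letter extensions of $w=b^k\mu(v)$ are circular factors of $\mu^{i+1}(a)$. Indeed, over $A=\{a,b\}$ a circular factor is left special iff both $aw$ and $bw$ lie in $\mathcal{C}(\mu^{i+1}(a))$, and right special iff both $wa$ and $wb$ do, so $w$ is bispecial exactly when $aw,bw,wa,wb\in\mathcal{C}(\mu^{i+1}(a))$. Since $v$ is bispecial in $\mu^i(a)$, the same reduction gives $av,bv,va,vb\in\mathcal{C}(\mu^i(a))$. Two facts will be used throughout. First, $\mu$ sends circular factors of $\mu^i(a)$ to circular factors of $\mu^{i+1}(a)$: if $z=pq$ has the conjugate $qp$ as a factor of a rotation of $\mu^i(a)$, then $\mu(qp)=\mu(q)\mu(p)$ is a rotation of $\mu^{i+1}(a)=\mu(\mu^i(a))$, whence $\mu(z)\in\mathcal{C}(\mu^{i+1}(a))$. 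Second, and crucially, I always extend an occurrence of the factor under consideration to the left up to the nearest preceding $a$: since $\mu^i(a)$ contains the letter $a$, any such occurrence is preceded (circularly) by a block $ab^m$ with $m\ge 0$, so that for the corresponding circular factor $z$ one has $\mu(ab^m z)=aua\,b^{k+\ell m}\mu(z)$, because $\mu(a)=auab^k$ ends in $b^k$ and $\mu(b)^m=b^{\ell m}$. Thus $\mu(z)$ occurs preceded by a block of exactly $k+\ell m$ consecutive $b$'s and then an $a$.

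For left-specialness I build $aw$ and $bw$. Starting from $av\in\mathcal{C}(\mu^i(a))$, the image $\mu(av)=auab^k\mu(v)$ contains $ab^k\mu(v)=aw$ (the final $a$ of $\mu(a)$ together with its trailing $b^k$), so $aw\in\mathcal{C}(\mu^{i+1}(a))$. Starting from $bv\in\mathcal{C}(\mu^i(a))$, I extend left to the nearest $a$ to obtain a circular factor $ab^m v$ with $m\ge 1$ (the leading $b$ of $bv$ forces $m\ge 1$); its image $aua\,b^{k+\ell m}\mu(v)$ contains $b^{k+1}\mu(v)=bw$ because $k+\ell m\ge k+1$. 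Hence $bw\in\mathcal{C}(\mu^{i+1}(a))$ and $w$ is left special.

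For right-specialness I build $wa$ and $wb$. From $va\in\mathcal{C}(\mu^i(a))$, extend left to the nearest $a$ to get $ab^m v a$ with $m\ge 0$; its image $aua\,b^{k+\ell m}\mu(v)\,auab^k$ contains $b^k\mu(v)a=wa$, since $k+\ell m\ge k$ and $\mu(a)$ starts with $a$. From $vb\in\mathcal{C}(\mu^i(a))$, extend left to the nearest $a$ to get $ab^m v b$ with $m\ge 0$; its image $aua\,b^{k+\ell m}\mu(v)\,b^\ell$ contains $b^k\mu(v)b=wb$, since $k+\ell m\ge k$ and $\mu(b)=b^\ell$ starts with $b$. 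Hence $wa,wb\in\mathcal{C}(\mu^{i+1}(a))$ and $w$ is right special, so $w$ is bispecial.

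The only delicate point --- and the step I would write out most carefully --- is this left-extension to the nearest $a$: one must check that such an $a$ exists and that the counts of accumulated $b$'s are correct. Existence holds because $\mu^i(a)$ begins with $a$ and is therefore not a power of $b$, so the maximal run of $b$'s immediately to the left of any occurrence is finite and bounded on its left by an $a$; the degenerate situations in which this run could wrap around the whole word do not arise for a genuine bispecial $v$. The count is then immediate from $\mu(a)=auab^k$ and $\mu(b)=b^\ell$: the $a$-block contributes the trailing $b^k$ and each of the $m$ intervening $b$'s contributes $b^\ell$, yielding the run $b^{k+\ell m}$, whose length is at least $k$ in all four cases and at least $k+1$ precisely when $m\ge 1$ --- exactly what the constructions of $aw,bw,wa,wb$ require.
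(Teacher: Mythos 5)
Your proof is correct and follows essentially the same route as the paper's: both apply $\mu$ to circular extensions of $v$ and exploit the fact that whatever precedes $\mu(v)$ in $\mu^{i+1}(a)$ supplies at least $b^k$ (and at least $b^{k+\ell m}$, $m\geq 1$, when the letter preceding $v$ is $b$), the paper organizing this around the diagonal pair $ava$, $bvb$ instead of your four one-letter extensions. The single point you assert without proof---that the leftward extension to the nearest $a$ cannot wrap around the whole word when $v$ is bispecial---is indeed true and easy to justify (if every letter outside an occurrence of $v$ were $b$, then by counting, every occurrence of $v$ would contain all the $a$'s of $\mu^i(a)$, so no occurrence could be followed by $a$, contradicting right-specialness); note that the paper's own proof skips this subtlety silently when it claims every $\beta$-block is circularly preceded by $b^k$.
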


\begin{lemma}
\label{le:structural_general}
Let $\mu \equiv (\alpha, \beta) = (auab^k, b^\ell)$ for some $k\geq 0$, $\ell \geq 1$, 
let $m$ be the length of the longest equal-letter run of $b$'s that occurs in $auab^k$,
and let $M=\max\{\lfloor \frac{m-(\ell+1)k}{\ell^2} \rfloor, 0\}$.
Then, any circular bispecial factor $w$ of $\mu^{i+1}(a)$, $i\geq 1$, either appears as a circular factor in $ \bigcup_{j=0}^{M}\{\mu(\alpha)\mu(\beta)^{j}\mu(\alpha)\}$ or $w=b^k \mu(v)$, for some circular bispecial factor $v$ in $\mu^{i}(a)$ (or $w = b^h$, for some $h\geq 1$, when $k\geq 0$ and $\ell>1$).
\end{lemma}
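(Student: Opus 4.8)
The plan is to prove this as the converse of Lemma~\ref{lemma:b^(k)f(v)-general}: there one \emph{builds} bispecial factors of $\mu^{i+1}(a)$ by setting $w=b^k\mu(v)$, whereas here I must show that, apart from a fixed finite list of short patterns and the pure $b$-runs, \emph{every} bispecial circular factor arises this way. The engine is a desubstitution (recognisability) argument based on the observation that $\mu(a)=auab^k$ begins with $a$ while $\mu(b)=b^\ell$ begins with $b$, so $\{\mu(a),\mu(b)\}$ is a prefix code: once the image boundaries inside a factor are located, its factorisation into copies of $\mu(a)$ and $\mu(b)$ is unique. I would therefore split on whether $w$ contains the letter $a$. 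If it does not, then $w=b^h$ is a pure run, which is exactly the parenthetical outcome; if it does, I distinguish between $w$ being short (captured by the finite family) and $w$ being long (which will force a desubstitution).

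First I would analyse how maximal $b$-runs arise in $\mu^{i+1}(a)=\mu(\mu^{i}(a))$. Writing $\mu^{i}(a)$ as an alternation of letters $a$ and maximal $b$-runs, a run $b^q$ of $\mu^{i}(a)$ flanked by $a$'s produces in the image the run $b^{k+q\ell}$ (the suffix $b^k$ of the preceding $\mu(a)$ merged with $\mu(b^q)=b^{q\ell}$), again flanked by $a$'s, i.e.\ the circular factor $a\,b^{k+q\ell}\,a$; every other maximal $b$-run of $\mu^{i+1}(a)$ lies inside a single image $\mu(a)=\alpha$ and hence has length at most $m$. Consequently any $b$-run of length exceeding $m$ is a boundary run and pinpoints an image boundary. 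The value $M=\max\{\lfloor (m-(\ell+1)k)/\ell^2\rfloor,0\}$ is precisely the threshold making the longest $b$-run occurring inside $\mu(\alpha)\mu(\beta)^{j}\mu(\alpha)$, namely $b^{(\ell+1)k+j\ell^2}$ (the terminal run $b^{(\ell+1)k}$ of $\mu(\alpha)$ extended by $\mu(\beta)^{j}=b^{j\ell^{2}}$), stay $\le m$ for $j\le M$; hence a circular factor $w$ containing an $a$ but no occurrence of $b^{m+1}$ is necessarily a circular factor of some $\mu(\alpha)\mu(\beta)^{j}\mu(\alpha)$ with $j\le M$. It then remains to treat a $w$ that contains an $a$ together with a run $b^{m+1}$, so that $w=w_1 b^{m+1} w_2$.

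For such a $w$ I would use the long run as an anchor. Since $b^{m+1}$ sits inside a boundary run $a\,b^{k+q\ell}\,a$, its trailing $b$'s up to the following $a$ form the tail $b^{q\ell}=\mu(b^{q})$ of the boundary block, so $w_2$ starts at an image boundary and, by the prefix-code property, factors uniquely as $\mu(v_2)v'$ with $v_2$ a factor of $\mu^{i}(a)$ and $v'$ a proper prefix of $\mu(a)$ or $\mu(b)$. Right-specialness forces $v'=\varepsilon$: otherwise the letter following $w$ would be determined as the next letter of the corresponding image, contradicting that $w$ admits two right extensions. Symmetrically, stripping an initial block $b^{j}$ with $0\le j\le k$ from the prefix $w_1$ yields a unique factorisation, and left-specialness forces $j=k$, so $w$ begins exactly with the suffix $b^k$ of a $\mu(a)$-image. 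This gives $w=b^k\mu(v)$ for a circular factor $v$ of $\mu^{i}(a)$, and pulling the two distinct left extensions and the two distinct right extensions of $w$ back through the prefix code shows that $v$ is both left and right special in $\mu^{i}(a)$, i.e.\ bispecial.

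The delicate point — and the reason the pure-run case $b^h$ must be listed separately precisely when $\ell>1$ — is that desubstitution is \emph{not} forced inside a run of $b$'s: a block $b^{q\ell}$ can be cut into $\mu(b)=b^\ell$ pieces in several ways once it is not flanked by $a$'s, so a purely-$b$ bispecial factor need not be of the form $b^k\mu(v)$ (for $\ell=1$ this ambiguity disappears, which is why the extra case is needed only for $\ell>1$). The main obstacle is therefore making the anchoring argument watertight for $\ell>1$: verifying that the flanking $a$'s genuinely eliminate this ambiguity, and that the threshold $M$ is both necessary (short factors really can be bispecial without being of the form $b^k\mu(v)$) and sufficient (every run exceeding it is a boundary run). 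The residual work — checking the arithmetic $(\ell+1)k+j\ell^2\le m$ and the uniqueness of the induced factorisations of $w_1$ and $w_2$ — is routine bookkeeping once the anchoring is in place.
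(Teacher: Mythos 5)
This is essentially the paper's own proof: the same prefix-code desubstitution anchored at an occurrence of $b^{m+1}$ (which must lie in a run straddling image boundaries), the same use of right-specialness to kill the trailing partial image and of left-specialness to force the leading remainder to be exactly $b^k$, the same threshold arithmetic $(\ell+1)k+j\ell^2\le m$ behind $M$ (which the paper packages as Lemma~\ref{le:factors_b^m+1_general}), and the same separate treatment of pure $b$-runs when $\ell>1$, justified exactly as the paper does by the ambiguity $b\mu(\beta)=\mu(\beta)b$; your additional step of pulling the two left and two right extensions of $w$ back through the code to conclude that $v$ is bispecial is a detail the paper leaves implicit. The only slip is calling $b^{(\ell+1)k+j\ell^2}$ the longest $b$-run inside $\mu(\alpha)\mu(\beta)^j\mu(\alpha)$ — when the longest run of $\alpha$ is internal, $\mu(\alpha)$ itself contains the longer run $b^{k+\ell m}>m$ — but this is immaterial, since what your argument actually uses is that $b^{(\ell+1)k+j\ell^2}$ is the junction run between consecutive $\mu(\alpha)$'s, and that it stays $\le m$ precisely for $j\le M$.
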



\begin{theorem}\label{prop:upper_bound}
Let $\mu\equiv(auab^k,b^\ell)$ be a non-primitive morphism with $k\geq0$, $\ell \geq 1$ with aperiodic fixed point.
Then $r_{bwt}(\mu^i(a))= \mathcal{O}(i)$.    
\end{theorem}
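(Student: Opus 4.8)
<br />

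The plan is to bound $r_{\bwt}(\mu^i(a))$ from above using the right-hand side of Lemma~\ref{le:r_bound_bispecial}, namely $r_{\bwt}(\mu^i(a)) \leq \sum_{u \in BS(\mu^i(a))} e_r(u) + 1$. Since the alphabet is binary, every bispecial circular factor $u$ satisfies $e_r(u) \leq 1$, so the sum is bounded by $|BS(\mu^i(a))| + 1$. Thus it suffices to prove that the number of bispecial circular factors of $\mu^i(a)$ grows only linearly in $i$, i.e. $|BS(\mu^i(a))| = \mathcal{O}(i)$.

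To count the bispecial circular factors, I would use the structural recursion of Lemma~\ref{le:structural_general}. That lemma partitions the bispecial circular factors of $\mu^{i+1}(a)$ into two groups: those that appear as circular factors inside the fixed finite set $\bigcup_{j=0}^{M}\{\mu(\alpha)\mu(\beta)^j\mu(\alpha)\}$, and those of the form $w = b^k\mu(v)$ for some bispecial circular factor $v$ of $\mu^{i}(a)$ (together with the factors $b^h$ when $\ell > 1$). The first group has a size bounded by a constant $c$ depending only on $\mu$ (it is a fixed finite set of words, so it contains a bounded number of distinct bispecial circular factors), and the extra factors $b^h$ contribute at most one new bispecial factor per iteration, since the maximal run of $b$'s in $\mu^{i+1}(a)$ exceeds that in $\mu^i(a)$ by a bounded amount. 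Crucially, the map $v \mapsto b^k\mu(v)$ is injective, so the second group has cardinality at most $|BS(\mu^i(a))|$.

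Putting these together yields a recurrence of the form $|BS(\mu^{i+1}(a))| \leq |BS(\mu^i(a))| + c'$ for some constant $c'$ depending only on $\mu$, which by induction gives $|BS(\mu^i(a))| = \mathcal{O}(i)$. Combining with the binary-alphabet bound $e_r(u) \leq 1$ and Lemma~\ref{le:r_bound_bispecial} then gives $r_{\bwt}(\mu^i(a)) \leq |BS(\mu^i(a))| + 1 = \mathcal{O}(i)$, which is the claim.

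The main obstacle I anticipate is making the recurrence genuinely additive rather than multiplicative. One must argue carefully that the map $v \mapsto b^k\mu(v)$ is injective on bispecial circular factors (so that distinct bispecial factors at level $i$ do not collapse, and no two distinct $v$ produce the same $w$), and simultaneously that every ``new'' bispecial factor of $\mu^{i+1}(a)$ that is not of the inductive form $b^k\mu(v)$ lives in the fixed set from Lemma~\ref{le:structural_general} and hence is drawn from a pool of bounded size independent of $i$. The $b^h$ runs require separate attention: one has to verify that the number of genuinely bispecial pure-$b$ runs grows by at most a constant per step, using that the longest run length grows multiplicatively but contributes only one new right/left-special value per iteration. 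Once these injectivity and boundedness facts are in place, the linear bound follows by a routine induction on $i$.
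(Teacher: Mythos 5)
Your overall strategy --- the bound $r_{\bwt}(\mu^i(a)) \leq \sum_{u\in BS}e_r(u)+1$ from Lemma~\ref{le:r_bound_bispecial} combined with the structural recursion of Lemma~\ref{le:structural_general} and an additive recurrence for $|BS(\mu^i(a))|$ --- is essentially the paper's argument in the case $\ell=1$, and there it works (your injectivity concern for $v\mapsto b^k\mu(v)$ is harmless, since $\{\mu(a),\mu(b)\}$ is a prefix code). The genuine gap is the case $\ell>1$, precisely at the point you flag as ``requiring separate attention'': the claim that the pure-$b$ bispecial factors contribute at most a constant per iteration is false. When $\ell>1$, the length $m'_i$ of the longest run of $b$'s in $\mu^i(a)$ grows geometrically ($m'_{i+1}\geq \ell m'_i$, because $\mu(b^{m'_i})=b^{\ell m'_i}$), and \emph{every} word $b^h$ with $0\leq h< m'_i$ is a bispecial circular factor: writing $ab^{m'_i}a = ab^h\cdot b^{m'_i-h}a = ab^{m'_i-h}\cdot b^h a$ shows that both $ab^hb$ and $bb^ha$ are circular factors. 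Hence $|BS(\mu^i(a))| = \Omega(\ell^i)$, each such $b^h$ has $e_r(b^h)=1$, and the right-hand side of Lemma~\ref{le:r_bound_bispecial} is itself exponential in $i$. So the proposed recurrence $|BS(\mu^{i+1}(a))|\leq |BS(\mu^i(a))|+c'$ is simply false, and no refinement of the counting of $|BS|$ can rescue the approach: the inequality $r_{\bwt}\leq |BS|+1$ is too weak to give the theorem here.

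This is exactly why the paper's proof argues differently for $\ell>1$. It keeps your decomposition for the bispecial factors containing an $a$ (the constant-size set $BS_0$ coming from $\bigcup_{j=0}^{M}\{\mu(\alpha)\mu(\beta)^j\mu(\alpha)\}$ and the images $b^k\mu(v)$, which number $\mathcal{O}(i)$), but for the exponentially many factors $b^h$ it switches from counting bispecial factors to counting which of them can actually be \emph{charged} by a run boundary of the BWT: each letter change in $\bwt(\mu^i(a))$ corresponds to the longest common prefix of two consecutive sorted conjugates. If $h\notin R_i$, i.e.\ $ab^ha$ is not a circular factor, then every conjugate beginning with $b^ha$ ends in $b$, so the BWT segment over such conjugates is a solid run of $b$'s; splitting the complement $\oR_i$ into maximal intervals of consecutive integers (there are at most $|R_i|$ of them) shows that only $\mathcal{O}(|R_i|)$ of the words $b^h$ can be the longest common prefix at a letter change. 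Since $|R_i|=\Theta(i)$ by Proposition~\ref{prop:linear_R_i}, this yields $r_{\bwt}(\mu^i(a)) \leq |BS_0|+|BS_\mu|+2|R_i| = \mathcal{O}(i)$. Without this $R_i$-interval argument (or some substitute showing that almost all of the $b^h$ induce no letter change), the case $\ell>1$ of the theorem remains unproved in your proposal.
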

\begin{proof}
(Sketch) The case $k=0$ and $\ell=1$ follows from Proposition \ref{prop:struct_aperiodic_non-growing} and Proposition \ref{prop:r_upper_bound_factorComplexity}. In case of binary alphabet, Lemma \ref{le:r_bound_bispecial} implies that, for any $i\geq 1$, $r_{\bwt}(\mu^i(a))\leq |BS(\mu^i(a))|+1$. 
If $k \geq 1$ and $\ell = 1$, then $|BS(\mu^i(a))| = \mathcal{O}(i)$ by using Lemma \ref{le:structural_general}. If $k \geq 0$ and $\ell > 1$, $|BS(\mu^i(a))|$ grows exponentially since $BS$ contains the subset $BS_b(\mu^i(a)) = \{b^h \mid b^h, b^{h+1}\in \mathcal{C}(\mu^i(a))\}$ by Lemma \ref{le:structural_general}. However, only the elements of a subset of $BS_b(\mu^i(a))$ size at most $2|R_i|$ produce an increase of $1$ for  $r_{\bwt}(\mu^i(a))$. The thesis follows from Proposition \ref{prop:linear_R_i}.
\end{proof}

\begin{theorem}
\label{prop:lowerbound_r_Rn}
Let $\mu\equiv(auab^k,b^\ell)$ be a non-primitive morphism with $k\geq0$, $\ell \geq 1$ and $x=\mu^\infty(a)$ is aperiodic.
Then $r_{\bwt}(\mu^i(a)) = \Omega(|R_i|)$.
Moreover, when $\mu$ is not growing with $f_x(n)=\Theta(n^2)$ or $\mu$ is growing, then $r_{\bwt}(\mu^i(a)) = \Omega(i)$. 
\end{theorem}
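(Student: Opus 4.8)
The plan is to prove the first inequality $r_{\bwt}(\mu^i(a)) = \Omega(|R_i|)$ directly from the structure of the Burrows--Wheeler matrix, and then obtain the ``moreover'' part by feeding in the estimate $|R_i| = \Theta(i)$ of Proposition \ref{prop:linear_R_i}. The conceptual starting point is the lower bound of Lemma \ref{le:r_bound_bispecial}: since the alphabet is binary, every bispecial circular factor $u$ satisfies $e_\ell(u)=e_r(u)=1$, so a run boundary of $\bwt$ is forced by each left-special run $b^j$ whose left context genuinely changes. Concretely, I would read these boundaries off the sorted list of cyclic rotations of $\mu^i(a)$.

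First I would group the rotations by their leading-$b$ count: for $j\geq 1$ let $G_j$ be the set of rotations whose prefix is exactly $b^j a$ (this prefix is well defined because $\mu^i(a)$ contains at least one $a$, so every maximal $b$-run is circularly flanked by $a$'s). Each $G_j$ is a contiguous block of the sorted matrix, and the blocks occur in increasing order of $j$ up to $j=m:=\max R_i$. The key observation concerns the last letter of a rotation in $G_j$, which is precisely its contribution to $\bwt(\mu^i(a))$: a rotation beginning at the first letter of a maximal run of length exactly $j$ is circularly preceded by $a$, whereas a rotation beginning inside a maximal run of length strictly greater than $j$ is circularly preceded by $b$. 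Hence $G_j$ contains an $a$-labelled rotation iff $j\in R_i$, and a $b$-labelled rotation iff $j<m$.

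Therefore, for every $j\in R_i$ with $j<m$ the block $G_j$ carries both labels, forcing at least one equal-letter-run boundary strictly inside $G_j$; boundaries lying in distinct blocks are distinct. Writing $R_i^{+}=R_i\cap\{1,2,\dots\}$, this yields $r_{\bwt}(\mu^i(a)) \geq 1 + (|R_i^{+}|-1) = |R_i^{+}| \geq |R_i|-1 = \Omega(|R_i|)$, which is the first claim. For the ``moreover'' part I would split into the two listed regimes: if $\mu$ is growing, then Proposition \ref{prop:struct_aperiodic_growing} forces $\ell\geq 2$, while if $\mu$ is non-growing with $f_x(n)=\Theta(n^2)$, then Proposition \ref{prop:struct_aperiodic_non-growing} forces $\ell=1$ and $k\geq 1$; in both cases $k+\ell>1$, so Proposition \ref{prop:linear_R_i} gives $|R_i|=\Theta(i)$, and hence $r_{\bwt}(\mu^i(a))=\Omega(i)$.

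The delicate step is the block analysis above: one must check carefully that the sets $G_j$ are genuinely lexicographic intervals and that the two boundary cases ($0\in R_i$, arising from the factor $aa$, and $j=m$, which supplies no longer run) cost only the additive constant absorbed by $\Omega(|R_i|)$. I expect the main conceptual obstacle to be reconciling this counting with the purely bispecial-factor language of Lemma \ref{le:r_bound_bispecial}: the runs $b^j$ with $j\in R_i$ and $j\leq m-2$ are strictly bispecial rather than weakly bispecial, so the weakly bispecial lower bound alone cannot reach $\Omega(|R_i|)$. One can still extract $\Omega(i)$ purely through weakly bispecial factors by iterating the map $v\mapsto b^k\mu(v)$ of Lemma \ref{lemma:b^(k)f(v)-general} and verifying it preserves the ``diagonal'' extension pattern, thereby producing a chain of $\Omega(i)$ distinct weakly bispecial factors; but that route attains the sharper bound $\Omega(|R_i|)$ only after invoking $|R_i|=\Theta(i)$, which is why I would prefer the direct matrix argument for the first statement.
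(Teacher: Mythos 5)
Your proposal is correct and takes essentially the same approach as the paper: the paper's proof likewise partitions the lexicographically sorted rotations into blocks according to their $b^{j}a$ prefix, observes that a block contains an $a$-labelled rotation precisely when $j\in R_i$ and a $b$-labelled one whenever $j$ is smaller than the longest $b$-run, and settles the ``moreover'' part via Propositions \ref{prop:struct_aperiodic_non-growing}, \ref{prop:struct_aperiodic_growing} and \ref{prop:linear_R_i}. The only divergence is in the final bookkeeping: the paper extracts the alternating subsequence $(ab)^{\lceil (|R_i|-1)/2\rceil}a$ from the BWT using alternate elements of $R_i$ (letters taken from distinct blocks), whereas you place both letters inside a single block $G_j$ and count one internal run boundary per block, a slightly tighter and cleaner version of the same count.
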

\begin{proof}
(Sketch) Let $R_i=\{q_1<q_2<\ldots<q_{|R_i|}\}$ the set of non-negative integers such that $ab^{q_j}a\in \mathcal{C}(\mu^i(a))$. For any $q_j\in R_i$, we can consider the block $X_j$ of lexicographically sorted conjugates starting with $b^{q_j}a$. Among the corresponding characters in $\bwt(\mu^i(a))$, at least one occurrence of the letter $a$ is included. For any $0\leq j\leq \frac{\lceil|R_i|-1\rceil}{2}$, let us consider the blocks $X_{2j+1}$ and $X_{2j+3}$. Since $q_{2j+1}<q_{2j+3}-1$, there are at least $|X_{2j+3}|$ lexicographically sorted conjugates starting with $b^{q_{2j+3}-1}a$ and ending with $b$. The second part of the thesis is proved by using Propositions \ref{prop:struct_aperiodic_non-growing}, \ref{prop:struct_aperiodic_growing} and \ref{prop:linear_R_i}. 
\end{proof}

\begin{remark}
    Note that Theorem \ref{prop:upper_bound} and Theorem \ref{prop:lowerbound_r_Rn} also hold when $b<a$. 
\end{remark}

The following lemma and corollary allow us to states that, for morphisms focused in this subsection, $i=\Theta(\log n)$, where $n=\mu_a(i)=|\mu^i(a)|$.

\begin{lemma}\label{lb}
Let  $\mu \equiv (\alpha,\beta)$ a binary morphism prolongable on $a$. Let $n_a=|\alpha|_a, n_b=|\alpha|_b, m_a=|\beta|_a, m_b=|\beta|_b$ and $\alpha_i=|\mu^{i-1}(a)|_a, \beta_i=|\mu^{i-1}(a)|_b$. It holds that:
$$r(\mu^i(a))\geq \alpha_i r(\alpha) + \beta_i r(\beta) - |\mu^{i-1}(a)|+1
$$
with 
$$
\begin{array}{l}
\alpha_i = n_a \alpha_{i-1} + m_an_b\alpha_{i-2}+m_am_bn_b\alpha_{i-3}+m_am_b^2n_b\alpha_{i-4}+\cdots+m_am_b^{i-3}n_b \\
\beta_i=m_b\beta_{i-1}+m_an_b\beta_{i-2}+m_an_an_b\beta_{i-3}+m_an_a^2n_b\beta_{i-4}+\cdots+m_an_a^{i-4}n_b\beta_2+ n_a^{i-2}n_b.
\end{array}
$$
\end{lemma}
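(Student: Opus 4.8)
The plan is to treat the statement as two logically independent claims and prove them separately: first the run-count inequality for $r(\mu^i(a))$, which rests on how equal-letter runs behave under concatenation of morphic images, and then the two closed-form expressions for $\alpha_i$ and $\beta_i$, which are derived identities coming from the elementary letter-counting recurrences attached to $\mu$.

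For the inequality, I would write $w=\mu^{i-1}(a)=w_1w_2\cdots w_m$ with $m=|\mu^{i-1}(a)|$ and each $w_j\in\{a,b\}$, so that $\mu^i(a)=\mu(w)=\mu(w_1)\mu(w_2)\cdots\mu(w_m)$ is a concatenation of $m$ blocks, each equal to $\alpha$ or to $\beta$. The key principle is that when a word is built by concatenating $m$ blocks, its run count equals the sum $\sum_j r(\mu(w_j))$ of the internal run counts, minus the number of block junctions at which the last letter of one block coincides with the first letter of the next; at such a junction two runs merge into one, and at every other junction nothing merges. Since there are exactly $m-1$ junctions and each can cause a decrease of at most one, this gives $r(\mu^i(a))\geq \sum_j r(\mu(w_j))-(m-1)$. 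Grouping the blocks by letter yields $\sum_j r(\mu(w_j))=|w|_a\,r(\alpha)+|w|_b\,r(\beta)=\alpha_i\,r(\alpha)+\beta_i\,r(\beta)$, and since $m=|\mu^{i-1}(a)|$, this is exactly the claimed bound $r(\mu^i(a))\geq \alpha_i r(\alpha)+\beta_i r(\beta)-|\mu^{i-1}(a)|+1$.

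For the two recurrences, I would start from the elementary relations obtained by counting letters in $\mu^k(a)=\mu(\mu^{k-1}(a))$, namely $\alpha_i=n_a\alpha_{i-1}+m_a\beta_{i-1}$ and $\beta_i=n_b\alpha_{i-1}+m_b\beta_{i-1}$, together with the initial values $\alpha_1=1$, $\beta_1=0$, $\beta_2=n_b$. The closed form for $\alpha_i$ is then obtained by repeatedly substituting the $\beta$-recurrence into the $\alpha$-recurrence: substituting $\beta_{i-k}=n_b\alpha_{i-k-1}+m_b\beta_{i-k-1}$ peels off a new $\alpha$-term, making the coefficient of $\alpha_{i-k}$ equal to $m_am_b^{k-2}n_b$ for $k\geq 2$, while pushing the residual $\beta$-term one index lower. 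Iterating until the residual term reaches $\beta_2=n_b$ (equivalently, until the last $\alpha$-term becomes $\alpha_1=1$) produces the final summand $m_am_b^{i-3}n_b$. Symmetrically, the expression for $\beta_i$ follows by substituting the $\alpha$-recurrence into the $\beta$-recurrence, the coefficient of $\beta_{i-k}$ becoming $m_an_a^{k-2}n_b$ for $k\geq 2$ and the unfolding terminating at $\alpha_1=1$, which yields the terminal summand $n_a^{i-2}n_b$.

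I expect the only delicate point to be the index bookkeeping in these two unfoldings: one must verify that the geometric coefficients appear at exactly the right indices and that the recursion terminates precisely at $\alpha_1=1$ and $\beta_2=n_b$, so that no spurious boundary term survives. The cleanest way to make this fully rigorous is a simultaneous induction on $i$ checking both closed forms against the two base recurrences at once. By contrast, the run-count inequality is immediate once the \emph{at most one merge per junction} principle is stated, so the combinatorial content of the lemma lies entirely in that principle and in the correct telescoping of the letter-counting recurrences.
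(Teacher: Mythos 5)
Your proposal is correct and follows essentially the same route as the paper: the inequality is obtained by viewing $\mu^i(a)$ as a concatenation of $\alpha_i$ copies of $\alpha$ and $\beta_i$ copies of $\beta$ with at most one run-merge at each of the $|\mu^{i-1}(a)|-1$ junctions, and the closed forms come from unfolding the same letter-counting recurrences $\alpha_i = n_a\alpha_{i-1}+m_a\beta_{i-1}$, $\beta_i = n_b\alpha_{i-1}+m_b\beta_{i-1}$ with $\alpha_1=1$, $\beta_1=0$ by repeated substitution down to $\alpha_1$ and $\beta_2$. The index bookkeeping you flag as the delicate point matches the paper's telescoping exactly, so no gap remains.
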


From Lemma \ref{lb} and Proposition  \ref{prop:exponential_growth_run}, the following corollary follows. 

\begin{corollary}
Let $\mu=(\alpha,\beta)$ a binary morphism  prolongable on $a$. 
Then, the growth of $\mu_a(i)$ is exponential except when $\alpha=ab^p$, with $p\geq 1$, and $\beta=b$, where $\mu_a(i)= \Theta(i)$. 
\end{corollary}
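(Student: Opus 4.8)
The plan is to reduce the statement to a short case analysis on the shapes of $\alpha=\mu(a)$ and $\beta=\mu(b)$, exploiting that prolongability forces $\alpha=av$ with $v\in A^+$, so that $|\alpha|\geq 2$ and every $\mu^i(a)$ begins with $a$. First I would record the two bounds that frame the problem: on the one hand $|\mu^{i+1}(a)|\geq 2\,|\mu^i(a)|_a+|\mu^i(a)|_b=|\mu^i(a)|+|\mu^i(a)|_a\geq |\mu^i(a)|+1$, so $\mu_a(i)$ is at least linear and can never be $\Theta(1)$; on the other hand the general growth theory gives $\mu_a(i)=\mathcal{O}(\rho^i)$ for some $\rho>1$. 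Hence it suffices to separate the linear regime from the exponential one, and the whole argument is organized around the number $n_a=|\alpha|_a$ of occurrences of $a$ in $\alpha$.

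The easy exponential cases are those in which $a$ eventually proliferates. If $n_a\geq 2$, then $|\mu^i(a)|_a\geq n_a\,|\mu^{i-1}(a)|_a$, whence $|\mu^i(a)|_a\geq n_a^{\,i}\geq 2^i$ and $\mu_a(i)$ is exponential; equivalently this is read off the recurrence for $\alpha_i$ in Lemma~\ref{lb}, whose leading term $n_a\alpha_{i-1}$ dominates. If instead $n_a=1$ but $\beta$ contains at least one $a$ (that is, $m_a\geq 1$), then $\alpha=ab^p$ with $p\geq 1$ and $\mu^2(a)=ab^p\beta^p$ already contains $1+p\,m_a\geq 2$ occurrences of $a$; since moreover $\alpha$ contains a $b$, Proposition~\ref{prop:exponential_growth_run} applies with $t=2$ and yields that $r(\mu^i(a))$ grows exponentially, so that $\mu_a(i)=|\mu^i(a)|\geq r(\mu^i(a))$ is exponential as well.

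It remains to treat the configurations in which $a$ never occurs twice, which are exactly $\alpha=a^{n_a}$ (no $b$ in $\alpha$) and $\alpha=ab^p$ with $\beta=b^{m_b}$ (no $a$ anywhere outside the initial one). In the first case $\mu^\infty(a)=a^\infty$ and $\mu_a(i)=n_a^{\,i}$ with $n_a\geq 2$, again exponential. In the second case $\mu^i(a)=ab^{q_i}$, and specializing the recurrence of Lemma~\ref{lb} (where $m_a=0$ and $n_a=1$ kill every mixed term) leaves the one-term recurrence $q_i=m_b\,q_{i-1}+p$: for $m_b=1$ this is $q_i=q_{i-1}+p$, so $\mu_a(i)=1+ip=\Theta(i)$, precisely the announced exceptional case $\alpha=ab^p,\ \beta=b$; for $m_b\geq 2$ the recurrence forces $q_i\geq m_b^{\,i-1}p$, hence exponential growth.

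The main obstacle is this last subcase $\beta=b^{m_b}$, and in particular the borderline between $m_b=1$ and $m_b\geq 2$. Here the run count is useless: for every $i$ one has $r(\mu^i(a))=r(ab^{q_i})=2$, so Proposition~\ref{prop:exponential_growth_run} and all run-based tools are silent, and the distinction between the linear and the exponential regime can only be detected through the explicit length (Parikh) recurrence of Lemma~\ref{lb}. Conceptually this is the point where a Jordan block at eigenvalue $1$ appears: one should check that the only parameter choice producing trace $2$ and determinant $1$ in the incidence matrix $\left(\begin{smallmatrix} n_a & m_a\\ n_b & m_b\end{smallmatrix}\right)$, compatibly with prolongability and $|\alpha|\geq 2$, is $n_a=m_b=1,\ m_a=0,\ n_b\geq 1$, i.e. $\alpha=ab^p,\ \beta=b$. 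Verifying that every other admissible choice yields a dominant eigenvalue strictly larger than $1$ (equivalently, strictly increasing length rules out a diagonalizable Perron root equal to $1$) is the one place where the bookkeeping must be done carefully, but it follows directly from the recurrences already supplied by Lemma~\ref{lb}.
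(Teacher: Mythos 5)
Your proof is correct, and it rests on the same two pillars the paper itself points to---the Parikh recurrences of Lemma~\ref{lb} and Proposition~\ref{prop:exponential_growth_run}---but your case organization genuinely differs from the paper's own argument. The paper splits on growing versus non-growing: for growing $\mu$ (equivalently, $\beta\neq b$) it invokes the general asymptotic theory of growing morphisms, namely $\mu_a(i)=\Theta(i^{e_a}p_a^i)$ with $p_a>1$, so exponential growth is immediate by citation; for non-growing $\mu$ (i.e.\ $\beta=b$) it checks directly that $n_a=|\alpha|_a\geq 2$ gives $\Theta(n_a^i)$ while $n_a=1$ gives $\Theta(i)$. You instead split on $(n_a,m_a,m_b)$ and keep everything elementary: $n_a\geq 2$ by direct counting of $a$'s, the case $n_a=1$, $m_a\geq 1$ via Proposition~\ref{prop:exponential_growth_run} applied with $t=2$ together with the inequality $\mu_a(i)\geq r(\mu^i(a))$, and the case $n_a=1$, $m_a=0$ via the scalar recurrence $q_i=m_bq_{i-1}+p$, which isolates the exceptional pair $\alpha=ab^p$, $\beta=b$ exactly at $m_b=1$. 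Your route buys self-containedness (no appeal to the asymptotic classification of growing morphisms), at the cost of a slightly longer case analysis; both arguments correctly identify the unique linear-growth case. Two cosmetic remarks: your third paragraph mislabels $\alpha=a^{n_a}$ with $n_a\geq 2$ as a configuration ``in which $a$ never occurs twice'' (it is simply a repeat of your first case, and harmless), and the closing paragraph on the incidence matrix and Jordan blocks is dispensable, since your three cases are already exhaustive once prolongability and non-erasingness force $\alpha=av$ with $v\in A^+$ and $\beta\in A^+$.
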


The goal of the following result is to evaluate the $BWT$-clustering ratio of the finite words generated by iterating a binary morphism. The proof can be derived from Lemma \ref{lb},  Proposition \ref{prop:struct_ult_per}, Corollary \ref{cor:primitive_morph}, and Theorem \ref{prop:upper_bound}.

\begin{theorem}\label{th:highlyBWTcompres}
Let $\mu\equiv(\alpha,\beta)$ be a binary morphism prolongable on $a$ such that $\mu\not \equiv(ab^m,b^n)$ 
for any $m\geq 1$, $n\geq 1$. Then $\lim_{i\to\infty}\rho(\mu^i(a))=0$, consequently $\mu$ is $BWT$-highly compressible.
\end{theorem}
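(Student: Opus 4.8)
The plan is to establish $\lim_{i\to\infty}\rho(\mu^i(a))=0$ by bounding $r_{\bwt}(\mu^i(a))$ from above and $r(\mu^i(a))$ from below, splitting into three exhaustive cases according to whether $\mu$ is primitive and, if not, whether $\mu^\infty(a)$ is ultimately periodic or aperiodic. Throughout I would assume that $\mu^\infty(a)$ genuinely uses both letters (otherwise the fixed point is the unary word $a^\infty$ and the statement is trivial), so that $\alpha=\mu(a)$ contains at least one $b$ and hence $r(\alpha)\ge 2$. The case in which $\mu$ is primitive is immediate: Corollary \ref{cor:primitive_morph} already gives $\lim_{i\to\infty}\rho(\mu^i(a))=0$, so from here on I would take $\mu$ non-primitive.

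Next I would treat the ultimately periodic case. By Proposition \ref{prop:struct_ult_per} such a $\mu$ must have one of four shapes, and the shape excluded by hypothesis, $\mu\equiv(ab^m,b^n)$, is exactly case 2. In each of the three remaining shapes $\alpha$ contains both letters, so $r(\alpha)\ge 2$, and $\alpha_i=|\mu^{i-1}(a)|_a\to\infty$. Lemma \ref{lb} then gives $r(\mu^i(a))\ge \alpha_i\,(r(\alpha)-1)+1\to\infty$, while for ultimately periodic words $r_{\bwt}(\mu^i(a))=\Theta(1)$ (the observation already drawn from Proposition \ref{prop:struct_ult_per}); hence $\rho(\mu^i(a))\to 0$.

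Then I would treat the aperiodic case. By Propositions \ref{prop:struct_aperiodic_non-growing} and \ref{prop:struct_aperiodic_growing}, every non-primitive $\mu$ with aperiodic fixed point can be written as $\mu\equiv(auab^k,b^\ell)$ with $k\ge 0$ and $\ell\ge 1$ (the non-growing linear shape $(auba^k,b)$ being rewritten with trailing exponent $0$, since it ends in $a$). Theorem \ref{prop:upper_bound} then yields $r_{\bwt}(\mu^i(a))=\mathcal{O}(i)$. On the lower side, each such $\alpha$ has $|\alpha|_a\ge 2$, so $\alpha_i=|\mu^{i-1}(a)|_a\ge |\alpha|_a^{i-1}\ge 2^{i-1}$ grows exponentially, and Lemma \ref{lb} gives $r(\mu^i(a))\ge\alpha_i\,(r(\alpha)-1)+1=\Omega(2^i)$. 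Therefore $\rho(\mu^i(a))=\mathcal{O}(i)/\Omega(2^i)\to 0$, which completes all cases.

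The step I expect to be the main obstacle is the bookkeeping that isolates the excluded family and links it to the divergence of $r$. Concretely, I must verify that $\mu\equiv(ab^m,b^n)$ is precisely the class of prolongable binary morphisms (with binary fixed point) for which $\alpha_i=|\mu^{i-1}(a)|_a$ stays bounded, equivalently for which $r(\mu^i(a))$ fails to diverge; it is exactly this failure that keeps $\rho$ bounded away from $0$ there, so the exclusion must match the structural classifications of Propositions \ref{prop:struct_ult_per}, \ref{prop:struct_aperiodic_non-growing} and \ref{prop:struct_aperiodic_growing} on the nose. Confirming that $|\alpha|_a\ge 2$ (hence that $\alpha_i$ grows exponentially) in every aperiodic non-primitive shape, and that each such shape genuinely satisfies the hypotheses of Theorem \ref{prop:upper_bound}, are the remaining points that require care.
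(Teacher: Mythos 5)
Your proof is correct and follows essentially the same route as the paper's: the identical three-way case split (primitive handled by Corollary~\ref{cor:primitive_morph}; ultimately periodic handled by Proposition~\ref{prop:struct_ult_per} together with $r_{\bwt}=\Theta(1)$ and a divergent lower bound on $r$; aperiodic non-primitive handled by Theorem~\ref{prop:upper_bound} for the $\mathcal{O}(i)$ upper bound and by Lemma~\ref{lb} with Propositions~\ref{prop:struct_aperiodic_non-growing} and~\ref{prop:struct_aperiodic_growing} for the $\Omega(2^i)$ lower bound), the only cosmetic difference being that you invoke Lemma~\ref{lb} where the paper invokes Proposition~\ref{prop:exponential_growth_run} in the ultimately periodic case. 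The one blemish is your claim that the unary-fixed-point case ($\mu(a)=a^p$, $p\geq 2$, so $\mu^\infty(a)=a^\infty$) is ``trivial'': there $\rho(\mu^i(a))\equiv 1$, so that case is actually a counterexample to the theorem as literally stated rather than a trivial instance of it---but this is an implicit exclusion the paper also makes silently (its appeal to Proposition~\ref{prop:exponential_growth_run} requires $|\mu(a)|_b\geq 1$), so your explicit standing assumption matches the paper's intent.
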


\section{Conclusions and further work}
In this paper, we have studied the number $r_{\bwt}(w)$ of equal-letter runs  produced by the $BWT$, when $w=\mu^i(a)$ of length $n$ is the binary word generated after the $i$-th iteration of a morphism $\mu$ prolongable on the letter $a$ with an aperiodic fixed point $x=\mu^\infty(a)$. We have proved that $r_{\bwt}(w)$ is $\Theta(\log n)$ when $\mu$ is a non-primitive growing morphism or a non-primitive not-growing morphism such that  $f_x(n)=\Theta(n^2)$. It is still open the problem to characterize the primitive morphisms such that $r_{\bwt}(w)$ is $\Omega(\log n)$. This could allow a tight lower bound to be deduced even for non-primitive not growing morphisms such that $f_x(n)=\Theta(n)$.  Moreover, we are interested to extend these bounds also for purely morphic finite words on larger alphabets, and also for generic morphic finite words.

%
%

\bibliographystyle{splncs04}
\bibliography{biblio}

\newpage
\section*{Appendix}

It has been proved that an upper bound on the number of equal-letter runs of $BWT$ can be derived by using the measure $\delta$ \cite{KociumakaNP20,ChristiansenEKN21}, defined as follows: given a finite word $w\in A^*$, $\delta(w) = \max\{f_w(k)/k, 1\leq k\leq |w|\}$ .

\begin{theorem}[\cite{KempaK20}]
\label{th:r_as_delta}
Let $w$ be a word over the alphabet $A$. Then $r_{\bwt}(w) = \mathcal{O}(\delta(w) \log \delta(w) \cdot \max\{1, \log \frac{|w|}{\delta(w) \log \delta(w)}\})$.
\end{theorem}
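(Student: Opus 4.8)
The plan is to prove this bound, which is the celebrated resolution of the BWT conjecture, via the machinery of \emph{string synchronizing sets}. Throughout write $n=|w|$ and $\delta=\delta(w)$, and recall from the definition that $f_w(k)\le \delta k$ for every length $k$, so that $\delta$ directly controls the number of distinct (circular) factors of each length. The first step is to re-express the quantity to be bounded: since $\bwt(w)$ is the concatenation of the final letters of the lexicographically sorted cyclic rotations of $w$, we have $r_{\bwt}(w)=1+|\{i : \text{the letter cyclically preceding the } (i{-}1)\text{-th and } i\text{-th sorted rotations differ}\}|$. Thus $r_{\bwt}(w)$ counts the \emph{boundaries} at which the preceding letter changes as we scan the sorted rotations, and the goal is to show that almost all of these boundaries can be charged to a small amount of combinatorial data measured by $\delta$.

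The core of the argument is a decomposition over a geometric sequence of scales $\tau_j=2^{\,j}$ for $0\le j\le \log n$. For each scale I would construct a $\tau_j$-synchronizing set $S_j$, a set of positions enjoying two properties: \emph{consistency}, meaning that whether a position lies in $S_j$ depends only on its length-$\Theta(\tau_j)$ window, and \emph{density}, meaning that a position can be far from $S_j$ only when it sits inside a substring of period at most $\tau_j/3$, i.e.\ a highly periodic (run-like) region. The crucial quantitative input is that the number of distinct synchronizing windows at scale $\tau_j$ is bounded by the number of distinct factors of length $\Theta(\tau_j)$, hence by $\mathcal{O}(\delta\tau_j)$; after the standard sparsification of the synchronizing set this collapses to $\mathcal{O}(\delta\log\delta)$ distinct ``identifiers'' that actually influence the relative order of rotations at that scale.

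With this in hand I would classify each boundary in the sorted list of rotations as either \emph{synchronized} or \emph{periodic}. For a synchronized boundary, the two neighbouring rotations share a synchronizing position at a common offset inside a bounded window, so the event ``the cyclically preceding letters differ'' is determined entirely by one of the $\mathcal{O}(\delta\log\delta)$ distinct identifiers present at that scale; summing the distinct contributions over the $\mathcal{O}\!\left(\log\frac{n}{\delta\log\delta}\right)$ relevant scales yields the claimed product $\mathcal{O}\!\left(\delta\log\delta\cdot\max\{1,\log\frac{n}{\delta\log\delta}\}\right)$. For a periodic boundary, both rotations lie in regions of small period; here one invokes the separate observation that the number of maximal highly-periodic runs of each relevant scale is itself $\mathcal{O}(\delta)$, so that their boundary contributions are absorbed within the same budget.

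The main obstacle, and the delicate part of the accounting, is precisely this periodic/synchronized dichotomy. Long runs with small period create many adjacent rotations whose order is \emph{not} resolved by any single synchronizing set, and one must argue separately that the number of distinct ``run types'' together with the transitions into and out of periodic regions is controlled by $\delta$ rather than by the naive $\mathcal{O}(\delta\tau_j)$ count of windows. Two further points require care: establishing that the per-scale cost genuinely collapses from $\mathcal{O}(\delta\tau_j)$ to $\mathcal{O}(\delta\log\delta)$ \emph{uniformly} across scales, and ensuring that the scales telescope into a single factor $\log\frac{n}{\delta\log\delta}$ rather than $\log n$, which relies on the fact that for lengths exceeding roughly $n/\delta$ the number of distinct factors is capped by $n$ and hence no longer contributes new identifiers. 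Once these two counting lemmas are in place, the bound follows by summing the synchronized and periodic contributions.
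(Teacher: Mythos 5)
You should first note what the benchmark actually is here: the paper does \emph{not} prove this statement. Theorem~\ref{th:r_as_delta} is imported verbatim from \cite{KempaK20} and is stated in the appendix only as a black-box tool for deriving Proposition~\ref{prop:r_upper_bound_factorComplexity}; there is no internal proof to match your attempt against. What you have written is, in outline, a faithful summary of the strategy of the original Kempa--Kociumaka proof (string synchronizing sets, a geometric hierarchy of scales $\tau_j$, the dichotomy between synchronized and highly periodic boundaries, and sparsification to few ``identifiers'' per scale), so the approach is the right one --- but it is the approach of the cited paper, not an alternative to anything in this one.

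The genuine gap is that your text is a roadmap rather than a proof: the two claims you yourself flag as delicate are precisely the substantive content of \cite{KempaK20}, and neither is established. Concretely: (i) the collapse of the per-scale cost from $\mathcal{O}(\delta\tau_j)$ distinct windows to $\mathcal{O}(\delta\log\delta)$ identifiers, uniformly in $j$, requires constructing $\tau$-synchronizing sets of the right density \emph{and} proving that the order of adjacent rotations, together with the event ``the cyclically preceding letters differ,'' is determined by a single identifier at the scale of the longest common prefix of the two rotations --- you assert this determination but give no argument; (ii) the periodic case is not a ``separate observation'': bounding the boundary contributions of long small-period regions (via Lyndon roots / run structure) by $\mathcal{O}(\delta)$ per scale, and controlling transitions into and out of such regions, is a multi-page argument in the original paper, and your sketch replaces it with the unproved claim that the number of ``run types'' is $\mathcal{O}(\delta)$. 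The telescoping of scales to the factor $\log\frac{n}{\delta\log\delta}$ (rather than $\log n$), including the treatment of boundaries whose LCP is close to $n$, is likewise only gestured at. As a blind proof the attempt is therefore incomplete at exactly the hard steps; within the present paper the correct treatment is the one the authors adopt, namely citing \cite{KempaK20}.
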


\setcounter{myCounter}{\getrefnumber{prop:r_upper_bound_factorComplexity}}
\begin{proofofProp}
Let $w_i = \varphi^i(a)$ be the $i$-th iterate of the morphism $\varphi$ on the symbol $a$.
By \cite{rozenberg1980mathematical}, we know that the growth of any morphism is $\mathcal{O}(\rho_a^i)$, for some $\rho_a>1$.
Moreover, we can use Theorem \ref{th:r_as_delta} to derive a bound on $r_\bwt(w_i)$, depending on the behaviour of $\delta(w_i)$ in the following three cases:

\textit{1.} $f_x(n) = \Theta(n)$: We have that $f_{w_i}(k) \leq f_{x}(k) \leq t \cdot k$ for any $i\geq 0$, $k > 0$ and for some $t > 0$.
    This implies that $\delta(w_i) \leq t$, for any $i \geq 0$.
    Moreover, $\delta(w_i) \in \mathcal{O}(1)$ implies that there exists a constant $t'$ such that $\delta(w_i) \log \delta(w_i) \leq t'$.
    Recall that $|w_i| = O(\rho_a^i)$.
    Since $t'$ and $\rho_a$ have constant values and assuming $\log\frac{|w_i|}{\delta(w_i) \log \delta(w_i)} > 1$, there exist constant values $c_1, c_2 >0 $ such that $\delta(w_i) \log \delta(w_i) \cdot \max\{1,\log\frac{|w_i|}{\delta(w_i) \log \delta(w_i)}\} \leq  t' (\log|w_i| - \log t') \leq t' (c_1 \cdot i\log \rho_a - \log t') \leq c_2 \cdot i$, and by Theorem \ref{th:r_as_delta}, $r_\bwt(w_i) = \mathcal{O}(i)$.
    
    \textit{2.} $f_w(n) = \Theta(n \log \log n)$: 
    Since $f_{w_i}(k) \leq f_x(k) \leq t \cdot k \log \log k$ for any $i \geq 0$, $k >0$ and for some $t > 0$, we have $\delta(w_i) \leq t \cdot \log \log |w_i|$.
    Hence, $\delta(w_i) \log \delta(w_i) \cdot \max \{1, \log \frac{|w_i|}{\delta(w_i) \log \delta(w_i)}\} \leq t' \log \log |w_i| \log \log \log |w_i| \cdot \log |w_i|$ for some $t'>0$, and by Theorem \ref{th:r_as_delta} $r_\bwt(w_i) = \mathcal{O}(i \log i \log \log i)$. 
    
    \textit{3.} $f_w(n) = \Theta(n \log n)$: Analogously to the previous case, $f_{w_i}(k) \leq f_x(k) \leq t \cdot k \log k$, for any $i \geq 0$, $k >0$ and for some $t > 0$, and therefore $\delta(w_i) \leq t \cdot \log |w_i|$.
    Hence, $\delta(w_i) \log \delta(w_i) \cdot \max \{1, \log \frac{|w_i|}{\delta(w_i) \log \delta(w_i)}\} \leq t' \log |w_i| \log \log |w_i| \cdot \log |w_i|$  for some $t'>0$, and by Theorem \ref{th:r_as_delta} $r_{\bwt}(w_i)= \mathcal{O}(i^2 \log i)$.
    \qed
\end{proofofProp}

\setcounter{myCounter}{\getrefnumber{prop:exponential_growth_run}}
\begin{proofofProp}
Let $t > 0$ be the smallest integer such that $a$ occurs twice in $\varphi^{t}(a)$.
    We prove by induction that $|\varphi^{j \cdot t}(a)|_a$ grows as $\Omega(2^j)$. 
    By definition of $t$, the case $j = 1$ is trivial.
    For the inductive step, assume it holds that $|\varphi^{j \cdot t}(a)|_a = \Omega(2^j)$.
    Let $a_k$ be the $k$th occurrence of $a$ in $\varphi^{j \cdot t }(a)$.
    This implies that $\varphi^{(j+1) t}(a) = \varphi^t(\varphi^{j t}(a)) = \varphi^t(a_1 \ldots a_2 \ldots a_{2^j} \ldots) = \varphi^t(a_1) \ldots \varphi^t(a_2) \ldots \varphi^t(a_{2^j}) \ldots$.
    Since $\varphi^t(a_i)$ produces at least 2 $a$'s for each $i$ and the morphism is prolongable, then $|\varphi^{(j+1) \cdot t}(a)|_a \geq 2 |\varphi^{j \cdot t}(a)|_a \in \Omega(2^{j+1})$.
    Note that since $r(\varphi(a)) \geq 2$, then $r(\varphi^{j \cdot t + 1}(a)) \geq 2  |\varphi^{j \cdot t}(a)|_a = \Omega(2^{j +1})$ (this holds even if we assume that the last character of $\varphi(a)$ is an $a$ as well, since this would imply that $r(\varphi(a)) \geq 3$).
    Since every $t$ steps the growth of the function is exponential, the overall growth is exponential too.
\qed
\end{proofofProp}

\setcounter{myCounter}{\getrefnumber{le:r_bound_bispecial}}
\begin{proofofLemma}
    Let $y = \bwt(w)$ and let $S = \{p_1 < p_2 < \ldots < p_{r-1}\}$ be the set of positions in $\bwt(w)$ of the last character of the first $r_\bwt(w)-1$ equal-letter runs in the $BWT$. This means that the equal-letter runs of $y$ are $y_1\cdots y_{p_1}, y_{p_1+1}\cdots y_{p_2}, \ldots, y_{p_{r-2}+1} \cdots y_{p_{r-1}}, y_{p_r}\cdots y_n$.
    It follows that $y_{p_j} \neq y_{p_j+1}$ for any $1 \leq j \leq r-1$.
    Let moreover $w_j$ be the $j$th conjugate of $w$ in lexicographical order.
    We can then define the set $W = \{C_{p_1}, C_{p_2}, \ldots, C_{p_{r-1}}\}$, that is the set of lexicographically sorted cyclic rotations of $w$ corresponding to the positions in $S$.
    We observe that every $u_i = lcp(C_{p_i}, C_{p_i + 1})$ (where $lcp$ denotes the longest common prefix between the two words) is a bispecial circular factor, since $C_{p_i} = u_i a v'$ and  $C_{p_i + 1} = u_i b v''$, for some $a<b \in A$ and $v', v'' \in A^*$. In fact, either $v'$ and $v''$ are both empty or they end with different letters. 
    Moreover, each factor $u_i$ can correspond to at most $e_r(u_i)$ distinct positions in $S$. 
    By contradiction, let us suppose that there exist $p_{i_1} < p_{i_2} < \ldots < p_{i_{e_r(u_i)+1}} \in S$ such that $lcp(C_{p_{i_j}}, C_{p_{i_j}+1}) = u_i$ for each $j \in [1..e_r(u)+1]$. 
    Then we should find in $W$ at least two distinct conjugates $C_{p_{i_j}}$ and $C_{p_{i_{j'}}}$, for some $j < j'$ that both start $u_i a$, for some $a \in A$.
    Consider $C_{p_{i_j} + 1}$, that is the first conjugate following $C_{p_{i_j}}$ in lexicographical order.
    By definition, $C_{p_{i_j} + 1} = u_i bv$, for some $b>a$ and $v \in A^*$.
    Then, we would have $C_{p_{i_j}} = u_i a v' < C_{p_{i_{j'}}} = u_i a v'' < C_{p_{i_j} + 1} = u_i b v$, for some $v', v'' \in A^*$, but this is impossible since $C_{p_{i_j}}$ and $C_{p_{i_j} + 1}$ are consecutive and $C_{p_{i_j} + 1} \neq C_{p_{i_{j'}}}$,  contradiction.

    Let $e(u) = \max\{e_l(u), e_r(u)\}$.
    By definition, for any $u \in WBS(w)$, there are exactly $e(u) + 1$ distinct circular factors of $w$ of the type $aua'$, for some $a,a' \in A$.
    Let $S_u = \{q_1 < q_2 < \ldots < q_{e_r(u)}\}$ be the set of positions in the $BWT$ such that $C_{q_j} = uav_1$ and $C_{q_j + 1} = ubv_2$, for any $j \in [1..e_r(u)]$, some $a,b \in A$ such that $a<b$ and some $v_1, v_2 \in A^*$.
    If $e_l(u) \leq e_r(u)$, then $e_l(ua) = 0$ (i.e. it exists only one left extension of $ua$) for any $a \in A$.
    It follows that there are at least $e_l(u)$ cyclic rotations such that $C_{q_i} = uav_1$ and $C_{q_i+1} = ubv_2$ end with distinct characters, i.e. there are at least $e_l(u)$ changes of letters in $y$ that uniquely corresponds to the weak bispecial factor $u$.
    On the other hand, if $e_l(u) > e_r(u)$,  then $e_r(au) = 0$ (i.e. it exists only one right extension of $au$) for any $a \in A$. 
    It follows that if there exist two consecutive cyclic rotations $C_q$ and $C_{q+1}$ such that $C_q = uav_1'$ such that $C_{q+1} = ubv_2'$, then either $v_{1}'=v_{2}'=\varepsilon$ or they end with different characters, i.e. there are at least $e_r(u)$ changes of letters in $y$ that correspond to $u$ and the thesis follows.
\qed
\end{proofofLemma}

Before proving the next proposition, we need the following definitions \cite{rozenberg1980mathematical}.
A morphism $\varphi$ is called \emph{simplifiable} if there exist an alphabet $A'$ with $|A'| < |A|$, and two morphisms $\chi : A^* \to A'^*$ and $\chi' : A'^* \to A^*$ such that $\varphi(w) = \chi'(\chi(w))$ for any $w \in A^*$, otherwise $\varphi$ is called \emph{elementary}.

\setcounter{myCounter}{\getrefnumber{prop:struct_ult_per}}
\begin{proofofProp}
Let $\mu$ be a simplifiable morphism.
This means that exist $\chi : \{a,b\}^* \to \{a\}^*$ and $\chi' : \{a\}^* \to \{a,b\}^*$ such that $\mu(w) = \chi'(\chi(w))$ for any $w \in A^*$.
More into details, we have that $\chi(a) = a^t$ and $\chi(b) = a^{t'}$ for some $t,t'>0$ and $\chi'(a) = \eta$ for some $\eta \in \{a,b\}^*$.
It is easy to see that the fixed point of any of these morphisms is of the type $x = \eta^\infty$ that is ultimately periodic, and therefore $\mu(a) = \chi'(\chi(a)) = \chi'(a^t) = \eta^t$ and $\mu(b) = \chi'(\chi(b)) = \chi'(a^{t'}) = \eta^{t'}$ (that is the case {\it 1.}).

Let us suppose that $\mu$ is elementary and let $G$ and $B$ be the set of growing and bounded letters respectively.
Since we assume the morphism is prolongable on $a$, we have that $a \in G$.
By \cite[Corollary 3]{Pansiot86}, if the fixed point of the morphism contains only one growing letter, then it is ultimately periodic.
Note that this can occur only when $|\mu(a)|_a = 1$ and $b\in B$, that is $\mu\equiv (ab^k,b)$ for any $k \geq 1$ (case {\it 2.} for $ \ell = 1$).

Otherwise, let $P_x = v_0 c_1 v_1 c_2 v_2 \ldots v_{h-1} c_h v_h \ldots v_{l-1} c_l v_{l} c_h$, with $c_j \in G$ for any $1 \leq j \leq l$ and $v_{j'} \in B^*$ for any $0 \leq j' \leq l$, be the shortest prefix of $x$ that contains two occurrences of the same letter in $G$.
It follows that $c_j \neq c_j'$ for any $1 \leq j \neq j' \leq l$. 
By \cite[Corollary 3]{Pansiot86}, if for any prefix $c_j v$ of any $c_j v_j$ factor of $P_x$ it holds that $e_r(c_j v) = 0$, then $x$ is ultimately periodic.

If $b \in B$, then $P_x = a b^p a$ for some $p \geq 0$.
It follows that in order to have an ultimately periodic word, $e_r(ab^{p'}) = 0$ for any $0 \leq p' \leq p$.
We can see that this is equivalent to say that $|R_i| = 1$ for any $i \geq 1$.
In fact, if $|R_i| \geq 2$, then $ab^ja, ab^{j'a} \in \mathcal{C}(\mu^i(a))$ with $0 \leq j < j'$ and $e_r(ab^j) = 1$ (where $ab^j$ is both prefix of $ab^j$ and $ab^{j'}$) and $x$ would not be ultimately periodic.
It follows that the fixed point must be $x = (ab^p)^\infty$.
Since $\mu(a)$ is a prefix of $x$, we have that $\mu(a) = (ab^p)^qab^{p'}$, for some $p,q \geq 1$ and $0 \leq p' \leq p$ (notice that if $p = 0$ or $q=0$, then $x=a^\infty$ or $x = ab^\infty$ respectively).
However, we can see that $\mu^2(a) = \mu((ab^p)^q ab^{p'}) = (((ab^p)^q ab^{p + p'})^q (ab^p)^q ab^{p+p'})$.
Note that if $p' > 0$ then $p, p+p' \in R_i$ and therefore $x$ can not be ultimately periodic.
Therefore, $p' = 0$ and $\mu \equiv ((ab^p)^qa,b)$ is ultimately periodic (case {\it 4.}).

If $b \in G$ (and therefore $B = \emptyset$), then ($i$) $P_x = abb$ or ($ii$) $P_x = aba$.
Since it must hold that $e_r(a) = e_r(b) = 0$, then either $x = ab^\infty$ or $x = (ab)^\infty$ (($i$) and ($ii$) respectively).
For case ($i$) we can check that, since $\mu(a)$ is prefix of the fixed point, $\mu \equiv (ab^k, b^\ell)$ for some $k \geq 1$ and $\ell > 1$ (case {\it 4.} for $\ell > 1$).
For case ($ii$), note that by definition $x = \mu(x) = (\alpha\beta)^\infty$.
Therefore, either $\mu \equiv ((ab)^t, (ab)^\ell)$ for some $t,\ell \geq 1$ (that is case {\it 1.} again), or $\mu \equiv ((ab)^pa, (ba)^qb)$ for some $p,q \geq 1$ (case {\it 3.}) and the thesis follows. \qed
\end{proofofProp}

\setcounter{myCounter}{\getrefnumber{prop:R-auabk_b}}
\begin{proofofLemma}
Suppose $\mu(a) = auab^k$ with $k \geq 1$.
We can see that for $R_1 = \bigcup_{j=1}^{n_a}\{t_j\}$ the thesis holds.
By induction, suppose it holds for $R_i$.
We can see that all $a b^{q} a \in \mathcal{C}(\mu^{i+1}(a))$, for some $q\in R_{i+1}$, either are factors of $\mu(a) = auab^k$ (that is 
$q = t_j$ for some $1 \leq j < n_a$) or there exists $q' = q-k$ such that $q' \in R_i$ and $\mu(ab^{q'}a) = au\cdot ab^{q}a\cdot uab^k$.
Hence, $R_{i+1} = \bigcup_{j=1}^{n_a-1} \{t_j\} \cup \bigcup_{q' \in R_i} \{q'+k\} = \bigcup_{h=1}^{i+1} \bigcup_{j=1}^{n_a - 1} \{t_j + (h-1)k\} \cup \{(i+1)k\}$.
For case {\it 1}, note that the proof holds also in this case but with $t_{n_a} = k = 0$, hence we obtain $R_{i+1} =  \bigcup_{j=1}^{n_a} \{t_j\} = R_1$ and the thesis follows.
\qed
\end{proofofLemma}

\setcounter{myCounter}{\getrefnumber{prop:R-auabk_bl}}
\begin{proofofLemma}
The proof is analogous to that of Lemma \ref{prop:R-auabk_b}.
\qed
\end{proofofLemma}

In order to give a proof for Proposition \ref{prop:struct_aperiodic_non-growing} we need to prove the following lemma.

\begin{lemma}
\label{le:non-primitive-aperiodic}
Let $\mu$ be a binary non-primitive morphism prolongable on $a$ with fixed point $x = \mu^\infty(a)$ aperiodic.
Then $\mu \equiv (au,b^\ell)$ for some $u\in \Sigma^+$ such that $|u|_a, |u|_b\geq 1$ and for some $\ell \geq 1$.
\end{lemma}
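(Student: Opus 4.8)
The plan is to pin down the two images $\alpha = \mu(a)$ and $\beta = \mu(b)$ separately, using aperiodicity to exclude degenerate images and non-primitivity to force $\beta$ to be a power of $b$. Throughout I write $\alpha = au'$ with $u' \in A^+$, which is legitimate since $\mu$ is prolongable on $a$; recall also that prolongability yields $\mu^i(a) \preceq \mu^{i+1}(a)$ for every $i$, obtained by applying $\mu^i$ to the prefix relation $a \preceq \mu(a)$.

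First I would observe that $\alpha$ must contain the letter $b$. Indeed, if $|\alpha|_b = 0$ then $\alpha = a^s$ with $s \geq 2$, and iterating gives $\mu^i(a) = a^{s^i}$, so $x = a^\infty$ would be periodic, contradicting aperiodicity. Hence $|\alpha|_b \geq 1$, i.e. $u'$ already contains a $b$.

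The central step is to show $\beta = b^\ell$ for some $\ell \geq 1$, and here non-primitivity enters. I argue the contrapositive: if $|\beta|_a \geq 1$ then $\mu$ is primitive. Since $\alpha$ starts with $a$ and, by the previous step, contains $b$, the prefix relation $\mu(a) \preceq \mu^k(a)$ guarantees that both $a$ and $b$ occur in $\mu^k(a)$ for every $k \geq 1$. Assuming $|\beta|_a \geq 1$, the letter $a$ occurs in $\mu(b)$, hence $\mu(a)$ — and therefore both $a$ and $b$ — occurs inside $\mu^2(b)$, so both letters occur in $\mu^k(b)$ for every $k \geq 2$. Taking $k = 2$ shows that every letter occurs in $\mu^2$ of every letter, i.e. $\mu$ is primitive, contradicting the hypothesis. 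Therefore $|\beta|_a = 0$, and since $\mu$ is non-erasing this gives $\beta = b^\ell$ with $\ell \geq 1$. Finally I would rule out $|u'|_a = 0$: if $\alpha = ab^k$ (with $k \geq 1$ from the first step) and $\beta = b^\ell$, an easy induction gives $\mu^i(a) = ab^{m_i}$ for suitable exponents $m_i$, so $x = ab^\infty$ is ultimately periodic, again a contradiction. Hence $\alpha = au$ with $|u|_a \geq 1$ and $|u|_b \geq 1$ and $\beta = b^\ell$, which is the claim.

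The only delicate point, and the place I would be most careful, is the primitivity argument: the definition requires a \emph{single} exponent $k$ that simultaneously works for all ordered pairs of letters, so the plan deliberately exhibits the uniform value $k = 2$ rather than combining pairwise occurrences that might otherwise need different exponents. Everything else reduces to the two aperiodicity computations, which only exclude the cases $\alpha = a^s$ and $\alpha = ab^k,\ \beta = b^\ell$.
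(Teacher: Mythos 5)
Your proof is correct and follows essentially the same route as the paper's: prolongability gives $\mu(a)=au$, aperiodicity excludes both $|u|_b=0$ and the case $\mu\equiv(ab^k,b^\ell)$, and non-primitivity forces $|\mu(b)|_a=0$, i.e. $\mu(b)=b^\ell$. The only differences are ones of detail: you spell out the non-primitivity step explicitly (exhibiting the uniform exponent $k=2$) and verify directly that $\mu\equiv(ab^k,b^\ell)$ yields the ultimately periodic word $ab^\infty$, where the paper asserts the former in one line and cites Proposition~\ref{prop:struct_ult_per} for the latter.
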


\begin{proof}
In order to be prolongable on $a$, we must have $\mu(a) = au$ for some  $u \in \Sigma^+$.
Moreover, $|u|_b\geq 1$, otherwise $\mu^\infty(a) = aaaaaa\ldots$.
Since $\mu(a)$ contains both $a$ and $b$, and the morphism is not primitive, then $|\mu(b)|_a = 0$, i.e. $\mu(b) = b^\ell$ for some $\ell \geq 1$.
Finally, we can observe that if $|u|_a = 0$, then we have the morphism $\mu \equiv (ab^k, b^\ell)$ that is ultimately periodic (Proposition \ref{prop:struct_ult_per}). Therefore $|u|_a \geq 1$.
\end{proof}

\setcounter{myCounter}{\getrefnumber{prop:struct_aperiodic_non-growing}}
\begin{proofofProp}
 Let $B$ be the set of bounded letters in the morphism $\mu$.
    By Theorem \ref{th:pansiot}, we know that any morphism that generates an infinite word $x=\mu^\infty(a)$ with factor complexity $f_x(n) = \Theta(n^2)$ is non-growing and $x$ contains infinitely many runs of characters over $B^*$ with unbounded length.
    By Lemma \ref{le:non-primitive-aperiodic}, we know that $\mu \equiv (au', b^\ell)$, for some $u'$ such that $|u'|_a,|u'|_b\geq 1$ and $\ell \geq 1$.
    Since $\mu$ is growing on $a$, we have that $\ell = 1$ (that is $\mu(b) = b$).
    Note that, from Lemma \ref{prop:R-auabk_b}, we can see that, for any $i\geq1$, if $\mu \equiv (auab^k, b)$ for some $k\geq 1$, then $R_i$ grows at each iteration and therefore we have factors with unbounded length of $B^*$ in the fixed point, while on the other hand if $\mu \equiv (auba^j, b)$ then $R_i \in \mathcal{O}(1)$ and so the factors of $B^*$ in $\mu^i(a)$. 
    By \cite[Theorem 4.1]{Pansiot_ICALP84}, we know that if the number of factors of $\mu^\infty(a)$ in $B^*$ is bounded, then there exists a growing morphism $\mu':A'^* \mapsto A'^*$ and a morphism $\tau: A'^* \mapsto A^*$ such that, for some $a' \in A'$, $\tau(\mu'^\infty(a')) = \mu^\infty(a)$ and $f_{\mu^\infty(a)}(n) = \Theta(f_{\mu'^\infty(a')})$.
    Using the construction of $\mu'$ described in the proof of \cite[Theorem 4.1]{Pansiot_ICALP84} and by 
    Lemma \ref{prop:R-auabk_b}, we can deduce that $|A'|=|R_1|\geq 2$. In fact, $A'=\{[ab^ja], j\in R_1\}$, since $a$ is the only growing letter. Moreover, if $\mu^\infty(a)$ starts with $ab^{j_0}a$, with $j_0\geq 0$, then $a'=[ab^{j_0}a]$. Since $\mu$ is prolongable on $a$, the action of $\mu'$ on each letter of the alphabet $A'$ must begin with $a'$ and must contain all the letters of $A'$. It follows that $\mu'$ has to be primitive, i.e. $f_{\mu^\infty(a)}(n) = \Theta(n)$.
\qed
\end{proofofProp}

\setcounter{myCounter}{\getrefnumber{prop:struct_aperiodic_growing}}
\begin{proofofProp}
We can observe that, for any $i>0$, $$\mu_b(i)=|\mu^i(b)|=\ell^i$$ and $$\mu_a(i)=|\mu^i(a)|= |\mu^i(a)|_a + |\mu^i(a)|_b.$$ Moreover, $|\mu^i(a)|_a=n_a^i$ and $|\mu^i(a)|_b=|\mu^{i-1}(a)|_b\ell + n_b n_a^{i-1}$, where $n_a$ (resp. $n_b$) is the number of occurrences of $a$ (resp. $b$) in $\mu(a)$. 

Let us consider the case $n_a=\ell$. In this case $\mu_a(i)=in_b\ell^{i-1}+\ell^i$. Hence, $\mu_a(i)= \Theta(i\ell^i)$. Let us suppose now that $n_a\neq \ell$.

Let $F_a$ (resp. $F_b$) be the generating function of the sequence $\mu_a(i)$ (resp. $\mu_b(i)$), precisely:

$$ F_a=\sum_{i\geq 0} \mu_a(i) x^i, \, \, F_b=\sum_{i\geq 0} \mu_b(i) x^i \, . $$

Using standard methods we obtain the following system:
\begin{eqnarray}
F_a &=& 1 + n_ax F_a + n_bx F_b\\
F_b &=& 1 + \ell x F_b
\end{eqnarray}

Concerning $F_a$, we have:

$$
F_a=\frac{1}{1-n_ax} + \frac{n_bx}{(1-\ell x)(1-n_ax)} \, .
$$

This can be rewritten as

$$
F_a=\frac{1}{1-n_ax} + \frac{1- \left(  \frac{n_a (n_a-n_b-\ell)}{n_a-\ell}   \right) x}{1-n_ax}   -  \frac{1- \left(  \frac{l (n_a-n_b-\ell)}{n_a-\ell}  \right) x}{1-\ell x}  \, .
$$

We recall that the $n$th coefficient of the series $\frac{1-Qx}{1-n_ax}$ is given by the term $n_a^{n-1} (n_a-Q)$, whence

$$
\frac{1- \left(  \frac{n_a (n_a-n_b-\ell)}{n_a-\ell}   \right) x}{1-n_ax}  = \sum _{n\geq 0} \left( n_a- \frac{n_a (n_a-n_b-\ell)}{n_a-\ell} \right) n_a^{n-1} x^n = \sum _{n\geq 0} \frac{n_b}{n_a - \ell} \, n_a^{n} x^n,
$$

and similarly

$$\ \frac{1- \left(  \frac{\ell (n_a-n_b-\ell)}{n_a-\ell}  \right) x}{1-\ell x} = \sum _{n\geq 0} \frac{n_b}{n_a - \ell} \, \ell^n x^n, $$

Therefore the $i$th coefficient of $F_a$ is precisely:

$$
n_a^i+ \frac{n_b}{n_a - l} \, \left( n_a^{i} - l^i \right) \, .
$$

Hence, we can see that $i$th coefficient of $F_a$ grows as $\Theta(\ell^i)$ iff $n_a < \ell$ and $\Theta(n_a^i)$ iff $n_a > \ell$. So, we have that $\mu$ is quasi-uniform or exponentially divergent, respectively. The thesis follows by Theorem \ref{th:pansiot}.\qed
\end{proofofProp}

\setcounter{myCounter}{\getrefnumber{prop:linear_R_i}}
\begin{proofofProp}
By definition, we have that $n_a \geq 2$ and
by Lemma \ref{prop:R-auabk_b} (case {\it 2.}) and Lemma \ref{prop:R-auabk_bl}, we can see that exists at least one $1 \leq j \leq n_a -1$ such that, for any $p_h = \frac{\ell^{h-1}((\ell-1)t_j +k) - k}{\ell-1}$ with $1 \leq h \leq i$ and $p_{h}$ that grows as $h$, $ab^{p_h}a\in \mathcal{C}(\mu^i(a))$, i.e. $R_i$ contains at least $i$ elements.
Moreover, notice that $R_i$ contains at most $i(n_a - 1) + 1$ distinct elements.
It follows that $i \leq |R_i| \leq (n_a - 1)i + 1$, and therefore $|R_i| = \Theta(i)$.
\qed
\end{proofofProp}

\setcounter{myCounter}{\getrefnumber{lemma:b^(k)f(v)-general}}
\begin{proofofLemma}
If $v$ is a circular bispecial factor in $\mu^{n-1}(a)$, then we can suppose  w.l.g. $ava$ and $bvb$ are circular factors of $\mu^{n-1}(a)$.
    This means that $\mu(ava)=auab^k\mu(v)auab^k$ and $\mu(bvb) = b^\ell \mu(v)b^\ell$ are circular factors of $\mu^n(a)$. Note that any $\beta$ is circularly preceded by $b^k$, i.e. $a b^k\mu(v)a$ and $b b^k \mu(v) b$ are circular factors of $\mu^n(a)$.
\qed
\end{proofofLemma}

In order to give a proof for Lemma \ref{le:structural_general} we need the result of the following lemma.

\begin{lemma}
\label{le:factors_b^m+1_general}
Let $\mu \equiv (\alpha, \beta) = (auab^k, b^\ell)$ for some $k\geq 0$, $\ell \geq 1$ such that $k+\ell > 1$, and let $m$ be the length of the longest runs of $b$ that occurs in $auab^k$.
Then, for any $i \geq 2$, every circular factor $v$ of $\mu^i(a)$ of length $|v| > |\mu(\alpha)| + (2-\ell)m - k + \ell^2 M$ contains $b^{m+1}$ as factor, where $M = \max\{\lfloor \frac{m-(\ell+1)k}{\ell^2} \rfloor, 0\}$. 
\end{lemma}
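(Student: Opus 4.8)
The plan is to prove the contrapositive: I will show that every circular factor $v$ of $\mu^i(a)$ that avoids $b^{m+1}$ (i.e.\ all of whose maximal $b$-runs have length at most $m$) satisfies $|v|\le |\mu(\alpha)| + (2-\ell)m - k + \ell^2 M$. The decisive idea is to read $\mu^i(a)=\mu^2(\mu^{i-2}(a))$ as a concatenation of the two super-blocks $A:=\mu^2(a)=\mu(\alpha)$ and $B:=\mu^2(b)=b^{\ell^2}$. This is exactly the level at which the argument works: a single $\mu$-block $\alpha$ contains no $b^{m+1}$, but the doubled block $A$ will, and that is what confines $v$.

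First I would pin down the $b$-runs at this level. Since $\alpha=auab^k$ ends with $b^k$ and starts with $a$, the super-block $A=\mu(\alpha)$ ends with $b^{(\ell+1)k}$ and begins with $a$; together with $B=b^{\ell^2}$ this shows that a maximal $b$-run straddling the junction of two consecutive $A$'s separated by $j$ copies of $B$ has length exactly $(\ell+1)k+\ell^2 j$, which exceeds $m$ precisely when $j>M$. On the other hand, the longest run $b^m$ of $\alpha$ is circularly flanked by $a$'s (so $m\in R_1$; here $m\ge 1$, since otherwise $\alpha\in a^{*}$ and the fixed point would be periodic), and its image inside a single super-block is a run $b^{k+\ell m}$ with $k+\ell m\ge m+1$, because $k+(\ell-1)m\ge 1$ as $k+\ell>1$. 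Hence every super-block $A$ already contains $b^{m+1}$.

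The confinement step is then immediate: a factor $v$ avoiding $b^{m+1}$ cannot contain a full super-block $A$, so it meets at most two $A$'s and has the shape $v=s\,B^{j}\,p$, where $s$ is a suffix of one $A$, $p$ a prefix of the next, and the $j$ interior $B$'s are full. The interior straddling run has length $(\ell+1)k+\ell^2 j\le m$, forcing $j\le M$ and hence at most $\ell^2 M$ letters in the middle. Because $s$ and $p$ each avoid $b^{m+1}$, each may absorb at most $m$ letters of the long run of $A$ bounding it, so $|s|+|p|\le |A|-(k+\ell m)+2m$; adding the middle yields $|v|\le |\mu(\alpha)|+(2-\ell)m-k+\ell^2 M$, as claimed. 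The remaining configurations—$v$ inside a single super-block (in particular the base case $i=2$, where $\mu^{i-2}(a)=a$), $v$ running into $B$'s without reaching a second $A$, a pure run $v=b^{t}$ with $t\le m$, and the degenerate range $m<(\ell+1)k$ in which $M=0$ and no straddling factor exists—are all bounded by the same computation, with the $\max\{\cdot,0\}$ in $M$ absorbing the degenerate case.

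The main obstacle is the bookkeeping of the last step: locating the first and last runs of $A=\mu(\alpha)$ that exceed $m$, and bounding the admissible prefix $p$ and suffix $s$ so that the constant comes out exactly as $(2-\ell)m-k$, uniformly over $\ell=1$ versus $\ell\ge 2$ and $k=0$ versus $k\ge 1$. Once the super-block viewpoint is fixed and the single internal run $b^{k+\ell m}$ of $A$ is isolated, the inequality $|s|+|p|\le |A|-(k+\ell m)+2m$ is the only quantitative input, and the rest is a routine verification.
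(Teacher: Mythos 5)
Your proposal is correct and takes essentially the same route as the paper: both pass to the super-block decomposition $\mu^i(a)\in\{\mu(\alpha),\mu(\beta)\}^*$ for $i\ge 2$, both observe that each block $\mu(\alpha)$ already contains $b^{m+1}$ because the image of the longest run $b^m$ yields a run $b^{\ell m+k}>b^m$, and both confine a $b^{m+1}$-avoiding factor to (portions of) $\mu(\alpha)\mu(\beta)^{j}\mu(\alpha)$ with $j\le M$, recovering the bound $|\mu(\alpha)|+(2-\ell)m-k+\ell^2 M$. The only difference is presentational: the paper exhibits the extremal avoiding factor $v^*$ explicitly and computes its length, whereas you bound $|s|+|p|$ by the contrapositive inequality, which is the same accounting.
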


\begin{proof}
    Let $n_a = |\mu(a)|_a$.
    Then we can uniquely factorize $\mu(a) = ab^{p_1}ab^{p_2}a \ldots ab^{p_{n_a}}$, for some $p_1, p_2, \ldots, p_{n_a - 1} \geq 0$ and $p_{n_a} = k \geq 0$.
    Let $j$ be an index such that $p_j = m$.
    Since $\mu^i(a) \in \{\mu(\alpha), \mu(\beta)\}^*$ for any $i \geq 2$, where $\mu(\alpha) = \prod_{i=1}^{n_a} auab^{p_{n_a} + \ell p_i}$ and    $\mu(\beta) = b^{\ell^2}$, we have that $b^{\ell p_j + p_{n_a}}$ occurs in $\mu(\alpha)$, for any $1 \leq j \leq n_a$.
    Note that, with the exception of the case $k=0$ and $\ell=1$ (that we are not considering since otherwise $k + \ell = 1$), $\ell p_j + p_{n_a} = \ell m + k > m$, that is $\mu(\alpha)$ has at least an occurrence of the factor $b^{m+1}$.
    Then the longest factor that may not contain $b^{m+1}$ in any string $x \in \{\mu(\alpha),\mu(\beta)\}^*$ is a factor of $\mu(\alpha)\mu(\beta)^{M} \mu(\alpha)$, where $M = \max\{\lfloor \frac{m-(\ell+1)p_{n_a}}{\ell^2} \rfloor, 0\}$  and it is 
    $v^* = b^{p_j}\prod_{i= j+1}^{n_a}auab^{p_{n_a} + \ell p_i} \cdot (b^{\ell^2})^M \cdot \prod_{i= 1}^{j-1}auab^{p_{n_a} + \ell p_i} \cdot auab^{p_j}$, since if we extend either on the left or on the right we find another $b$ and we have $b^{p_j + 1} = b^{m+1}$ as factor.
    We can see that $|v^*| = |\mu(\alpha)| + (2-\ell)p_j - k + \ell^2 M$.
\end{proof}

\setcounter{myCounter}{\getrefnumber{le:structural_general}}
\begin{proofofLemma}
 Consider the case $\ell = 1$.
 By using the proof of Lemma \ref{le:factors_b^m+1_general}, the circular factors of maximal length that do not contain $b^{m+1}$ in any string in $\{\mu(\alpha),\mu(\beta)\}^*$ must be factor of $\mu(\alpha)\mu(\beta)^{h} \mu(\alpha)$, $h\leq M = \max\{m-2k, 0\}$.
 So,  if $w$ is not a circular factor of $\bigcup_{h=0}^{M}\{\mu(\alpha)\mu(\beta)^{h}\mu(\alpha)\}$, then it holds that $w=yb^{m+1}z$, for some $y, z \in A^*$.
Note that the last $m - k + 1$ letters of $b^{m+1}$ can be obtained only as $\mu(b)$.
Since $\{\mu(a),\mu(b)\}$ is a prefix code, $z$ can be uniquely factorized as a sequence of $\mu(a)$'s and $\mu(b)$'s, i.e. $w = y b^k \mu(b)^{m-k+1} \mu(v) z'$, for some $v \in \mathcal{C}(\mu^{n-1}(a))$ and some proper prefix $z'$ of $\mu(a)$ or $\mu(b)$.
From the fact that $w$ is right special, we can prove that $z'=\varepsilon$. 
In fact, if $z' \neq \varepsilon$, then $z'$ is a proper prefix of $\alpha$ and $w$ must be followed by $\alpha_{|z'| + 1}$, i.e. $w$ would not be right special.
Moreover, $y b^k \mu(b^{m-k+1})$ can be uniquely factorized as a sequence of $\mu(a)$'s and $\mu(b)$'s, unless for a prefix $b^j$, with $j\leq k$, since that prefix could be a proper suffix of $\alpha$, a run of $\beta$'s or a combination of both. 
However, if $j <k$, then $w$ must be preceded by $b^{k-j}$, that is $w$ is not left special and therefore $j=k$.

If $\ell > 1$, the proof holds (recall that $M= \max\{\lfloor\frac{m-(\ell-1)k)}{\ell^2}\}$ ), but with the exception of any bispecial factor $w = b^h$, for any $h \geq m+1$, that do not occur in $\bigcup_{h=0}^M\{\mu(\alpha)\mu(\beta)^h\mu(\alpha)\}$.
In fact, only in this case we can not uniquely factorize any of the factor of $w$ (observe that $b\mu(\beta) = \mu(\beta)b$).
 Let $m'$ be the greatest value such that $ab^{m'}a$ is a factor of $\mu^n(a)$.
 Since $ab^{m'}a = ab^h b^{m'-h}a = ab^{m'-h}b^h a$ for any $1\leq h < m'$, it holds that $ab^h b$ and $bb^h a$ occur in $\mu^n(a)$.
 It follows that $b^h$ is bispecial for any $0 \leq h < m'$ and the thesis follows.
\qed
\end{proofofLemma}

\setcounter{myCounter}{\getrefnumber{prop:upper_bound}}
\begin{proofofTh}
Let $m$ be the longest run of $b$'s that occurs in $auab^k$ and let
$M=\max\{\lfloor \frac{m-(\ell+1)k}{\ell^2} \rfloor, 0\}$.
Let $BS_0(\mu^i(a)) = \{v \in \{a,b\}^* \mid v$ is a bispecial circular factor of $\bigcup_{i=0}^{M}\{\mu(\alpha)\mu(\beta)^{i}\mu(\alpha)\}\}$, $BS_b(\mu^i(a)) = \{b^h \mid b^h \in \mathcal{C}(\mu^i(a))\}$ and $BS_\mu(\mu^i(a)) = \{b^k\mu(v) \mid v \in \mathcal{C}(\mu^{i-1}(a))$ and $|v|_a\geq1\}$.

From Lemma \ref{le:structural_general}, we can see that $BS(\mu^i(a)) = BS_0(\mu^i(a)) \cup BS_b(\mu^i(a)) \cup BS_\mu(\mu^i(a))$ (note that the intersection can be non-empty).
It is easy to see that $|BS_0| = \mathcal{O}(1)$.
Moreover, since $\mu^i(a) = au\mu(u)\mu^2(u)\ldots\mu^{i-1}(u)$, we can see that any element $w \in BS_\mu(\mu^i(a))$ that do not belong to $BS_\mu(\mu^{i-1}(a))$ has to cross $\mu^{i-1}(u)$ and $w=b^k\mu(v)$ for some $v$ that crosses $\mu^{i-2}(u)$. 
Iterating the procedure, we can see that $|BS_\mu(\mu^i(a))| = \mathcal{O}(i)$.
If $\ell=1$, from Lemma \ref{le:structural_general}, we can see that also $BS_b(\mu^i(a))= \mathcal{O}(i)$.
It follows that $r_\bwt(\mu^i(a)) \leq |BS(\mu^i(a)| \leq |BS_0(\mu^i(a)| + |BS_\mu(\mu^i(a)| + |BS_b(\mu^i(a)| = \mathcal{O}(i)$.

On the other hand, if $\ell >1$, clearly
$|BS_b(\mu^i(a))| = \Omega(\ell^i)$, and therefore $|BS(\mu^i(a))| = \Omega(\ell^i)$.
However, recall that each change of letters in $\bwt(w)$ uniquely corresponds to a unique bispecial factor, that is the longest common prefix between the two conjugates in correspondence of the change of letter (see Lemma \ref{le:r_bound_bispecial} and recall that $e_l(u),e_r(u) \leq 1$ for binary alphabets).
Let $m'$ be the length of the longest run of $b$'s in $\mu^i(a)$.
Note that $m' \in R_i$.
We define $\oR_i = \{0, 1, \ldots, m'-1\} \setminus R_i$, i.e. $\oR_i$ is the set of indices $0 \leq \oq < m'$ such that $ab^{\oq}a$ is not a factor of $\mu^i(a)$.
By Lemma \ref{prop:R-auabk_bl} we know that $|R_i|$ grows as $\mathcal{O}(i)$.
This implies that, even though $|\oR_i| \in \Omega(\ell^i)$, we can split $\oR_i$ in $K \leq |R_i|$ subsets of maximal sizes that contain consecutive lengths of runs of $b$'s, i.e. $P_j =\{h_j, h_j + 1, \ldots, h_j + d_j \in \oR_i \mid h_j - 1,h_j + d_j + 1\notin \oR_i\}$, for any $1 \leq j \leq K$.
We can see that any $b^h$ is the longest common prefix of two consecutive conjugate words $C_t$ and $C_{t+1}$ of $w$, for some $0\leq t < n$ if and only if $C_t = b^h a z$ and $C_{t+1} = b^{h+1} a z'$, for some $z, z' \in \Sigma^*$ (if $b<a$ then just switch the order of $C_t$ with $C_{t+1}$).
Note that we can make correspond to each set $P_j$ the maximal interval $[s_j\ldots s_j + \ell_j]$ (for some $0 \leq s_j, \ell_j \leq |\mu^i(a)|$) of sorted conjugates $\{C_{s_j}, C_{s_j + 1}, \ldots, C_{s_j + \ell_j}\}$ with prefix with $b^{h}a$, for any $h \in P_j$.
Let $y = \bwt(\mu^i(a))$.
Since by definition $ab^h a$ is not a factor of $\mu^i(a)$ for any $h \in \oR_i$, then $y_{s_j}y_{s_j+1}\cdots y_{s_j + \ell_j} = b^{\ell_j + 1}$, and therefore a change of letter in $y$ can not correspond to a bispecial factor $b^h$ with $h \in P_j$, with the only possible exception for $b^{h_j + d_j}$ if $y_{s_j + \ell_j +1} = a$, since $ab^{h_j + d_j + 1}a$ is a factor of $\mu^i(a)$ and $lcp(C_{s_j + \ell_j},C_{s_j + \ell_j + 1}) = b^{h_j + d_j}$.
Finally, since there are $K \leq |R_i|$ of these intervals, we have that $r_\bwt(\mu^i(a)) 
\leq |BS_0(\mu^i(a))| + |BS_\mu(\mu^i(a))| + 2|R_i| = \mathcal{O}(i)$.
\qed
\end{proofofTh}

\setcounter{myCounter}{\getrefnumber{prop:lowerbound_r_Rn}}
\begin{proofofTh}
Let $C_1 < C_2 < \ldots < C_n$ be the lexicographically sorted rotations of $\mu^i(a)$ and let $m'$ be the length of the longest run of $b$'s in $\mu^i(a)$.
We define the set of $(s_j, \ell_j)$-pairs  that define the intervals $X_j = \{C_{s_j}, C_{s_j + 1}, \ldots, C_{s_j + \ell_j - 1}\}$, where $X_j$ contains all and only the rotations with prefix $b^{j}a$, for any $0 \leq j \leq m'$.
Let $y = \bwt(\mu^i(a))$.

We can see that for any $0 < j \leq m'$, if $X_j$ exists, then $y_{s_{j-1}}\cdots y_{s_{j-1}+\ell_{j-1}}$ contains at least $\ell_j$ occurrences of $b$'s.
In fact, for any $h \geq 0$, if exist $\ell_j$ rotations such that $C_j = b^{h+1}a u$ for some $u \in A^*$, then exist as much rotations  $C_{j'} = b^haub$.
It follows that, for any $0 \leq j < m'$, $y_{s_{j}}\cdots y_{s_{j}+\ell_{j}-1}$ contains at least one $b$.

Moreover, by definition, we know that if $j \in R_i$, then there is at least an $a$ in $y_{s_j}\cdots y_{s_j+\ell_j-1}$.
Hence, for any $0 \leq j' \leq \lceil \frac{|R_i|-1}{2}\rceil$ such that $q_{2j'+1} \in R_i$, we have that in $y_{s_{2j'+1}} \cdots  y_{s_{2(j'+1)+1}+ \ell_{2(j'+1)+1}-1}$ it must occur $aba$ as subsequence, since there is at least one run of $a$'s in $y_{s_{2j'+1}}\cdots  y_{s_{2j'+1}+ \ell_{2j'+1}}$, at least one run of $b$'s in $y_{s_{2(j'+1)}} \cdots  y_{s_{2(j'+1)}+ \ell_{2(j'+1)}-1}$ and at least one run of $a$'s in $y_{s_{2(j'+1)+1}}\cdots  y_{s_{2(j'+1)+1}+ \ell_{2(j'+1)+1}-1}$.
Hence, $r_\bwt(\mu^i(a)) \geq r(y_{s_1}\cdots y_{s_{m'}+\ell_{m'}-1}) \geq r((ab)^{\lceil\frac{|R_i|-1}{2}\rceil}a) = \Omega(|R_i|)$.
Finally, by Proposition \ref{prop:struct_aperiodic_non-growing} we know that $f_x(n) = \Theta(n^2)$, if and only if $\mu \equiv (auab^k,b)$ for any $u \in A^*$ and $k \geq 0$, while if $\mu$ is growing and non-primitive then, by Proposition \ref{prop:struct_aperiodic_growing}, $\mu \equiv (av,b^\ell)$ for some $\ell \geq 2$
\qed
\end{proofofTh}

\setcounter{myCounter}{\getrefnumber{lb}}
\begin{proofofLemma}
We observe that $\mu^i(a)$ has as many $\alpha$'s and $\beta$'s as $a$'s and $b$'s in $\mu^{i-1}(a)$, that is $\alpha_i$ and $\beta_i$ respectively, and since $\mu^i(a)\in \{\alpha, \beta\}^{|\mu^{i-1}(a)|}$  we can have at most $|\mu^{i-1}(a)|-1$ images of a letter that merge their last equal-letter run with the next one.
As regards $\alpha_i$ and $\beta_i$, by definition of $\mu$ we get:
$$\begin{cases}
\alpha_1=1\\
\beta_1=0\\
\alpha_i = n_a \alpha_{i-1} + m_a \beta_{i-1}\\
\beta_i = n_b \alpha_{i-1} + m_b \beta_{i-1}
\end{cases}$$
By recursion we have:
\begin{align*}
    \alpha_i &= n_a\alpha_{i-1}+m_a\beta_{i-1}\\
            & = n_a\alpha_{i-1}+m_a(n_b \alpha_{i-2} + m_b \beta_{i-2})=n_a\alpha_{i-1}+m_an_b \alpha_{i-2} + m_am_b \beta_{i-2}\\
             &=n_a\alpha_{i-1}+m_an_b \alpha_{i-2} +m_am_bn_b\alpha_{i-3}+m_am_bm_b\beta_{i-3}\\
             & = \cdots \\
             & = n_a \alpha_{i-1} + \sum_{j=0}^{i-3}m_a n_b m_b^{j} \alpha_{i-2-j}
\end{align*}
and
\begin{align*}
    \beta_i &=m_b\beta_{i-1} + n_b \alpha_{i-1}\\
            & =m_b\beta_{i-1}+n_b(m_a \beta_{i-2} + n_a \alpha_{i-2})=m_b\beta_{i-1} + m_an_b \beta_{i-2}+n_an_b \alpha_{i-2}\\  
             &=m_b\beta_{i-1}+m_an_b \beta_{i-2} +m_an_an_b\beta_{i-3}+n_a^2n_b\alpha_{i-3}\\
             &= \cdots\\
             & = m_b \beta_{i-1} + \sum_{j=0}^{i-4}m_a n_a^{j} n_b  \beta_{i-2-j} + n_a^{i-2}n_b\\
\end{align*}
From the above considerations, the thesis follows.
\qed
\end{proofofLemma}

\setcounter{myCounter}{\getrefnumber{th:highlyBWTcompres}}
\begin{proofofTh}
Let us suppose that $\mu$ is prolongable on $a$- The proof is analogous if $\mu$ is prolongable on $b$. If $\mu^\infty(a)$ is ultimately periodic, then $r_{\bwt}(\mu^i(a))$ is $\Theta(1)$. By Propositions \ref{prop:struct_ult_per} and \ref{prop:exponential_growth_run} we have that $r(\mu^i(a))$ is exponential, except the case $\mu\equiv(ab^m,b^n)$. Then $\lim_{i\to\infty}\rho(\mu^i(a))=0$. If $\mu\equiv(ab^m,b^n)$, then $\rho(\mu^i(a))$ is $\Theta(1)$. Let us suppose $\mu^{\infty}(a)$ is aperiodic. If $\mu$ is primitive, from Corollary \ref{cor:primitive_morph} the thesis follows. If $\mu$ is not primitive, then $r_{\bwt}(\mu^i(a))$ is $\mathcal{O}(i)$ (Theorem \ref{prop:upper_bound}). Moreover, by using Lemma \ref{lb}, Propositions \ref{prop:struct_aperiodic_non-growing} and \ref{prop:struct_aperiodic_growing}, we have that $r(\mu^i(a))$ is $\Omega(2^i)$, therefore the thesis follows.\qed
\end{proofofTh}

\end{document}